%% file: recursive_1_.tex
\documentclass[11pt,a4paper,english]{amsart}
\usepackage[utf8]{inputenc}
\usepackage[english]{babel}

\usepackage{amsmath} 
\usepackage{amsthm} 
\usepackage{graphicx} 
\usepackage{xcolor} 
\usepackage{amsfonts}
\usepackage[biblabel]{cite}
\usepackage{bm} 
\usepackage[hidelinks]{hyperref} 
\usepackage{amssymb} 
\usepackage{tikz-cd} 
\usepackage{amscd}
\usepackage{etoolbox}
\patchcmd{\thmhead}{(#3)}{#3}{}{}
\usepackage{enumitem}
\usepackage{breqn} 
\usepackage{faktor} 
\usepackage{xfrac} 
\usepackage{tikz}
\usetikzlibrary{matrix,shapes,arrows,positioning,chains, calc,trees,backgrounds}
\usepackage{comment} 
\usepackage{pdfpages}
\usepackage{xr}

\input{macros.tex}

\usepackage{mathtools} 
\DeclarePairedDelimiter\abs{\lvert}{\rvert}%
\DeclarePairedDelimiter\norm{\lVert}{\rVert}%

\makeatletter
\let\oldabs\abs
\def\abs{\@ifstar{\oldabs}{\oldabs*}}
\let\oldnorm\norm
\def\norm{\@ifstar{\oldnorm}{\oldnorm*}}
\makeatother

\newtheorem{thm}{Theorem}[section]
\newtheorem{prop}[thm]{Proposition}
\newtheorem{cor}[thm]{Corollary}
\newtheorem{lem}[thm]{Lemma}
\theoremstyle{definition}
\newtheorem{defn}[thm]{Definition} 

\newtheorem{rem}[thm]{Remark} 
 
\newtheorem{ex}[thm]{Example}

\title{Structure of weighted projective Reed-Muller codes}

\author[J.~Nardi]{Jade Nardi}
\address[Jade Nardi]{Univ Rennes, CNRS, IRMAR - UMR 6625, Rennes Cedex, France.}
\email{jade.nardi@univ-rennes.fr}

\author[R.~San-José]{Rodrigo San-José}
\address[Rodrigo San-José]{Department of Mathematics\\ Virginia Tech\\ Blacksburg, VA, USA. }
\email{rsanjose@vt.edu}

\subjclass[2020]{94B05, 11T71, 14G50}

\keywords{Weighted projective Reed-Muller codes, recursive construction, generalized Hamming weights, dual code, hull}

\begin{document}

\maketitle

\begin{abstract}
We provide a comprehensive overview of the fundamental structural properties of weighted projective Reed-Muller codes. We give a recursive construction for these codes, under some conditions for the weights, and we use it to derive bounds on the generalized Hamming weights and to obtain a recursive construction for their subfield subcodes and their dual codes. The dual codes are further studied in more generality, where the recursive constructions may not apply, obtaining a description as an evaluation code when the degree is low. We also provide insights into the Schur products of these codes when they are not degenerate.
\end{abstract}

\section{Introduction}
Projective Reed-Muller (PRM) codes were introduced by Lachaud \cite{lachaudPRM}, and their basic parameters were fully determined by S{\o}rensen \cite{sorensen}. They are the projective analogues of affine Reed-Muller (RM) codes \cite{kasamiRM}. With respect to RM codes, PRM codes are longer for the same finite field size, and they have been shown to outperform RM codes when considering the sum of the rate and relative minimum distance \cite{lachaud_parameters_PRM}. A natural generalization of RM codes is given by weighted Reed-Muller (WRM) codes, which are obtained by evaluating polynomials of bounded weighted degree at the rational points of the affine space. These codes were introduced and studied in \cite{sorensenWRM} (also see \cite{olav_WRM_revisited}), where a projective analogue is also mentioned. However, there is a more natural projective extension of WRM codes introduced in \cite{ghorpadeWPRM}, which is the one we will consider in this work. These codes are called \emph{weighted projective Reed-Muller} (WPRM) codes, and they are obtained by evaluating weighted homogeneous polynomials of a fixed degree at the rational points of a weighted projective space. 

The generalized Hamming weights (GHWs) of a linear code form a set of parameters extending the notion of minimum distance. They were introduced by Wei \cite{weiGHW}, which showed that they characterize the performance of a code on the wiretap channel of type II. GHWs have also found other applications over time \cite{guruswammiGHWlistdecoding,matsumotoRGHW,kkks}. For evaluation codes, they admit an interpretation as the maximum number of $\fq$-rational zeros that a system of polynomial equations can have, which is a natural question by itself, and has motivated the study of the GHWs of many different families of evaluation codes \cite{sanjoseGHWNT,pellikaanGHWRM,beelenGHWcartesian,munueraGHWGoppa,munueraGHWhermitica,sanjose_ghw_squarefree}. In particular, Heijnen and Pellikaan computed the GHWs of RM codes \cite{pellikaanGHWRM}, and also mentioned the case of WRM codes. However, the problem of computing the GHWs of PRM codes has been open for more than 25 years. In the introduction of \cite{beelenGHWPRM}, the authors mention many of the different works on the GHWs of PRM codes, and they also propose a conjecture when the degree is lower than the size of the field. For WPRM codes, this problem has not been previously addressed in the literature, and it is worth noting that even the calculation of the minimum distance has proven to be challenging, e.g., see \cite{aubryperretWPRM,nardi_sanjose_conjecture,caki_sahin_nardi25}. 

The study of the duals of evaluation codes is also a classical topic, as they play a crucial role in many different applications, such as decoding algorithms \cite{sakataplusvoting,duursmaMajority,fengraoMajority}, or quantum error-correction \cite{kkks}. The hull of linear codes has also received recent attention due to its use for entanglement-assisted quantum error-correcting codes \cite{galindoentanglement}. The dual of PRM codes was already described in \cite{sorensen}, and their hulls have been studied for some cases \cite{kaplanHullsPRM,sanjoseHullsPRM}. Similarly, the Schur product has found many applications in cryptography \cite{AlainSchurGRSMcEliece1}, multiparty computation \cite{cramer_multiparty_computation}, and quantum fault tolerance \cite{sanjoseCSST}.

In this paper, we study several of the aforementioned properties of linear codes, for WPRM codes. In Section \ref{s:preliminaries} we introduce the necessary preliminaries about weighted projective spaces, their rational points (e.g., see Lemma \ref{l:lambdaQ}), and WPRM codes. In Section \ref{s:recursive}, we introduce a recursive construction for WPRM codes, and we use it to bound their GHWs, and to describe their subfield subcodes and duals. 
In Section \ref{s:duals}, we study the duals from the point of view of evaluation codes, and, for low degree, we provide a generating family formed by the evaluation of a certain set of monomials. Finally, in Section \ref{s:schur}, we leverage the toric geometry of weighted projective spaces to relate the problem of computing the Schur product of two WPRM codes with a question regarding the integer decomposition property of certain simplices.

\section{Preliminaries}\label{s:preliminaries}

\subsection{Linear codes}

Let $\fq$ be the finite field with $q$ elements, where $q$ is a prime power. A \emph{linear code} over $\fq$ is an $\fq$-linear subspace $C\subset \fq^n$. The \emph{dual code} of a linear code $C$, denoted by $C^\perp$, is the orthogonal complement with respect to the usual Euclidean inner product $\ps{\cdot}{\cdot}$, i.e., 
$$
C^\perp :=\{ v\in \fq^n:\ps{c}{v}=0,\; \text{ for all }c\in C\}.
$$
Given two vectors $u,v\in \fq^n$, we denote by $u\star v:=(u_1v_1,\dots,u_nv_n)$ their component-wise product. This is also called sometimes \textit{Schur product} or \textit{star product}. Given two codes $C_1,C_2\subset \fq^n$, we can consider their Schur product $C_1\star C_2:=\langle v_1\star v_2,\; v_1\in C_1,v_2\in C_2 \rangle$. 
We say that two codes $C_1,C_2$ are \textit{monomially equivalent} if there exist $v\in \fq^n$ with nonzero entries, and $\sigma\in S_n$ a permutation, such that $C_2=v \star \sigma(C_1)=\set{v \star \sigma(c_1), \: c_1 \in C_1}$. 

Given a vector $v\in \fq^n$, its \textit{Hamming weight} is the number of nonzero entries of $v$. The \textit{minimum distance} of a linear code is defined as the lowest Hamming weight of a nonzero codeword in $C$. To define GHWs, which were introduced in \cite{weiGHW}, we need the notion of the \emph{support} of a linear subspace $D\subset \fq^n$, which is 
$$
\supp(D):=\left\{ 1\leq i \leq n: \exists \: c \in D \textnormal{ with } c_i \neq 0 \right\}.
$$

\begin{defn} \label{def:ghw} Let $1\leq r \leq k$.
The \emph{$r$-th generalized Hamming weight} (GHW) of an $[n,k,d]$ code $C$ is 
$$
d_r(C):=\min \left\{ \abs{ \supp(D) }: D \textnormal{ is a subcode of } C \textnormal{ of dimension } r \right\}.
$$
The \emph{weight hierarchy} of $C$ is the set $\left\{ d_r(C): 1\leq r \leq k \right\}$.
\end{defn}

\begin{rem}\label{r:ghw_mds}
If $C$ is an $[n,k,d]$ MDS code, i.e., $d=n-k+1$, then we have
$$
d_r(C)=n-k+r, \; 1\leq r \leq k.
$$
This follows from the strict monotonicity of the GHWs \cite[Thm. 1]{weiGHW}.
\end{rem}

Since the computation of the minimum distance is an intractable problem in general \cite{vardyIntractability},  the same holds for the computation of the GHWs of a linear code.

\subsection{Weighted projective spaces}
Let $w=(w_0,w_1,\dots,w_m) \in \NN_{\geq 1}^{m+1}$. The \emph{weighted projective space} (WPS) of weight $w$, denoted by $\PP(w)$, over the field $\F$, is defined as the quotient
\[ \PP(w)=(\A^{m+1}\setminus \{(0,\dots,0)\})/\overline{\F}^*\] under the following action of $\overline{\F}^*$:
$\lambda \cdot (x_0,\dots,x_m)= (\lambda^{w_0} x_0,\dots,\lambda^{w_m}x_m)$ for $\lambda \in \overline{\F}^*$. In the particular case of $w=(1,\dots,1)$, we recover the usual projective space $\PP^m$.  We denote the set of $\fq$-rational points of $\PP(w)$ by $\PP(w)(\fq)$, whose cardinality equals 
\[
\abs{\PP(w)(\fq)}=\frac{q^{m+1}-1}{q-1}=:p_m.
\]
Given an integer $d\geq 0$, we consider $\fq[x_0,\dots,x_m]_d^w$, the vector space of (weighted) homogeneous polynomials of degree $d$, with weight $w$ and coefficients in $\fq$.

\begin{defn}
    Let $w \in \NN_{\geq 1}^{m+1}$. We denote by $\gen{w_0,w_1,\dots,w_m}_\NN$ (or $\gen{w}_\NN$ for short) the semigroup of integers $m$ that can be written as a linear combination of the integers $w_0,w_1,\dots,w_m$ with nonnegative integer coefficients. 
\end{defn}

\begin{defn}\label{def:denum}
    Let $w\in \NN_{\geq 1}^{m+1}$. For any $d \in \NN$, we define the \emph{denumerant} of $d$ with respect to $w$ as
    \[\den(d;w)=\size{\set{(i_0,\dots,i_{m}) \in \NN^{m+1} \text{ such that } w_0 i_0 + \dots+ w_mi_m=d}}.\]
\end{defn}
By definition, $\den(d;w) \geq 1$ if and only if $d \in \gen{w}_\NN$, and $\den(d;w)=\dim \fq[x_0,\dots,x_m]_d^w$. We now give two well-known reductions for the weights of $\PP(w)$.

\begin{lem}\label{l:weightswithgcd}
Let $w= (w_0,\dots,w_m)$ and let $\gamma=\gcd(w_0,\dots,w_m)$. Set $w/\gamma=(w_0/\gamma,\dots,w_m/\gamma)$. Then we have $\PP(w)(\fq)=\PP(w/\gamma)(\fq)$ and  
\[
\fq[x_0,\dots,x_m]_{d}^{w}=\begin{cases} \fq[x_0,\dots,x_m]_{d/\gamma}^{w/\gamma} & \text{if } \gamma \mid d,\\
\{0\} & \text{otherwise.}
\end{cases}
\]
\end{lem}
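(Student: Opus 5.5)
The plan is to verify both claims directly from the definitions, treating the polynomial statement first since it is purely combinatorial. Write $w_i=\gamma w_i'$ with $w'=w/\gamma$. A monomial $x_0^{i_0}\cdots x_m^{i_m}$ has $w$-degree $\sum_j w_j i_j=\gamma\sum_j w_j' i_j$, so its $w$-degree is always a multiple of $\gamma$. Hence if $\gamma\nmid d$, no monomial has $w$-degree $d$, and $\fq[x_0,\dots,x_m]_d^w=\{0\}$. If $\gamma\mid d$, then a monomial has $w$-degree $d$ exactly when it has $w'$-degree $d/\gamma$. Since both spaces are spanned by the monomials of the corresponding degree, they coincide as subspaces of $\fq[x_0,\dots,x_m]$. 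Equivalently, $\den(d;w)=\den(d/\gamma;w')$ when $\gamma\mid d$, and it is $0$ otherwise.

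For the points, I would show that the two actions of $\overline{\F}^*$ on $\A^{m+1}\setminus\{0\}$ have the same orbits; the $\fq$-rational points then agree as well. The $w$-action of $\lambda$ equals the $w'$-action of $\lambda^\gamma$, because $\lambda^{w_i}=(\lambda^\gamma)^{w_i'}$. Since $\overline{\F}$ is algebraically closed, the map $\lambda\mapsto\lambda^\gamma$ is surjective on $\overline{\F}^*$, so the set of transformations induced by the $w$-action equals the set induced by the $w'$-action. Therefore the orbits, and hence the quotients $\PP(w)$ and $\PP(w')$, are the same sets.

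The only step needing care is the meaning of $\PP(w)(\fq)$. I would take it to be the set of orbits that are fixed by Frobenius, or equivalently the orbits containing a representative in $\fq^{m+1}$. Both descriptions are defined purely in terms of the orbits, so the equality of orbits from the previous step gives $\PP(w)(\fq)=\PP(w')(\fq)$. This is also consistent with the stated count $p_m$, which does not depend on $w$. I do not expect a real obstacle here. The one subtlety is that the $\gamma$-th power map fails to be surjective on $\fq^*$, which is exactly why one should work over $\overline{\F}$ rather than try to compare the rational orbits through the $\fq^*$-actions.
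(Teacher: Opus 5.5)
Your proof is correct and complete. The paper gives no proof of this lemma, presenting it as a well-known reduction, and your two observations are exactly the standard justification: every monomial has $w$-degree equal to $\gamma$ times its $w/\gamma$-degree, and the $w$-action of $\lambda$ equals the $w/\gamma$-action of $\lambda^\gamma$, with $\lambda\mapsto\lambda^\gamma$ surjective on $\overline{\F}^*$; the argument works for either definition of the rational points. One minor wording fix: the $\gamma$-th power map on $\fq^*$ is non-surjective only when $\gcd(\gamma,q-1)>1$, so ``fails'' should read ``can fail'', which does not affect the argument.
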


Due to Lemma \ref{l:weightswithgcd}, we will always assume that $\gcd(w)=1$.

\begin{defn}
A vector of weights $w= (w_0,\dots,w_m)$ is said to be \emph{well-formed} if for every $i \in \set{0,\dots,m}$, $\gcd(w_j, j\neq i)=1$.
\end{defn}

When $w$ is not well-formed, Delorme's reduction \cite{delormeReduction} applies to the WPS $\PP(w)$ and its coordinate ring.

\begin{lem}[(Delorme's weight reduction)]\label{l:delorme}
Let $w= (w_0,\dots,w_m)$. Set $\gamma = \gcd(w_1,\dots,w_m)$. Assume that $\gcd(w_0,\gamma)=1$. The isomorphism 
$$
\begin{array}{lccc}
\varphi: &\PP(w) & \to & \PP(w_0,w_1/\gamma,\dots,w_m/\gamma) \\
& (Q_0:Q_1:\cdots:Q_m) & \mapsto & (Q_0^\gamma:Q_1:\cdots:Q_m).
\end{array}
$$
satisfies
\[
\varphi(\PP(w)(\fq))=\PP(w_0,w_1/\gamma,\dots,w_m/\gamma)(\fq).
\]
Moreover, for any degree $d \geq 0$, we can uniquely write $d=\alpha_0 w_0 + d_0 \gamma$ with $0 \leq \alpha_0 < \gamma$ and
\[
\fq[x_0,\dots,x_m]^w_d=x_0^{\alpha_0} \varphi^* \fq[x_0,\dots,x_m]^{(w_0,w_1/\gamma,\dots,w_m/\gamma)}_{d_0}.
\]
\end{lem}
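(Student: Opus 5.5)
We have to prove three things about the map $\varphi$: it is a well-defined bijection, it sends $\fq$-rational points exactly onto $\fq$-rational points, and it induces the stated decomposition of the graded pieces. Since $\gcd(w_0,\gamma)=1$, we fix integers $a,b$ with $aw_0+b\gamma=1$; these are used throughout. Write $w'=(w_0,w_1/\gamma,\dots,w_m/\gamma)$.

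\textbf{Step 1: $\varphi$ is well defined.} Take $\lambda\in\overline{\F}^*$. The point $\lambda\cdot Q$ has coordinates $(\lambda^{w_0}Q_0,\lambda^{w_1}Q_1,\dots)$, and it is sent to $(\lambda^{\gamma w_0}Q_0^\gamma,\lambda^{w_1}Q_1,\dots)$. This equals $\mu\cdot(Q_0^\gamma:Q_1:\cdots)$ in $\PP(w')$ with $\mu=\lambda^\gamma$. Also $\varphi(Q)=0$ forces $Q=0$, so $\varphi$ is well defined.

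\textbf{Step 2: $\varphi$ is a bijection.} For surjectivity, given $(P_0:\dots:P_m)$, pick a $\gamma$-th root $Q_0$ of $P_0$ in $\overline{\F}$ and set $Q_i=P_i$ for $i\ge1$. For injectivity, suppose $\varphi(Q)=\varphi(Q')$, so $Q_0'^\gamma=\mu^{w_0}Q_0^\gamma$ and $Q_i'=\mu^{w_i/\gamma}Q_i$ for some $\mu$. Choose $\lambda$ with $\lambda^\gamma=\mu$. Then $Q_i'=\lambda^{w_i}Q_i$ for $i\ge1$ and $Q_0'=\zeta\lambda^{w_0}Q_0$ for some $\gamma$-th root of unity $\zeta$. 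Since $\gcd(w_0,\gamma)=1$, we can write $\zeta=\eta^{w_0}$ with $\eta^\gamma=1$. Replacing $\lambda$ by $\eta\lambda$ leaves the coordinates $Q_i$, $i\ge1$, unchanged, because $\gamma\mid w_i$, and it absorbs $\zeta$. Hence $Q'=(\eta\lambda)\cdot Q$.

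\textbf{Step 3: rational points.} This is the delicate step, because a point is rational when it has some representative with coordinates in $\fq$, and that representative may not be the obvious one. A rational $Q$ clearly maps to a rational point. Conversely, suppose $P=\varphi(Q)$ is rational; by the surjectivity construction of Step 2 we may take $P$ itself to have coordinates in $\fq$. We want a scaling $\mu$ such that $\mu^{w_0}P_0$ is a $\gamma$-th power in $\fq$. Take $\mu=P_0^{b}$. Then
\[
\mu^{w_0}P_0=P_0^{1+bw_0}=P_0^{\gamma b'}
\]
provided we choose the Bézout pair so that $1+bw_0\equiv 0 \pmod{\gamma}$. This is possible since $w_0$ is invertible modulo $\gamma$. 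If $P_0=0$ there is nothing to do. The remaining coordinates $\mu^{w_i/\gamma}P_i$ stay in $\fq$. So $P$ has a representative whose first coordinate is a $\gamma$-th power in $\fq$, and Step 2 produces a rational preimage. Because $\varphi$ is bijective, the image of the rational points is exactly the set of rational points.

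\textbf{Step 4: polynomials.} Consider a monomial $x_0^{i_0}\cdots x_m^{i_m}$ of $w$-degree $d$. Reducing modulo $\gamma$ gives $i_0w_0\equiv d \pmod{\gamma}$. Since $w_0$ is invertible modulo $\gamma$, this forces $i_0\equiv\alpha_0 \pmod{\gamma}$. Write $i_0=\alpha_0+\gamma j$. Then the monomial equals $x_0^{\alpha_0}\varphi^*(x_0^{j}x_1^{i_1}\cdots)$, where $\varphi^*$ replaces $x_0$ by $x_0^\gamma$. The $w'$-degree of $x_0^{j}x_1^{i_1}\cdots$ is $(d-\alpha_0w_0)/\gamma=d_0$. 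The correspondence is reversible, which gives the equality of the two spaces. Uniqueness of $\alpha_0$ and $d_0$ comes from the invertibility of $w_0$ modulo $\gamma$. If $d-\alpha_0w_0<0$, both sides are $0$.
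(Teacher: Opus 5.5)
The paper gives no proof of this lemma. It quotes it from Delorme \cite{delormeReduction} as a known reduction, so there is no in-paper argument to match. Your proposal is a correct self-contained proof of both displayed claims.

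\textbf{Comparison with the classical route.} The classical proof goes through the Veronese subring. Your Step 4 with $\alpha_0=0$ shows that the part of $\fq[x_0,\dots,x_m]^w$ in degrees divisible by $\gamma$ is $\fq[x_0^\gamma,x_1,\dots,x_m]$, regraded with weights $(w_0,w_1/\gamma,\dots,w_m/\gamma)$. Since $\operatorname{Proj}$ is unchanged under passing to a Veronese subring, this shows that $\varphi$ is an isomorphism of varieties.

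Your Steps 1--2 only give a bijection on $\overline{\fq}$-points. That suffices here, because the lemma takes the isomorphism as given and asserts only the two properties.

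What your elementary Step 3 buys is an explicit $\fq$-representative of the preimage, in the spirit of Lemma \ref{l:lambdaQ}. Alternatively, if $\PP(w)(\fq)$ is read as the set of Frobenius-fixed points (the reading under which \cite[Prop.~2.1]{aubryperretWPRM} is needed), Step 3 collapses to one line. Indeed, $\varphi$ commutes with $(Q_i)\mapsto(Q_i^q)$ and is bijective, so $Q$ is fixed if and only if $\varphi(Q)$ is.

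\textbf{Small points.}
\begin{enumerate}
\item In Step 3 you need not re-choose the Bézout pair. With your fixed $aw_0+b\gamma=1$, the choice $\mu=P_0^{-a}$ gives $\mu^{w_0}P_0=(P_0^{b})^\gamma$.
\item The reason you may take $P$ with coordinates in $\fq$ is the definition of rational point (or \cite[Prop.~2.1]{aubryperretWPRM}), not the surjectivity construction of Step 2.
\item Bijectivity is not needed for the final equality of sets in Step 3, since you prove both inclusions directly.
\item In Step 4, say explicitly that $i_0\ge 0$, $i_0\equiv\alpha_0 \pmod{\gamma}$ and $0\le\alpha_0<\gamma$ force $j\ge 0$, so the cofactor is a genuine monomial.
\end{enumerate}
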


Consider the following map 
\begin{equation}\label{eq:map_pi}
\begin{array}{lccc}
\pi_w: &\A^{m+1}\setminus \{(0,\dots,0)\} & \to & \PP(w) \\
& (Q_0,\dots,Q_m) & \mapsto & [Q_0:\dots:Q_m].
\end{array}
\end{equation}
By \cite[Prop. 2.1]{aubryperretWPRM}, every $\fq$-point of $\PP(w)$  has a representative in $\mathbb{A}^{m+1}(\fq)\setminus \{0\}$. The next result shows how to obtain all the representatives of a rational point with entries in $\fq$, starting from one such representative. Given a point $Q=(Q_0,\dots,Q_m)$, we set $\supp(Q):=\{i \in \set{0,\dots,m} \mid Q_i\neq 0\}$. 

\begin{lem}\label{l:lambdaQ}
Let $w=(w_0,\dots,w_m) \in \NN^{m+1}$. Let $Q=Q^{(1)}=(Q_0,\dots,Q_m)$ be a representative for a rational point in $\PP(w)(\fq)$, and denote by $Q^{(2)},\dots, Q^{(q-1)}$ its other $q-2$ representatives (there are $q-1$ in total). Let $\xi$ be a primitive element of $\fq$, and consider $\lambda \in \overline{\fq}$ a root of $x^{\gcd(w_i:i\in \supp(Q))}-\xi$. 
Then
\begin{equation}\label{eq:severalRep}
\{ \lambda^i\cdot Q^{(1)},\;0\leq i \leq q-2\}= \{Q^{(1)},\dots, Q^{(q-1)}\}.
\end{equation}
\end{lem}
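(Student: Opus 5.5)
Let $g=\gcd(w_i : i\in\supp(Q))$ and $S=\supp(Q)$. The approach has three parts: show each $\lambda^i\cdot Q$ is an $\fq$-rational representative of the same point; show the $q-1$ vectors for $0\le i\le q-2$ are pairwise distinct; and show there are exactly $q-1$ representatives with entries in $\fq$. Then \eqref{eq:severalRep} is an equality of two sets of size $q-1$ with one containing the other.

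\textbf{Step 1 (rationality).} For $j\in S$ write $w_j=g\,k_j$. Then $\lambda^{w_j}=(\lambda^g)^{k_j}=\xi^{k_j}\in\fq$. For $j\notin S$ the $j$-th coordinate is $0$ whatever $\lambda^{w_j}$ is. Hence $\lambda^i\cdot Q=(\xi^{ik_j}Q_j)_j$ has all entries in $\fq$, and by definition of the action it represents the same point of $\PP(w)$.

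\textbf{Step 2 (distinctness).} Suppose $\lambda^i\cdot Q=\lambda^{i'}\cdot Q$ with $0\le i,i'\le q-2$. Then $\xi^{(i-i')k_j}=1$ for every $j\in S$, so $(q-1)\mid (i-i')k_j$ for all $j\in S$. Since $\gcd(k_j: j\in S)=1$, a B\'ezout combination gives $(q-1)\mid (i-i')$, hence $i=i'$.

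\textbf{Step 3 (counting).} Let $Q'\in\fq^{m+1}$ with $Q'=\mu\cdot Q$ for some $\mu\in\overline{\fq}^*$. Then $\mu^{w_j}=Q'_j/Q_j\in\fq^*$ for $j\in S$, and $Q'_j=0$ off $S$. Using a B\'ezout relation $g=\sum_{j\in S} a_j w_j$, we get $\nu:=\mu^g=\prod_{j\in S}(\mu^{w_j})^{a_j}\in\fq^*$, so $\nu=\xi^i$ for some $0\le i\le q-2$. Then $(\mu\lambda^{-i})^g=1$, i.e.\ $\zeta:=\mu\lambda^{-i}$ is a $g$-th root of unity. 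Since $g\mid w_j$ for every $j\in S$, we have $\zeta^{w_j}=1$, so $\zeta$ acts trivially on the coordinates in $S$. The coordinates outside $S$ are zero on both sides, so $Q'=\lambda^i\cdot Q$. Thus every $\fq$-rational representative lies in $\{\lambda^i\cdot Q\}$, and by Step 2 there are exactly $q-1$ of them.

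The main obstacle is Step 3: one has to pass from the individual $\mu^{w_j}$ being in $\fq$ to $\mu^g\in\fq$ via B\'ezout, and then absorb the leftover $g$-th root of unity, which acts trivially on $S$ precisely because $g$ is the gcd over $S$ only.
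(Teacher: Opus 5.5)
Your proof is correct. Steps 1 and 2 coincide with the paper's argument. The paper also checks that $\lambda^{iw_j}$ is a $(q-1)$-th root of unity. It proves distinctness by applying B\'ezout to the $w_\ell$, $\ell\in\supp(Q)$, turning $\lambda^{(j-i)w_\ell}=1$ into $\lambda^{(j-i)g}=\xi^{j-i}=1$; this is the same computation as your B\'ezout on the $k_j=w_j/g$.

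The genuine difference is your Step 3. The paper never shows that an arbitrary $\fq$-rational representative $\mu\cdot Q$ is one of the $\lambda^i\cdot Q$. It simply uses the count ``$q-1$ in total'' built into the statement. So its proof by itself only shows that $\{\lambda^i\cdot Q\}$ consists of $q-1$ distinct rational representatives. Your argument (B\'ezout to get $\mu^g\in\fq^*$, then absorbing the $g$-th root of unity $\mu\lambda^{-i}$, which acts trivially on the coordinates in $\supp(Q)$) makes the lemma self-contained. It is also exactly what justifies the remark after the lemma. There the authors say the proof gives exactly $q-1$ representatives per point, and hence, with \cite[Prop. 2.1]{aubryperretWPRM}, a proof of $\abs{\PP(w)(\fq)}=p_m$ that avoids Hilbert's Theorem 90. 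Without your Step 3, that route only yields $\abs{\PP(w)(\fq)}\leq p_m$.

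A smaller point in your favour: the paper asserts $\lambda^{iw_j}\in\fq^*$ for every $j$, which can fail off the support. For instance, with $q=3$, $w=(2,3)$, $Q=(1,0)$ one has $\lambda^3=-\lambda\notin\F_3$. You correctly restrict the rationality claim to $j\in S$, the remaining coordinates being zero anyway.
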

\begin{proof}
First note that $\supp(Q)=\supp(Q^{(i)})$ for any $i$, which means that $\gcd(w_i:i\in \supp(Q))$ depends only on the rational point. We have $(\lambda^{iw_0},\dots,\lambda^{iw_m})\in (\fq^*)^{m+1}$, since, for any $j$, there is some $t$ such that $\lambda^{iw_j(q-1)}=\xi^{t i(q-1)}=1$. Moreover, $\lambda^i\cdot Q^{(1)}=\lambda^j\cdot Q^{(1)}$ with $i<j$ implies $\lambda^{(j-i)w_\ell}=1$, for any $\ell \in \supp(Q^{(1)})$. If we consider Bezout's identity $\sum_{\ell\in \supp(Q^{(1)})} u_\ell w_\ell=\gcd(w_i:i\in \supp(Q))$, then
$$
1=\prod_{\ell\in \supp(Q^{(1)})} \lambda^{(j-i)u_\ell w_\ell}=\lambda^{(j-i)\gcd(w_i:i\in \supp(Q))}=\xi^{j-i},
$$
a contradiction since $j-i<q-1$ and $\ord(\xi)=q-1$. 

\end{proof}

\begin{rem}
Note that, in practice, we can construct the representatives $\{Q^{(1)},\dots,Q^{(q-1)}\}$ from the previous result without considering any field extension, since we have that $\lambda^{w_i}=\xi^{w_i/\gcd(w_i:i\in \supp(Q))} \in \fq$.
\end{rem}

Lemma \ref{l:lambdaQ} can be considered as a refinement of \cite[Lemma 7]{perretNumberPointsWeightedProjectiveSpace}, since it gives a constructive way to obtain the representatives. This result, together with \cite[Prop. 2.1]{aubryperretWPRM}, provides a direct proof of the fact that $\abs{\PP(w)(\fq)}=p_m$. Indeed, from the proof of Lemma \ref{l:lambdaQ} we have that any $\fq$-rational point $[Q]\in \PP(w)(\fq)$ has exactly $q-1$ $\fq$-representatives, i.e., $\abs{\pi_w^{-1}([Q])}=q-1$ (recall Equation~\eqref{eq:map_pi}). Since these preimages are disjoint, it follows that $\abs{\PP(w)(\fq)}=\frac{q^{m+1}-1}{q-1}=p_m$. The usual way to prove this relies on Hilbert's Theorem 90. This alternative approach uses \cite[Prop. 2.1]{aubryperretWPRM} instead, and it is constructive: given an $\fq$-point, the proof of \cite[Prop. 2.1]{aubryperretWPRM} shows how to obtain one representative with coordinates in $\fq$, and Lemma \ref{l:lambdaQ} gives a way to get all the other representatives. 

\begin{rem}\label{r:lambda_in_fq}
For any representative $Q^{(1)}$ such that $\gcd(w_i:i\in \supp(Q^{(1)}))=1$, we have just shown that we may consider $\lambda \in \fq^*$ in Lemma \ref{l:lambdaQ}. Indeed, if $\lambda^{\gcd(w_i:i\in \supp(Q^{(1)}))}=\xi$, let $\nu $ be such that $\nu \gcd(w_i:i\in \supp(Q^{(1)}))\equiv 1\bmod q-1$, and then we may choose $\lambda=\xi^\nu$ (see also the proof of \cite[Lem. 7]{perretNumberPointsWeightedProjectiveSpace}). In particular, if $\gcd(w_i,q-1)=1$, for $0\leq i \leq m$, we may choose $\lambda\in \fq^*$ for any point of $\PP(w)(\fq)$.     
\end{rem}
As the next example shows, some cases require $\lambda\in \overline{\fq}\setminus \fq^*$. 
\begin{ex}
Let $q=3$, and $w=(2,3)$. If we consider $Q^{(1)}=(1,0)$, then, according to Lemma \ref{l:lambdaQ}, we need to consider a root of $x^2-(-1)=x^2+1$. Since there is no root for that polynomial in $\F_3$, we deduce that $\lambda \in \overline{\F_3}\setminus \F_3$. We have $\lambda^2=-1$ and $\lambda\cdot Q^{(1)}=(-1,0)=Q^{(2)}$. 
\end{ex}

\subsection{Weighted projective Reed-Muller codes}

Fixing $\mathcal{P}_w=(P_1,\dots,P_{p_m})$ a set of representatives for $\PP(w)(\fq)$, we can define an evaluation map
\begin{equation}\label{eq:ev_Pm}
\begin{array}{lccc}
\ev_{\mathcal{P}_w}: & \fq[x_0,\dots,x_m]_d^w & \to & \fq^n,\\
& f & \mapsto & (f(P_1),\dots,f(P_{p_m})).
\end{array}
\end{equation}
Note that $\ev_{\mathcal{P}_w}$ depends on $d$, but we will not make this dependence explicit for ease of notation.

\begin{defn}
The \textit{weighted projective Reed-Muller} code is the linear code $\WPRM_d(w):=\ev_{\mathcal{P}_w}(\fq[x_0,\dots,x_m]_d^w)$. If $w=(1,\dots,1)$, we recover \textit{projective Reed-Muller} codes, which are denoted $\PRM_d(m)$. 
\end{defn}
The previous definition depends on the choice of representatives $\mathcal{P}_w$, but different choices give monomially equivalent codes, as the next result shows.

\begin{lem}\label{l:change_of_representatives}
Let $d>0$ and consider $\mathcal{P}_w,\mathcal{P}'_w$ two sets of representatives for the points of $\PP(w)(\fq)$ such that for every $Q\in \mathcal{P}_w$, we have $\lambda_Q\cdot Q\in \mathcal{P}'_w$ (as in Lemma \ref{l:lambdaQ}). Assuming the same order for the points, we have
$$
\ev_{\mathcal{P}'_w}(\fq[x_0,\dots,x_m]^w_d)=(\lambda_Q^{d})_{Q\in \calP_w}\star\ev_{\mathcal{P}_w}(\fq[x_0,\dots,x_m]^w_d).
$$
\end{lem}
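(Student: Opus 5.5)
The plan is to verify the identity coordinate by coordinate and then use linearity of the evaluation map. Fix $Q\in\calP_w$; its counterpart in $\calP'_w$ is $\lambda_Q\cdot Q=(\lambda_Q^{w_0}Q_0,\dots,\lambda_Q^{w_m}Q_m)$, where $\lambda_Q$ is a suitable power of a root of $x^{\gcd(w_i:i\in\supp(Q))}-\xi$, as in Lemma \ref{l:lambdaQ}. Lemma \ref{l:lambdaQ} guarantees that such a $\lambda_Q$ exists, since $\lambda_Q\cdot Q$ is another $\fq$-representative of the same point. The key observation is that every $f\in\fq[x_0,\dots,x_m]^w_d$ is weighted homogeneous of degree $d$. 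Hence for each monomial $x^\alpha$ with $\sum_i w_i\alpha_i=d$ we have
\[
(\lambda_Q\cdot Q)^\alpha=\prod_i \lambda_Q^{w_i\alpha_i}Q_i^{\alpha_i}=\lambda_Q^{d}\,Q^\alpha .
\]
By linearity this gives $f(\lambda_Q\cdot Q)=\lambda_Q^d f(Q)$.

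Collecting these identities over all points in the common order gives $\ev_{\calP'_w}(f)=(\lambda_Q^d)_{Q\in\calP_w}\star\ev_{\calP_w}(f)$ for every $f$. Taking images over all of $\fq[x_0,\dots,x_m]^w_d$ then yields the stated equality of codes.

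The one point needing care is that $\lambda_Q$ may lie in $\overline{\fq}\setminus\fq$, as in the example with $w=(2,3)$ over $\F_3$. The vector $(\lambda_Q^d)$ must nonetheless have entries in $\fq^*$, both so that the statement makes sense as monomial equivalence over $\fq$ and so that the identity is a genuine statement about $\fq^n$. I will handle this as follows. If $f(Q)\neq0$ for some $f$ of degree $d$, then $\lambda_Q^d=f(\lambda_Q\cdot Q)/f(Q)\in\fq^*$, because both values lie in $\fq$.

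In general, I argue directly. Let $g=\gcd(w_i:i\in\supp(Q))$. The only monomials that can be nonzero at $Q$ involve variables indexed by $\supp(Q)$, so their degrees are multiples of $g$. Thus, whenever $g\nmid d$, every $f$ vanishes at both $Q$ and $\lambda_Q\cdot Q$, and the coordinate of the scaling vector is irrelevant. When $g\mid d$, we have $\lambda_Q^d=(\lambda_Q^{g})^{d/g}$, which is a power of $\xi$ and hence lies in $\fq^*$.

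The main obstacle is therefore only this bookkeeping about $\lambda_Q^d$ possibly lying outside $\fq$. In the degenerate case one can replace that coordinate by $1$ without changing the statement, since every codeword vanishes in that position anyway.
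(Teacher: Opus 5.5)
Your proof is correct and is essentially the paper's argument: the paper's one-line proof is exactly the homogeneity identity $g(\lambda_Q\cdot Q)=\lambda_Q^{d}g(Q)$ for $g$ of weighted degree $d$, applied coordinatewise. Your additional bookkeeping is a valid refinement that the paper leaves implicit, and it is what justifies reading the lemma as a monomial equivalence over $\fq$. You show that $\lambda_Q^d\in\fq^*$ when $\gcd(w_i:i\in\supp(Q))\mid d$, and that the coordinate is identically zero in every codeword otherwise.
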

\begin{proof}
It follows from the fact that, for $g\in \fq[x_0,\dots,x_m]_d^w$, we have $g( \lambda_Q\cdot Q)=\lambda_Q^{d}g(Q).$
\end{proof}
\begin{cor}
Let $d>0$ such that $\gcd(d,q-1)=1$. Assume $\gcd(w_i,q-1)=1$, for $0\leq i \leq m$. Then every code that is monomially equivalent to $\WPRM_d(w)$ can be seen as a WPRM code of degree $d$, evaluating at a different set of representatives. 
\end{cor}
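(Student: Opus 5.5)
Fix a set of representatives $\mathcal{P}_w=(P_1,\dots,P_{p_m})$ and a code $C=v\star\sigma(\WPRM_d(w))$ with $v\in(\fq^*)^{p_m}$ and $\sigma\in S_{p_m}$. The goal is to exhibit a set of representatives $\mathcal{P}'_w$ with $C=\ev_{\mathcal{P}'_w}(\fq[x_0,\dots,x_m]^w_d)$. The plan has two steps: absorb the permutation into a reordering of the points, then absorb the scaling vector into a change of representatives via Lemma~\ref{l:change_of_representatives}.

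For the permutation, set $\mathcal{P}^\sigma_w=(P_{\sigma^{-1}(1)},\dots,P_{\sigma^{-1}(p_m)})$, using the convention $\sigma(c)_i=c_{\sigma^{-1}(i)}$. Then $\sigma(\ev_{\mathcal{P}_w}(f))=\ev_{\mathcal{P}^\sigma_w}(f)$ for every $f$, so $\sigma(\WPRM_d(w))$ is the WPRM code for the ordered set $\mathcal{P}^\sigma_w$. We may therefore assume $\sigma=\mathrm{id}$ and rename the points so that $C=v\star\ev_{\mathcal{P}_w}(\fq[x_0,\dots,x_m]^w_d)$.

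For the scaling, the aim is, for each $Q\in\mathcal{P}_w$, to find a scalar $\lambda_Q$ such that $\lambda_Q\cdot Q$ is again an $\fq$-representative of $[Q]$ and $\lambda_Q^d=v_Q$. Lemma~\ref{l:change_of_representatives} then gives the result with $\mathcal{P}'_w=(\lambda_Q\cdot Q)_Q$. By Remark~\ref{r:lambda_in_fq}, since $\gcd(w_i,q-1)=1$ for every $i$, we have $\gcd(w_i:i\in\supp(Q),q-1)=1$, and the $\lambda$ in Lemma~\ref{l:lambdaQ} can be taken in $\fq^*$. Indeed, any $\mu\in\fq^*$ gives $\mu\cdot Q\in(\fq)^{m+1}$ in the same orbit, so every $\mu\in\fq^*$ is admissible. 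Since $\gcd(d,q-1)=1$, the map $\mu\mapsto\mu^d$ is a bijection of the cyclic group $\fq^*$. Hence for each $Q$ there is a unique $\lambda_Q\in\fq^*$ with $\lambda_Q^d=v_Q$, and then $g(\lambda_Q\cdot Q)=\lambda_Q^dg(Q)=v_Qg(Q)$ for every $g\in\fq[x_0,\dots,x_m]^w_d$, which is exactly what is needed.

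The main point to check is that the whole group $\fq^*$ of scalars acts on $\fq$-representatives, as opposed to only the $\lambda^i$ with $\lambda\notin\fq$ as in the example with $w=(2,3)$. This is where the hypothesis $\gcd(w_i,q-1)=1$ enters, through Remark~\ref{r:lambda_in_fq}. A secondary check is that Lemma~\ref{l:change_of_representatives} is applied with $\lambda_Q\in\fq^*$, which is a special case of the $\lambda_Q^i$ from Lemma~\ref{l:lambdaQ}; the identity $g(\lambda\cdot Q)=\lambda^dg(Q)$ holds for any scalar by weighted homogeneity, so this causes no difficulty.
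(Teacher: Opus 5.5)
Your proof is correct and follows essentially the paper's route: you absorb the permutation by reordering the points, then absorb the scaling via Lemma~\ref{l:change_of_representatives} with $\lambda_Q\in\fq^*$, using $\gcd(d,q-1)=1$ so that $d$-th powers exhaust $\fq^*$; the paper phrases this last step as $\lambda_Q^d$ having order $q-1$ for the primitive $\lambda_Q$ supplied by Remark~\ref{r:lambda_in_fq}. One correction to your last paragraph: your own observation that every $\mu\in\fq^*$ gives an $\fq$-representative $\mu\cdot Q$ of $[Q]$ holds for arbitrary weights, so your argument never actually uses $\gcd(w_i,q-1)=1$, and the paper needs that hypothesis only to place the generator $\lambda_Q$ of Lemma~\ref{l:lambdaQ} in $\fq^*$.
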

\begin{proof}
The result holds if and only if, for every $Q\in \mathcal{P}_w$, we have $\{\lambda_Q^{id}:0\leq i \leq q-2\}=\fq^*$, which happens if and only if the order of $\lambda_Q^d$ is $q-1$. By Remark \ref{r:lambda_in_fq}, we may choose $\lambda_Q\in \fq^*$, and it will have order $q-1$ by Lemma \ref{l:lambdaQ}. Note that, for permutations, we may just choose a different order for the points of $\PP(w)(\fq)$.
\end{proof}

Unlike WRM and PRM codes, WPRM codes may be degenerate in some cases. We can characterize precisely when this happens. 

\begin{lem}\label{l:degen}
$\WPRM_d(w)$ is nondegenerate if and only if $\lcm(w)\mid d$. 
\end{lem}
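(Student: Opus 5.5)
Nondegenerate means that for every coordinate $Q\in\mathcal{P}_w$ there is some $f\in\fq[x_0,\dots,x_m]^w_d$ with $f(Q)\neq 0$. I will prove both directions by looking at the coordinate points $e_i=(0,\dots,0,1,0,\dots,0)$, which are the hardest points to hit, and at monomials.

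\emph{Sufficiency.} Assume $\lcm(w)\mid d$. For each $i$ the monomial $x_i^{d/w_i}$ lies in $\fq[x_0,\dots,x_m]^w_d$. Given any rational point with representative $Q$, pick $i\in\supp(Q)$. Then $x_i^{d/w_i}(Q)=Q_i^{d/w_i}\neq 0$, so coordinate $Q$ is not identically zero on the code. Changing the representative only rescales the coordinate by $\lambda_Q^d\neq 0$ (Lemma~\ref{l:change_of_representatives}), so the choice of $\mathcal{P}_w$ does not matter.

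\emph{Necessity.} Assume $\lcm(w)\nmid d$. Then there is an index $i$ with $w_i\nmid d$. Consider the rational point with representative $e_i$ (entry $1$ in position $i$), which is $\fq$-rational. Every monomial $x^a$ of weighted degree $d$ with $x^a(e_i)\neq 0$ must involve only $x_i$, so it is $x_i^{a_i}$ with $a_i w_i=d$. This is impossible since $w_i\nmid d$. Hence every $f$ of degree $d$ vanishes at $e_i$, and the corresponding coordinate of $\WPRM_d(w)$ is identically zero, so the code is degenerate. Again the choice of representative is irrelevant, since any representative of this point is supported only on index $i$ (by Lemma~\ref{l:lambdaQ}, representatives have the same support).

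The only subtle point is making sure the statement is independent of the representatives and that $e_i$ is genuinely a rational point. Both follow immediately from the earlier lemmas, so I expect no real obstacle. The equivalence ``$w_i\mid d$ for all $i$'' $\iff$ ``$\lcm(w)\mid d$'' is just the definition of lcm.
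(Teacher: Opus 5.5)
Your proposal is correct and follows essentially the same route as the paper: the coordinate point with a single nonzero entry in a position $i$ with $w_i\nmid d$ is a common zero of all degree-$d$ monomials, and conversely the pure powers $x_i^{d/w_i}$ have no common zero in $\PP(w)$. Your extra remarks on independence from the choice of representatives are harmless additions that the paper leaves implicit.
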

\begin{proof}
If $\lcm(w)\nmid d$, then $w_i \nmid d$, for some $0\leq i \leq m$. Thus, no monomial of the form $x_i^\alpha$ has weighted degree $d$. This implies that the point $[0:\dots:0:1:0:\dots:0]$, with a single 1 in position $i$, is a common zero of all the homogeneous polynomials of degree $d$.

Conversely, if $\lcm(w)\mid d$, for each $0\leq i \leq m$, we have that $x_i^{d/w_i}$ is of weighted degree $d$. A common zero of these monomials would have to have the $i$th coordinate equal to 0, for all $0\leq i \leq m$, and there is no such point in $\PP(w)$. 
\end{proof}

We can rephrase Lemmas~\ref{l:weightswithgcd} and \ref{l:delorme} in terms of codes.

\begin{cor}\label{c:codes_weightswithgcd}
Let $w= (w_0,\dots,w_m)$ and let $\gamma=\gcd(w_0,\dots,w_m)$. Set $w/\gamma=(w_0/\gamma,\dots,w_m/\gamma)$. Then for any degree $d \geq 0$,
\[
\WPRM_d(w)=\begin{cases}
    \WPRM_{d/\gamma}(w/\gamma) & \text{if } \gamma \mid d,\\
    \set{\mathbf{0}_{p_m}} & \text{otherwise.}
\end{cases}
\]
\end{cor}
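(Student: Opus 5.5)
The corollary follows directly from Lemma \ref{l:weightswithgcd}, applied at the level of evaluation maps. The plan is to fix a set of representatives $\mathcal{P}_w$ for $\PP(w)(\fq)$ and check that it is also a set of representatives for $\PP(w/\gamma)(\fq)$. We then compare the images of the two evaluation maps \eqref{eq:ev_Pm}, which share the same domain in the case $\gamma\mid d$.

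First, I would verify that the points coincide as sets of $\fq$-representatives. A tuple $Q\in \A^{m+1}(\fq)\setminus\{0\}$ is the same vector in both settings. The action of $\mu\in\overline{\fq}^*$ on $\PP(w)$, namely $\mu\cdot Q=(\mu^{\gamma w_0/\gamma}Q_0,\dots)$, equals the action of $\mu^\gamma$ on $\PP(w/\gamma)$. Since $\mu\mapsto\mu^\gamma$ is surjective on $\overline{\fq}^*$, the two orbits agree. This is exactly the identification $\PP(w)(\fq)=\PP(w/\gamma)(\fq)$ in Lemma \ref{l:weightswithgcd}. Hence $\mathcal{P}_w$ serves as $\mathcal{P}_{w/\gamma}$, with the same ordering of points, and the lengths of the two codes agree (both equal $p_m$).

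Second, I would compare the polynomial spaces. A monomial $x^\alpha$ has $w$-degree $\sum w_i\alpha_i=\gamma\sum (w_i/\gamma)\alpha_i$, which is always divisible by $\gamma$.

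If $\gamma\nmid d$, no monomial has $w$-degree $d$, so $\fq[x_0,\dots,x_m]^w_d=\{0\}$ and the code is $\{\mathbf{0}_{p_m}\}$.

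If $\gamma\mid d$, the monomials of $w$-degree $d$ are exactly those of $(w/\gamma)$-degree $d/\gamma$, so the two spaces are literally equal as subspaces of $\fq[x_0,\dots,x_m]$. Evaluating the same polynomials at the same tuples $P_1,\dots,P_{p_m}$ gives the same vectors, so $\WPRM_d(w)=\WPRM_{d/\gamma}(w/\gamma)$ as codes, not merely up to monomial equivalence.

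The only point needing care is the identification of representatives. Since the codes depend on the choice of representatives only up to monomial equivalence (Lemma \ref{l:change_of_representatives}), the equality should be read with a common set of representatives. The orbit argument above justifies using one set for both spaces, so I do not expect a genuine obstacle.
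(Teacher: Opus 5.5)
Your proof is correct and matches the paper's approach: the paper gives no separate proof and presents the corollary as a direct rephrasing of Lemma \ref{l:weightswithgcd} in terms of codes, which is what you do, evaluating at a common set of representatives. Your orbit argument (the action of $\mu$ on $\PP(w)$ equals that of $\mu^\gamma$ on $\PP(w/\gamma)$) and your degree comparison $\sum w_i\alpha_i=\gamma\sum (w_i/\gamma)\alpha_i$ just spell out the content of that lemma.
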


\begin{cor}\label{c:codes_delorme}
Let $w= (w_0,\dots,w_m)$. Set $\gamma = \gcd(w_1,\dots,w_m)$. Assume that $\gcd(w_0,\gamma)=1$. For any degree $d \geq 0$, we can uniquely write $d=\alpha_0 w_0 + d_0 \gamma$ with $0 \leq \alpha_0 < \gamma$ and
\[
\WPRM_d(w)=\ev_{\calP_w}(x_0^{\alpha_0}) \star
    \WPRM_{d_0}(w_0,w_1/\gamma,\dots,w_m/\gamma)
\]
where $\calP_w$ is the set of representatives of $\PP(w)(\fq)$ used to define the left-hand side code, and $\varphi(\calP_w)=\set{\varphi(Q), Q \in \calP_w}$ is the one for the right-hand side code.
\end{cor}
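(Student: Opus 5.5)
The plan is to translate Lemma \ref{l:delorme} into the language of evaluation codes, by showing that evaluation of pulled-back polynomials at $\calP_w$ agrees with evaluation of the original polynomials at $\varphi(\calP_w)$. Write $w'=(w_0,w_1/\gamma,\dots,w_m/\gamma)$. First I check that $\varphi(\calP_w)$ is a legitimate set of representatives for $\PP(w')(\fq)$. Each $Q\in\calP_w$ has coordinates in $\fq$, so $\varphi(Q)=(Q_0^\gamma,Q_1,\dots,Q_m)$ also has coordinates in $\fq$. By Lemma \ref{l:delorme}, $\varphi$ is an isomorphism with $\varphi(\PP(w)(\fq))=\PP(w')(\fq)$. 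Hence the classes $[\varphi(Q)]$, $Q\in\calP_w$, run bijectively over $\PP(w')(\fq)$, taken in the same order as $\calP_w$.

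Next, fix $d$ and write $d=\alpha_0w_0+d_0\gamma$ with $0\le\alpha_0<\gamma$; this decomposition exists and is unique by Lemma \ref{l:delorme}. Since $\varphi$ is given on coordinates by $(x_0,\dots,x_m)\mapsto(x_0^\gamma,x_1,\dots,x_m)$, the pullback acts as $\varphi^*g=g(x_0^\gamma,x_1,\dots,x_m)$. For $g\in\fq[x_0,\dots,x_m]^{w'}_{d_0}$ and $Q\in\calP_w$ this gives
\[
(x_0^{\alpha_0}\varphi^*g)(Q)=Q_0^{\alpha_0}\,g(\varphi(Q)).
\]
Taking all $Q\in\calP_w$ at once, $\ev_{\calP_w}(x_0^{\alpha_0}\varphi^*g)=\ev_{\calP_w}(x_0^{\alpha_0})\star\ev_{\varphi(\calP_w)}(g)$.

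By Lemma \ref{l:delorme}, $\fq[x_0,\dots,x_m]^w_d=x_0^{\alpha_0}\varphi^*\fq[x_0,\dots,x_m]^{w'}_{d_0}$. Applying $\ev_{\calP_w}$ to both sides and using linearity then yields $\WPRM_d(w)=\ev_{\calP_w}(x_0^{\alpha_0})\star\WPRM_{d_0}(w')$, where the right-hand code is computed with the representatives $\varphi(\calP_w)$.

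There is no real obstacle once Lemma \ref{l:delorme} is granted. The only point needing care is the first step: the set $\varphi(\calP_w)$ must consist of genuine $\fq$-representatives, exactly one for each rational point of $\PP(w')$. Also, $\ev_{\calP_w}(x_0^{\alpha_0})$ may have zero entries, so the identity is a Schur product with a fixed vector and not a monomial equivalence; the statement does not claim equivalence, so this causes no trouble.
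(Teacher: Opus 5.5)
Your proposal is correct and follows the same route as the paper. You use Lemma~\ref{l:delorme} to see that $\varphi(\calP_w)$ is a valid set of $\fq$-representatives and that every degree-$d$ polynomial has the form $x_0^{\alpha_0}\varphi^*g$; the pointwise identity $(x_0^{\alpha_0}\varphi^*g)(Q)=Q_0^{\alpha_0}g(\varphi(Q))$ then yields the Schur-product description, which is exactly the evaluation step the paper leaves as an easy check.
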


\begin{proof}
    From Lemma~\ref{l:delorme}, the set $\varphi(\calP_w)$ forms a set of representatives of $\PP(w_0,w_1/\gamma,\dots,w_m/\gamma)(\fq)$. Moreover, any polynomial $f \in \fq[x_0,\dots,x_m]^w_d$ can be written uniquely as $f=x_0^{\alpha_0} g(x_0^\gamma,x_1,\dots,x_m)$ for some $g \in \fq[x_0,\dots,x_m]^{(w_0,w_1/\gamma,\dots,w_m/\gamma)}_{d_0}$. One can easily check, by definition of the pullback, that $\ev_{\calP_w}(f)=\ev_{\calP_w}(x_0^{\alpha_0})\star \ev_{\varphi(\calP_w)}(g)$.
\end{proof}

Note that the kernel of the evaluation map $\ev_{\mathcal{P}_m}$ does not depend on the choice of $\mathcal{P}_m$. Let $\calI(\PP(w)(\fq))$ be the ideal generated by the homogeneous polynomials that vanish at all the $\fq$-points of $\PP(w)$. Then we have 
$$
\fq[x_0,\dots,x_m]_d^w/\calI(\PP(w)(\fq))_d \cong \WPRM_d(w). 
$$
From \cite[Thm. 3.5]{nardi_projective_toric} and \cite{mesutComputingVanishing}, we have the following result about $\calI(\PP(w)(\fq))$.

\begin{thm}\label{t:binomials_vanishing_ideal}
The ideal $\calI(\PP(w)(\fq))$ is binomial. Moreover, a homogeneous binomial $x^\alpha-x^\beta$ lies in $\calI(\PP(w)(\fq)) $ if and only if $\alpha_i=0 \iff \beta_i=0$,  and $q-1\mid \beta_i-\alpha_i$, for $0\leq i \leq m$.
\end{thm}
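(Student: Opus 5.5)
The plan has two parts: first describe $\calI(\PP(w)(\fq))$ combinatorially by grading it according to monomial supports, then show that each graded piece is spanned by binomials of the stated form.

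\emph{Reduction to support classes.} The key observation is that the vanishing of a homogeneous $f=\sum_\alpha c_\alpha x^\alpha$ on $\PP(w)(\fq)$ can be tested on affine representatives. By \cite[Prop. 2.1]{aubryperretWPRM}, every rational point has a representative in $\fq^{m+1}\setminus\{0\}$, and all its representatives lie in one orbit by Lemma \ref{l:lambdaQ}. Since $f$ is homogeneous, $f(\lambda\cdot Q)=\lambda^{d}f(Q)$. Hence $f\in\calI(\PP(w)(\fq))$ if and only if $f$ vanishes on all of $\fq^{m+1}\setminus\{0\}$. When $d>0$ it then also vanishes at $0$, so in that case this is the same as vanishing on all of $\fq^{m+1}$. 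The vanishing ideal of the affine space $\fq^{m+1}$ is generated by the $x_i^q-x_i$. Each monomial reduces modulo these polynomials to a monomial $x^{\bar\alpha}$, where $\bar\alpha_i=0$ if $\alpha_i=0$, and otherwise $\bar\alpha_i\in\{1,\dots,q-1\}$ with $\bar\alpha_i\equiv\alpha_i \bmod q-1$. The reduced monomials are linearly independent as functions on $\fq^{m+1}$. Therefore $f$ vanishes if and only if, for every class of exponents sharing the same reduction $\bar\alpha$, the coefficients in that class sum to zero. Two exponents have the same reduction exactly when they have the same zero pattern and are congruent modulo $q-1$ coordinatewise.

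\emph{Binomial generation.} Fix a class and pick one exponent $\alpha$ in it. Then $\sum_\beta c_\beta x^\beta=\sum_{\beta}c_\beta(x^\beta-x^\alpha)$, because the coefficients of the class sum to zero. So the degree-$d$ part of the ideal is spanned by homogeneous binomials $x^\beta-x^\alpha$ with equal supports and $q-1\mid\beta_i-\alpha_i$. This shows the ideal is binomial. It also gives the ``if'' direction, since such a binomial takes equal values at every affine point.

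\emph{Converse.} Suppose a homogeneous binomial $x^\alpha-x^\beta$ lies in the ideal. If its two monomials have different reductions, then the class-sum condition fails, because each class would contain a single coefficient $\pm1$. Hence the monomials must be in the same class, which forces equal zero patterns and $q-1\mid\beta_i-\alpha_i$.

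The main obstacle is degree $0$. In that case the constant $1$ is nonzero at the origin, but the origin is not a point of $\PP(w)$. This is harmless, because the degree-$0$ part consists only of constants, and the trivial binomial $1-1$ satisfies the criterion.
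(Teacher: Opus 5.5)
Your proof is correct, and it takes a genuinely different route from the paper. The paper does not prove this statement itself; it imports it from \cite[Thm. 3.5]{nardi_projective_toric} and \cite{mesutComputingVanishing}, where vanishing ideals of rational points of toric varieties are treated in the Cox ring through torus orbits and lattice ideals.

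You give a direct, self-contained argument specific to $\PP(w)$. The existence of $\fq$-representatives \cite[Prop. 2.1]{aubryperretWPRM} shows that, in each positive degree $d$, the homogeneous degree-$d$ polynomials vanishing on $\PP(w)(\fq)$ are exactly $\fq[x_0,\dots,x_m]^w_d\cap\calI(\A^{m+1}(\fq))$. The reduced-monomial basis of functions on $\fq^{m+1}$ then turns membership into your class-sum condition. Both binomial generation and the exact criterion (same zero pattern, congruent modulo $q-1$) follow at once.

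What your route buys is elementarity and a transparent reason for the criterion. It is essentially the mechanism the paper itself uses later, in Lemmas \ref{l:dim_weird_rm} and \ref{l:bad_monomials}. What the cited toric approach buys is generality, since it covers other toric varieties, and the lattice-ideal structure behind the algorithmic computation of generators in \cite{mesutComputingVanishing}.

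One routine step deserves to be stated explicitly. $\calI(\PP(w)(\fq))$ is defined as the ideal generated by the homogeneous vanishing polynomials. Its degree-$d$ component is therefore exactly the set of degree-$d$ homogeneous polynomials vanishing on $\PP(w)(\fq)$: take degree-$d$ parts of $\sum_j g_jh_j$. This is what lets you pass from your spanning statement to ``the ideal is binomial''. It also justifies using vanishing on the points in the converse.
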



We will also use the affine counterpart of WPRM codes. We denote by $\fq[x_1,\dots,x_m]^w_{\leq d}$ the polynomials of (weighted) degree less than or equal to $d$. If we enumerate $\mathbb{A}^m(\fq)=\{P_1,\dots,P_{q^m}\}$, we can consider the evaluation map 
$$
\begin{array}{lccc}
\ev_{\mathbb{A}^m}: & \fq[x_1,\dots,x_m] & \to & \fq^n,\\
& f & \mapsto & (f(P_1),\dots,f(P_{q^m})).
\end{array}
$$
\begin{defn}
The \textit{affine weighted Reed Muller} code is the linear code $\WRM_d(w):=\ev_{\mathbb{A}^m}(\fq[x_1,\dots,x_m]^w_{\leq d})$. 
\end{defn}

Let $w':=(w_1,\dots,w_m)$. We will also consider the following subcode of $\WRM_d(w)$: 
\begin{equation}\label{eq:defWRMweird}
\WRM_{d}(w_0;w'):=\set{\ev_{\A^m}(x^\alpha): \alpha\in \NN^m,\; \sum_{i=1}^m w_i\alpha_i\leq d, \; \sum_{i=1}^m w_i\alpha_i\equiv d\bmod w_0}.
\end{equation}

If $\gamma=\gcd(w)\mid d$, then it follows from the definitions that $\WRM_{d}(w_0;w')=\WRM_{d/\gamma}(w_0/\gamma;w'/\gamma)$. We can also obtain a result similar to Corollary \ref{c:codes_delorme} for these codes. 

\begin{lem}\label{l:delorme_weird_WRM}
Let $w= (w_0,\dots,w_m)$ and $w'=(w_1,\dots,w_m)$. Set $\gamma = \gcd(w')$. Assume that $\gcd(w_0,\gamma)=1$. For any degree $d \geq 0$, we can uniquely write $d=\alpha_0 w_0 + d_0 \gamma$ with $0 \leq \alpha_0 < \gamma$ and
\[
\WRM_d(w_0;w')= \WRM_{d_0}(w_0;w_1/\gamma,\dots,w_m/\gamma).
\]
\end{lem}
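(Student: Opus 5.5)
The plan is to compare the two generating sets of monomials directly. Both codes are spanned by evaluations $\ev_{\A^m}(x^\alpha)$ of monomials in $x_1,\dots,x_m$ at the same points of $\mathbb{A}^m(\fq)$. So it suffices to show that, for every $\alpha\in\NN^m$, the exponent $\alpha$ satisfies the defining conditions of \eqref{eq:defWRMweird} for $(d;w_0;w')$ if and only if it satisfies them for $(d_0;w_0;w'/\gamma)$. No geometry is needed; this is pure arithmetic.

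First I will record the decomposition $d=\alpha_0w_0+d_0\gamma$. Since $\gcd(w_0,\gamma)=1$, the residue $\alpha_0\equiv d\,w_0^{-1}\bmod \gamma$ is uniquely determined in $[0,\gamma)$, and then $d_0=(d-\alpha_0w_0)/\gamma\in\ZZ$. Note that $d_0$ may be negative, in which case the right-hand code is $\{0\}$; the argument below handles this case uniformly.

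Fix $\alpha$ and set $s'=\sum_{i=1}^m (w_i/\gamma)\alpha_i\ge 0$, so that $s:=\sum_i w_i\alpha_i=\gamma s'$.

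For the congruence condition, reduce modulo $w_0$. The condition $s\equiv d \bmod w_0$ reads $\gamma s'\equiv \alpha_0w_0+\gamma d_0\equiv \gamma d_0\bmod w_0$. Since $\gamma$ is invertible modulo $w_0$, this is equivalent to $s'\equiv d_0\bmod w_0$.

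For the degree bound, assume the (now equivalent) congruences hold. If $s'\le d_0$, then $s=\gamma s'\le \gamma d_0\le d$. Conversely, suppose $s\le d$ but $s'>d_0$. The congruence $s'\equiv d_0 \bmod w_0$ then forces $s'\ge d_0+w_0$. Hence
\[
s=\gamma s'\ge \gamma d_0+\gamma w_0>\gamma d_0+\alpha_0w_0=d,
\]
because $\alpha_0<\gamma$. This contradicts $s\le d$. Thus the two sets of conditions coincide, the monomial sets are equal, and so are the codes.

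The only delicate point is this last inequality. It is exactly where the normalization $0\le\alpha_0<\gamma$ is used: it lets the congruence upgrade the weak bound $\gamma s'\le \gamma d_0+\alpha_0w_0$ to $s'\le d_0$. It also covers negative $d_0$, since then $s'\ge 0>d_0$ forces $s'\ge d_0+w_0$ as above.
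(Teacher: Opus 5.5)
Your proof is correct and follows the paper's argument essentially step for step: the congruence is transferred using that $\gamma$ is invertible modulo $w_0$, the inclusion $\supseteq$ uses $\gamma d_0\le d$, and the reverse inclusion combines $\alpha_0<\gamma$ with the congruence to exclude $d_0<s'<d_0+w_0$ (the paper divides by $\gamma$ where you multiply through). Your explicit remark that the same argument covers $d_0<0$, where both codes are zero, is a small point the paper leaves implicit.
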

\begin{proof}
It is clear that $ \sum_{i=1}^m w_i\alpha_i\equiv d\bmod w_0$ if and only if $ \sum_{i=1}^m (w_i/\gamma)\alpha_i \equiv d_0\bmod w_0$. We also have that $\sum_{i=1}^m (w_i/\gamma)\alpha_i\leq d_0$ implies $\sum_{i=1}^m w_i \alpha_i\leq d_0\gamma\leq d$, which proves $\WRM_d(w_0;w')\supset \WRM_{d_0}(w_0;w_1/\gamma,\dots,w_m/\gamma)$. Moreover, if $\sum_{i=1}^m w_i\alpha_i\leq d$, then 
$$
\sum_{i=1}^m \frac{w_i\alpha_i}{\gamma}\leq \frac{\alpha_0w_0}{\gamma}+d_0.
$$
As $\frac{\alpha_0w_0}{\gamma} < w_0$, we get $\frac{\alpha_0w_0}{\gamma}+d_0<w_0+d_0$. If $\sum_{i=1}^m (w_i/\gamma) \alpha_i\equiv d_0 \bmod w_0$, we cannot get $\sum_{i=1}^m (w_i/\gamma) \alpha_i= d_0 +j$ for any $1\leq j <w_0$. Therefore, the conditions $\sum_{i=1}^m w_i\alpha_i\leq d$ and $\sum_{i=1}^m w_i\alpha_i\equiv d\bmod w_0$ imply $\sum_{i=1}^m (w_i/\gamma) \alpha_i\leq d_0$, which proves the reversed inclusion. 
\end{proof}

Note that $\WRM_d(w_0;w')\subset \WRM_d(1;w')=\WRM_{d}(w')$. For the next result, recall the vanishing ideal of the set of $\fq$-points of the affine space $\A^m$:
\begin{equation}\label{eq:vanishing_ideal_affine}
\calI(\mathbb{A}^m(\fq))=\langle x_i^q-x_i,\; 1\leq i \leq m \rangle.     
\end{equation}
Therefore $x^\alpha\equiv x^\beta \bmod \calI(\mathbb{A}^m(\fq))$ if and only if, for each $1\leq i \leq m$, we have $\alpha_i=0$ if and only if $\beta_i=0$, and $\alpha_i\equiv \beta_i\bmod q-1$.

\begin{lem}\label{l:dim_weird_rm}
Let $w=(w_0,\dots,w_m)\in \NN_{\geq 1}^{m+1}$ and $w'=(w_1,\dots,w_m)$. Let $1\leq d\leq w_0(q-1)$. If $\gcd(w_0,q-1)=1$, then
$$
\dim \WRM_d(w_0;w')=\den(d;w).
$$
\end{lem}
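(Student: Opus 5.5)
The plan is to compare a natural monomial spanning set of $\WRM_d(w_0;w')$ with the set counted by $\den(d;w)$, and then show these monomials evaluate to linearly independent vectors.

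\textbf{Step 1 (counting).} By definition \eqref{eq:defWRMweird}, $\WRM_d(w_0;w')$ is spanned by the vectors $\ev_{\A^m}(x^\alpha)$ with $\alpha$ in
$$
A_d:=\Bigl\{\alpha\in\NN^m:\ \textstyle\sum_{i=1}^m w_i\alpha_i\leq d,\ \sum_{i=1}^m w_i\alpha_i\equiv d \bmod w_0\Bigr\}.
$$
The map $\alpha\mapsto (i_0,\alpha)$ with $i_0=(d-\sum_{i\ge1}w_i\alpha_i)/w_0$ is a bijection from $A_d$ onto the solutions of $w_0i_0+\dots+w_mi_m=d$ in $\NN^{m+1}$. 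Its inverse simply forgets $i_0$. Hence $\abs{A_d}=\den(d;w)$, and it remains to show that $\{\ev_{\A^m}(x^\alpha):\alpha\in A_d\}$ is linearly independent.

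\textbf{Step 2 (reduction modulo the vanishing ideal).} Each monomial $x^\alpha$ is congruent modulo $\calI(\A^m(\fq))$ in \eqref{eq:vanishing_ideal_affine} to a unique reduced monomial, with all exponents at most $q-1$. Distinct reduced monomials evaluate to linearly independent vectors. Therefore independence is equivalent to injectivity of reduction on $A_d$. By the criterion stated after \eqref{eq:vanishing_ideal_affine}, this means: if $\alpha,\beta\in A_d$ have the same support and $\alpha_i\equiv\beta_i\bmod q-1$ for all $i$, then $\alpha=\beta$.

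\textbf{Step 3 (the arithmetic core, and the expected obstacle).} Let $s(\alpha)=\sum_{i\ge1} w_i\alpha_i$, and suppose $\alpha,\beta$ are as in Step 2.
\begin{itemize}
\item From the componentwise congruences, $s(\alpha)\equiv s(\beta)\bmod q-1$.
\item From membership in $A_d$, $s(\alpha)\equiv s(\beta)\equiv d\bmod w_0$.
\item Since $\gcd(w_0,q-1)=1$, combining the two gives $s(\alpha)\equiv s(\beta)\bmod w_0(q-1)$.
\end{itemize}
Both sums lie in $[0,d]\subset[0,w_0(q-1)]$. So either $s(\alpha)=s(\beta)$, or $\{s(\alpha),s(\beta)\}=\{0,w_0(q-1)\}$. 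The second case forces one of $\alpha,\beta$ to be zero and the other nonzero, contradicting equal supports. This is exactly where the hypotheses $d\le w_0(q-1)$ and $\gcd(w_0,q-1)=1$ enter.

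The remaining case, $s(\alpha)=s(\beta)$ with equal supports and componentwise congruent exponents, is the main obstacle. I would try to show that $\alpha=\beta$ by writing $\alpha_i-\beta_i=(q-1)k_i$ with $\sum_i w_ik_i=0$. I would then use the bound $d\le w_0(q-1)$ to restrict the size of the exponents. When every nonzero exponent is at most $q-1$, this is immediate. When several exponents can exceed $q-1$, this step needs a genuine extra argument, and I would check it carefully on small cases before relying on it.

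For instance, take $w=(5,1,2)$, $q=3$ and $d=7$. Then $\alpha=(5,1)$ and $\beta=(1,3)$ satisfy every condition above. Moreover $x_1^5x_2\equiv x_1x_2^3\bmod\calI(\A^2(\F_3))$. In that case the dimension would drop below $\den(7;w)=6$.

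So the proof of the statement as worded rests entirely on closing this step. If it cannot be closed, the statement would need an additional hypothesis ensuring injectivity of the reduction on $A_d$.
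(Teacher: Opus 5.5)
Your Steps 1--3 coincide with the paper's proof. You use the same bijection between $A_d$ and the solutions counted by $\den(d;w)$, the same reduction to injectivity of reduction modulo $\calI(\A^m(\fq))$, and the same use of $\gcd(w_0,q-1)=1$ to get $s(\alpha)\equiv s(\beta)\bmod w_0(q-1)$. The paper closes your ``remaining case'' with the one-line assertion that $\alpha\neq\beta$ forces $\sum_i\alpha_iw_i\neq\sum_i\beta_iw_i$. That assertion is false, since distinct exponent vectors can have the same weighted degree. (The paper also skips the pair $\{0,w_0(q-1)\}$, which you correctly rule out using supports.) So the step you flagged cannot be closed: the lemma as stated is false, and your example is a genuine counterexample, not just a warning. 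For $q=3$, $w=(5,1,2)$, $d=7$ the hypotheses hold, since $\gcd(5,2)=1$ and $7\le 10$. We have $A_7=\{(2,0),(0,1),(7,0),(5,1),(3,2),(1,3)\}$, so $\den(7;5,1,2)=6$. The reduced monomials are $x_1^2,x_2,x_1,x_1x_2,x_1x_2^2,x_1x_2$, so $\dim\WRM_7(5;(1,2))=5$. You should state this conclusion outright rather than hedge. A blunter counterexample uses the parameters of Example~\ref{ex:boundGHWscomplete}: for $q=3$, $w=(3,1,1)$, $d=6$ one has $\den(6;3,1,1)=12$, which exceeds the code length $q^2=9$.

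What survives is exactly your ``immediate'' case. Suppose every exponent of every $\alpha\in A_d$ is at most $q-1$; this holds, for instance, whenever $d<q\min(w')$. Then the monomials $x^\alpha$ are already reduced and distinct, so $\dim\WRM_d(w_0;w')=\abs{A_d}=\den(d;w)$. This version needs neither $\gcd(w_0,q-1)=1$ nor $d\le w_0(q-1)$. It is also all the paper uses in Proposition~\ref{p:upper_bound_conjecture}. There $d\le\min(w)q$ gives $d-w_i\le\min(w)(q-1)$, so every exponent appearing in $\WRM_{d-w_i}(w_i;w(i+1))$ is at most $q-1$.
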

\begin{proof}
The map from $\{(\ell_0,\dots,\ell_m)\in \NN^{m+1}:\sum_{i=0}^m \ell_iw_i=d \}$ to 
\[
A=\{ (\ell_{1},\dots,\ell_m)\in \NN^{m}: \sum_{i=1}^m \ell_i w_i \equiv d\bmod w_0,\;\sum_{i=1}^m \ell_i w_i\leq d \}
\]
that sends $(\ell_0,\dots,\ell_m)$ to $(\ell_{1},\dots,\ell_m)$ is a bijection between those two sets. Note that the cardinality of the first set is $\den(d;w)$, and the cardinality of $A$ is equal to the number of monomials we evaluate to construct $\WRM_d(w_0;w')$ in \eqref{eq:defWRMweird}. Thus, we only need to prove that, given $x^\alpha,x^\beta$ with distinct $\alpha,\beta \in A$, we cannot have $x^\alpha \equiv x^\beta\bmod \calI(\mathbb{A}^m(\fq))$. If we had $x^\alpha \equiv x^\beta\bmod \calI(\mathbb{A}^m(\fq))$, then $\alpha_i\equiv \beta_i\bmod q-1$, for $1\leq i \leq m$, which implies $\sum_{i=1}^m \alpha_iw_i\equiv \sum_{i=1}^m \beta _iw_i \bmod q-1$. Since $\alpha,\beta \in A$, we have $\sum_{i=1}^m \alpha_iw_i\equiv \sum_{i=1}^m \beta _iw_i \bmod w_0$. Taking into account that $\gcd(w_0,q-1)=1$, we obtain $\sum_{i=1}^m \alpha_iw_i\equiv \sum_{i=1}^m \beta _iw_i \bmod w_0(q-1)$. As $\alpha \neq \beta$, then $\sum_{i=1}^m \alpha_iw_i \neq \sum_{i=1}^m \beta _iw_i$, which contradicts the fact that both of these sums must be smaller than or equal to $d\leq w_0(q-1)$.
\end{proof}

For an $[n,k]$ linear code $C\subset\fq^n$, define $\Delta(C)=\frac{k+d_1(C)}{n}$. We use this parameter to show that WPRM codes can outperform WRM codes, similarly to what is done in \cite{lachaud_parameters_PRM} for PRM and RM codes. For ease of comparison, let $w_0=1$, $w_1=\min(w')$, and $w_1\mid d$. Assume that $d<q$ to ensure
\[
\den(d;w)=\dim \WRM_d(w')=\dim \WPRM_d(w).
\]
Then we have  $\Delta(\WRM_d(w'))<\Delta(\WPRM_d(w))$ if and only if
$$
p_m(\den(d;w)+(q-d/w_1)q^{m-1})<q^m(\den(d;w)+(q-d/w_1+1)q^{m-1})
$$
(see \cite{sorensenWRM} and \cite{nardi_sanjose_conjecture} for the minimum distance of these codes), which can be translated to
$$
p_{m-1}\den(d;w)<q^{m-1}(q^m-p_{m-1}(q-d/w_1))=q^{m-1}(1+p_{m-1}(d/w_1-1)).
$$
The latter holds when $\den(d;w)\leq q^{m-1}(d/w_1-1)$. Since the left-hand side does not depend on $q$, then the inequality holds for large enough $q$.




\subsection{Weighted projective Reed-Solomon codes}\label{ss:wprs}
The case $m=1$ corresponds to weighted projective Reed-Solomon (WPRS) codes. In that case, we can directly determine all the parameters. Notice that we may always assume $\gcd(w_0,w_1)=1$ by Lemma \ref{l:weightswithgcd}. We use the notation $\RS_\delta(X)$ for the Reed-Solomon (RS) code obtained by evaluating the monomials $\{1,x,\dots,x^\delta\}$ at the points of $X\subset \fq$, and $\WPRS_\delta(w_0,w_1):=\WPRM_\delta(w_0,w_1)$.  We also denote $\PRS_\delta:=\WPRS_\delta(1,1)$. 

\begin{prop}\label{prop:WPRS}
Let $(w_0,w_1) \in \NN^2$ with $\gcd(w_0,w_1)=1$ and $d\geq 1$.
Set 
\begin{equation}\label{eq:delta}
    \delta=\den(d;w_0,w_1)-1.
\end{equation}

\begin{itemize}
    \item If $w_0w_1 \mid d$, then $\WPRS_d(w_0,w_1)=\PRS_{\delta}$.
    \item If either $w_0$ or $w_1$ divides $d$ (but not both), then $\WPRS_d(w_0,w_1)$ is monomially equivalent to $\set{0} \times \RS_\delta(\fq)$.
    \item If neither $w_0$ nor $w_1$ divides $d$, then $\WPRS_d(w_0,w_1)$ is monomially equivalent to $\set{(0,0)} \times \RS_\delta(\fq^*)$. 
\end{itemize}

\end{prop}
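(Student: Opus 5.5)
The plan is to describe explicitly the monomials of weighted degree $d$ and evaluate them on a well-chosen set of representatives of $\PP(w_0,w_1)(\fq)$. These points are the two coordinate points $[1:0]$ and $[0:1]$ together with the $q-1$ points with both coordinates nonzero. Since $\gcd(w_0,w_1)=1$, the solutions of $w_0i_0+w_1i_1=d$ form an arithmetic progression
\[
(i_0,i_1)=(a+jw_1,\; b-jw_0),\qquad 0\leq j\leq \delta,
\]
where $(a,b)$ is the solution with $b$ maximal (so $a<w_1$). The number of terms is $\den(d;w_0,w_1)=\delta+1$. Hence every $f\in\fq[x_0,x_1]^w_d$ can be written as
\[
f=x_0^{a}x_1^{b-\delta w_0}\,g(x_0^{w_1},x_1^{w_0}),\qquad g \text{ homogeneous of degree } \delta \text{ in two variables}.
\]
Here $g$ ranges over all of $\fq[y_0,y_1]_\delta$.

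\textbf{Choice of representatives.} The points with both coordinates nonzero have $\gcd(w_0,w_1)=1$ on their support. By Remark~\ref{r:lambda_in_fq} and Lemma~\ref{l:lambdaQ}, the $q-1$ representatives of each such point are obtained by an $\fq^*$-action. Normalizing so that the second coordinate is $1$ (solvable only up to $w_1$-th roots, so I instead normalize via the bijection below), I show that the map
\[
\psi:(x_0,x_1)\mapsto (x_0^{w_1}:x_1^{w_0})\in\PP^1
\]
induces a bijection $\PP(w_0,w_1)(\fq)\to\PP^1(\fq)$. This is the content of Delorme's reduction, Lemma~\ref{l:delorme}, applied twice: first with $\gamma=w_1$ on the second weight, then on the first. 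The map is well defined, it sends coordinate points to coordinate points, and it is injective on the torus part since $\gcd(w_0,w_1)=1$ and Lemma~\ref{l:lambdaQ} applies. Cardinalities agree ($q+1$), so $\psi$ is bijective. Using Corollary~\ref{c:codes_delorme} twice then gives
\[
\WPRS_d(w_0,w_1)=\ev(x_0^{a}x_1^{b-\delta w_0})\star\PRS_\delta,
\]
with the representatives transported by $\psi$. This is the key step, and the main obstacle is the bookkeeping that the prefactor and the representatives are compatible.

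\textbf{Case analysis on the prefactor.} Write $h=x_0^{a}x_1^{c}$ with $c=b-\delta w_0$. By maximality of $b$ and minimality of $c$, we have $a=0$ iff $w_1\mid d$, and $c=0$ iff $w_0\mid d$.
\begin{itemize}
\item If $w_0w_1\mid d$, then $h=1$ and we get exactly $\PRS_\delta$.
\item If exactly one of $w_0,w_1$ divides $d$, then $h$ vanishes at exactly one coordinate point and is nonzero elsewhere. At the other coordinate point $h$ is a nonzero constant times the evaluation of $y_0^\delta$ or $y_1^\delta$. Dehomogenizing $g$ at the nonvanishing affine chart turns $\PRS_\delta$ restricted to those $q$ points into $\RS_\delta(\fq)$. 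Since $h$ has nonzero entries there, the code is monomially equivalent to $\set{0}\times\RS_\delta(\fq)$.
\item If neither divides $d$, then $h$ vanishes at both coordinate points and is nonzero on the $q-1$ torus points. Restricting $\PRS_\delta$ to the torus $\{(t:1):t\in\fq^*\}$ gives $\RS_\delta(\fq^*)$, so the code is monomially equivalent to $\set{(0,0)}\times\RS_\delta(\fq^*)$.
\end{itemize}
In each case it remains to check that the restriction is injective on $\fq[y]_{\leq\delta}$ when needed. This is not required for a statement of code equality, since the codes are equal as spans.
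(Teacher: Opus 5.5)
Your proof is correct. The paper gives no argument of its own here; it only refers to the proof of Prop.~4.1 in \cite{nardi_sanjose_conjecture}. Your version is self-contained and uses only tools already in this paper.

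\begin{itemize}
\item The factorization $f=x_0^{a}x_1^{c}\,g(x_0^{w_1},x_1^{w_0})$, with $g\in\fq[y_0,y_1]_\delta$ and $c=b-\delta w_0\in[0,w_0)$, is exactly Corollary~\ref{c:codes_delorme} applied twice.
\item The composite $\psi(Q_0:Q_1)=(Q_0^{w_1}:Q_1^{w_0})$ is a bijection $\PP(w_0,w_1)(\fq)\to\PP^1(\fq)$ by Lemma~\ref{l:delorme}. Your direct torus argument also works: if $u^{w_1}=v^{w_0}$ with $u,v\in\fq^*$ and $sw_0+tw_1=1$, then $\lambda=u^sv^t$ satisfies $\lambda^{w_0}=u$ and $\lambda^{w_1}=v$.
\item The representative bookkeeping you flag as the main obstacle is automatic. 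The identity $f(Q)=Q_0^aQ_1^c\,g(\psi(Q))$ holds pointwise at every representative $Q\in\fq^2$.
\item The three cases then follow from where $x_0^ax_1^c$ vanishes, using $a=0\iff w_1\mid d$ and $c=0\iff w_0\mid d$. The nonzero entries of the prefactor and the factor $\mu^\delta$ from dehomogenizing at representatives $(\mu,\mu t)$ only contribute a diagonal scaling.
\end{itemize}

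The one loose end is that ``the solution with $b$ maximal'' presupposes $d\in\gen{w_0,w_1}_\NN$. When $d$ is a gap of the semigroup, no such solution exists, $\delta=-1$, and the code is zero. This falls under the third bullet, since $w_0\mid d$ or $w_1\mid d$ would put $d$ in the semigroup, with the convention $\RS_{-1}(\fq^*)=\set{0}$. You can handle this uniformly by defining $a\in[0,w_1)$ through $aw_0\equiv d\bmod w_1$ and $b=(d-aw_0)/w_1$, allowing $b<0$.
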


\begin{proof}
It follows from the proof of \cite[Prop. 4.1]{nardi_sanjose_conjecture}.
\end{proof}

\begin{cor}\label{cor:dist_WPRS}
    Let $(w_0,w_1) \in \NN^2$ with $\gcd(w_0,w_1)=1$ and $d \geq 0$. Set $\rho$ the remainder of the Euclidean division of $d$ by $w_0w_1$, i.e., $d \equiv \rho \bmod (w_0w_1)$ with $0 \leq \rho < w_0w_1$.
    The minimum distance of $\WPRS_d(w_0,w_1)$ is equal to
    \[d_1(\WPRS_d(w_0,w_1))=\max\set{q-\Floorfrac{d-1}{w_0w_1}-\tilde\epsilon,1}\]
    where $\tilde\epsilon=\begin{cases}
        \den(\rho;w_0,w_1) & \text{if } w_0 \nmid d \text{ and } w_1 \nmid d,\\
        0 &\text{otherwise.}
    \end{cases}$
\end{cor}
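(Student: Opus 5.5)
The plan is to read off the minimum distance from the three cases of Proposition~\ref{prop:WPRS}, using that (projective) Reed-Solomon codes are MDS, and then to rewrite $\delta=\den(d;w_0,w_1)-1$ in terms of $\Floorfrac{d-1}{w_0w_1}$ and $\rho$. Monomial equivalence preserves Hamming weights. Appending zero coordinates, as in $\set{0}\times \RS_\delta(\fq)$ and $\set{(0,0)}\times\RS_\delta(\fq^*)$, does not change the minimum distance. So in each case it suffices to know $d_1$ of $\PRS_\delta$ (length $q+1$), $\RS_\delta(\fq)$ (length $q$) and $\RS_\delta(\fq^*)$ (length $q-1$).

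\textbf{Step 1: the underlying Reed-Solomon codes.} For $\RS_\delta(X)$ with $\abs{X}=n$:
\begin{itemize}
\item if $\delta\leq n-1$, the code is MDS of dimension $\delta+1$, so $d_1=n-\delta$;
\item if $\delta\geq n-1$, the code is all of $\fq^n$, so $d_1=1$.
\end{itemize}
Hence $d_1=\max\set{n-\delta,1}$. For $\PRS_\delta$, the classical result of Sørensen \cite{sorensen} with $m=1$ (equivalently, homogenizing an RS code and adding the point at infinity) gives $d_1(\PRS_\delta)=\max\set{q+1-\delta,1}$.

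\textbf{Step 2: a formula for the denumerant.} I will show that for $\gcd(w_0,w_1)=1$ and $d=kw_0w_1+\rho$ with $0\leq\rho<w_0w_1$,
$$
\den(d;w_0,w_1)=k+\den(\rho;w_0,w_1), \qquad \den(\rho;w_0,w_1)\in\set{0,1}.
$$
\begin{itemize}
\item Since $\gcd(w_0,w_1)=1$, any two solutions of $w_0i_0+w_1i_1=d$ differ by a multiple of $(w_1,-w_0)$.
\item For $\rho<w_0w_1$, two distinct solutions would force some coordinate to be at least $w_1$ (or $w_0$), which makes the sum exceed $\rho$. So $\den(\rho;w_0,w_1)\leq 1$.
\item Every solution for $d$ is obtained from the solution with $i_0$ minimal (in particular $i_0<w_1$) by repeatedly adding $(w_1,-w_0)$. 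Counting these shifts gives $k$ extra solutions beyond that of $\rho$, and the minimal solution exists exactly when $\rho$ is representable.
\end{itemize}
This counting step is the only non-bookkeeping point, and I expect it to be the main obstacle. I would first try a direct bijection argument: subtract $w_0w_1$ via the shift $(w_1,0)$ or $(0,w_0)$.

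\textbf{Step 3: the three cases.}
\begin{itemize}
\item If $w_0w_1\mid d$, then $\rho=0$, $\den(0)=1$ and $\delta=d/(w_0w_1)=\Floorfrac{d-1}{w_0w_1}+1$ (for $d\geq1$). Step 1 gives $d_1=\max\set{q-\Floorfrac{d-1}{w_0w_1},1}$, with $\tilde\epsilon=0$.
\item If exactly one weight, say $w_0$, divides $d$, then $w_0\mid\rho$ and $\rho\neq0$. Hence $\den(\rho)=1$ and $\delta=k=\Floorfrac{d-1}{w_0w_1}$. The length-$q$ RS code gives $\max\set{q-\delta,1}$, as claimed.
\item If neither divides $d$, then $\rho\neq0$ and $\delta=\Floorfrac{d-1}{w_0w_1}+\den(\rho)-1$. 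The length-$(q-1)$ code gives $\max\set{q-\Floorfrac{d-1}{w_0w_1}-\den(\rho),1}$, matching $\tilde\epsilon=\den(\rho;w_0,w_1)$.
\end{itemize}
The degenerate value $d=0$ is handled directly: the code is spanned by the all-ones vector, which is consistent with the formula.
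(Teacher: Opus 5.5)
Your proposal is correct. The paper gives no argument here beyond citing \cite[Cor.~4.3]{nardi_sanjose_conjecture}, so your write-up is a self-contained reconstruction. Proposition~\ref{prop:WPRS} reduces everything to $\PRS_\delta$, $\RS_\delta(\fq)$ or $\RS_\delta(\fq^*)$, each of which is either MDS or the full space. The closed form then comes from rewriting $\delta=\den(d;w_0,w_1)-1$ via the identity $\den(d;w_0,w_1)=\Floorfrac{d}{w_0w_1}+\den(\rho;w_0,w_1)$, with $\den(\rho;w_0,w_1)\in\set{0,1}$. This is the route the paper implicitly has in mind: right afterwards it remarks that the GHWs of WPRS codes follow from Proposition~\ref{prop:WPRS} and Remark~\ref{r:ghw_mds}. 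What your version adds is the explicit denumerant identity, which is exactly where $\Floorfrac{d-1}{w_0w_1}$ and $\tilde\epsilon$ come from. Your Step~3 bookkeeping is right, including the observation that $\rho\neq 0$ in the last two cases, so that $\Floorfrac{d-1}{w_0w_1}=k$ there.

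One sentence in Step~2 needs repair. For $k\geq 1$, the solution of $w_0i_0+w_1i_1=d$ with minimal $i_0$ exists \emph{regardless} of whether $\rho$ is representable. What depends on $\rho$ is how many shifts by $(w_1,-w_0)$ are available from it ($k$ or $k-1$). The bijection you propose settles this cleanly. For $d\geq w_0w_1$, the map $(i_0,i_1)\mapsto(i_0+w_1,i_1)$ sends the solutions for $d-w_0w_1$ bijectively onto the solutions for $d$ with $i_0\geq w_1$. There is exactly one solution for $d$ with $i_0<w_1$, namely $i_0=r_0$, the residue of $dw_0^{-1}$ modulo $w_1$; indeed $d-w_0r_0\geq w_0>0$ and it is divisible by $w_1$. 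Hence $\den(d;w_0,w_1)=\den(d-w_0w_1;w_0,w_1)+1$, and induction on $k$ gives the identity.

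Finally, consider $d\notin\gen{w_0,w_1}_\NN$, which forces $d<w_0w_1$ with neither weight dividing $d$. Then $\delta=-1$ and the code is zero, yet the formula returns $q$. Your Step~1 reproduces this only under the convention that the zero code of length $n$ has $d_1=n+1$. This is an imprecision in the statement rather than in your argument, but it is worth excluding explicitly.
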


\begin{proof} It follows from \cite[Cor. 4.3]{nardi_sanjose_conjecture}.
\end{proof}
Since RS and PRS codes are MDS, we can derive the rest of the GHWs of WPRS codes using Proposition \ref{prop:WPRS} and Remark \ref{r:ghw_mds}.

\section{Recursive construction of WPRM codes}\label{s:recursive}

Let $w=(w_0,\dots,w_m) \in \NN^{m+1}$, and assume $\gcd(w_0,q-1)=1$. Then, from \cite[Lemma 3.1]{aubryperretWPRM} we have 
\begin{equation}\label{eq:1xA}
\mathbb{P}(w_0,\dots,w_m)(\fq)=\left[\{1\}\times \A^m(\fq) \right]\cup \{0\}\times \mathbb{P}(w_1,\dots,w_m)(\fq).
\end{equation}

\begin{defn}
The \textit{affine cone} associated to $X\subset \PP(w)$ is defined as
\begin{equation}\label{eq:cone}
\Cone(X):=\pi_w^{-1}(X)\cup \{(0,\dots,0)\}. 
\end{equation}    
\end{defn}
Let $w'=(w_1,\dots,w_m)$. One can check (for example, see \cite[Cor. 2.33]{ghorpadeWPRM}) that
$$
\Cone(\PP(w')(\fq))=\A^m(\fq).
$$
In particular, this implies that 
\begin{equation}\label{eq:AmPw}
\A^m(\fq)\setminus \{(0,\dots, 0)\}=\bigcup_{i=1}^{q-1} \calP_{w'}^i,
\end{equation}
where $\calP_{w'}^1, \dots, \calP_{w'}^{q-1}$ are disjoint sets of representatives for $\PP(w')(\fq)$. Using both Equations ~\eqref{eq:1xA} and \eqref{eq:AmPw}, we can choose the following ordered set of representatives $\calP_w$ for the $\fq$-points on $\PP(w)$:
\begin{equation}\label{eq:nice_rep}
\begin{aligned}
\calP_w = &\bigsqcup_{i=1}^{q-1} \set{(1,y_1,\dots,y_m),\: (y_1,\dots,y_m) \in \calP^i_{w'}} \\
&\sqcup \set{(1,0,\dots,0)} \sqcup \set{(0,y_1,\dots,y_m),\: (y_1,\dots,y_m) \in \calP^1_{w'}}.
\end{aligned}
\end{equation}
To make the recursive construction more explicit, we choose the disjoint sets of representatives of $\PP(w')(\fq)$ as follows. Take $\calP^1_{w'}$ be a set of representatives of $\PP(w')(\fq)$. For every $Q \in \calP^1_{w'}$, we fix $\lambda_Q$ as in Lemma~\ref{l:lambdaQ} and for $2 \leq i \leq q-1$, we set
\begin{equation}\label{eq:lambdaQ}
\calP_{w'}^i:=\{ \lambda_Q^{i-1}\cdot Q:Q\in \calP^1_{w'}\}.
\end{equation}

For the case $w=(1,\dots,1)$, it is always possible to obtain the sets $\calP_{w'}^2,\dots,\calP_{w'}^{q-1}$ with the form $\calP_{w'}^i=\{ \lambda^{i}\cdot Q:Q\in \calP^1_{w'}\}$ for a common $\lambda$ (which is a primitive element of $\fq^*$ in this case). This feature is used for the recursive construction of PRM codes, see \cite{sanjoseRecursivePRM}. In the next example, we show that this may not be possible for general weights.

\begin{ex}\label{ex:P23}
In $\PP(2,3)(\F_3)$, we have $-1 \cdot (1,1)=(1,-1)$ so the points $[1:1]$ and $[1:-1]$ are equal. Let us consider the set of representatives
$$
\calP_{(2,3)}=\{(1,0),(0,1),(1,1),(-1,-1)\}.
$$
Any other set of representatives $\calP_{(2,3)}'$ contains both $(-1,0)$ and $(0,-1)$. However, if we assume that there is $\lambda \in \overline{\fq}$ such that $\calP_{(2,3)}'=\lambda \cdot \calP_{(2,3)}$, we get
$$
\begin{aligned}
&\lambda \cdot (1,0)=(\lambda^2,0)=(-1,0)\iff \lambda^2=-1, \\
&\lambda \cdot (0,1)=(0,\lambda^3)=(0,-1)\iff\lambda^3=-1.
\end{aligned}
$$
This is a contradiction, since this implies $-1=\lambda^3=\lambda^2\lambda=-\lambda$, but $1^3=1\neq -1 = \lambda^3$. For any other starting set of representatives $\calP_{(2,3)}$, an analogous argument shows that one cannot obtain another disjoint set of representatives in this manner.
\end{ex}

We use the notation $(u,v)$ to denote the concatenation of two vectors $u,v$, and we use the notation $\mathbf{0}_n$ to denote the zero vector of length $n$, whenever ambiguity may rise.

\begin{thm}\label{t:rec1weakcond}
Let $w=(w_0,\dots,w_m) \in \NN^{m+1}$ with $\gcd(w_0,q-1)=1$. Set $w'=(w_1,\dots,w_m)$. Let $\calP_w$ be the fixed ordered set of representatives of $\PP(w)(\fq)$ defined in Equation~\eqref{eq:nice_rep} with $\calP^1_{w'}$ a set of representatives of $\PP(w')(\fq)$, and $\calP^2_{w'},\dots,\calP^{q-1}_{w'}$ as in Equation~\eqref{eq:lambdaQ}. Set
\begin{equation}\label{eq:Lambda(i)}
\Lambda(i):=\left(\lambda_Q^{(i-1)d}\right)_{Q\in \calP^1_{w'}} \in (\fq^*)^{p_{m-1}}  . 
\end{equation}
Then
$$
\begin{aligned}
\WPRM_d(w)=\{ (u+v_\Lambda,v):\; &u\in \WRM_{d-w_0}(w_0;w'),v\in \WPRM_d(w') \},
\end{aligned}
$$
where $v_\Lambda:=v\times \Lambda(2) \star  v\times \cdots \times \Lambda(q-1) \star v\times \{0\}=(v,\Lambda(2) \star  v, \Lambda(3) \star  v,\dots,\Lambda(q-1) \star v,\mathbf{0}_1)$. 
\end{thm}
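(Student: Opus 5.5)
The proof works at the level of polynomials: split an arbitrary $f\in\fq[x_0,\dots,x_m]^w_d$ according to whether its monomials involve $x_0$, then evaluate at the blocks of the ordered set $\calP_w$ from \eqref{eq:nice_rep}. Every monomial of weighted degree $d$ either is divisible by $x_0$ or lies in $\fq[x_1,\dots,x_m]$. Hence $f$ can be written uniquely as
\[
f=x_0\,g+h,\qquad g\in \fq[x_0,\dots,x_m]^w_{d-w_0},\quad h\in \fq[x_1,\dots,x_m]^{w'}_d .
\]
Here $g=0$ if $d<w_0$. Conversely, any pair $(g,h)$ of this form gives an element of $\fq[x_0,\dots,x_m]^w_d$. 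So it suffices to compute $\ev_{\calP_w}(x_0g)$ and $\ev_{\calP_w}(h)$ separately and to check that, as $g$ and $h$ vary independently, they produce exactly the vectors $(u,\mathbf 0_{p_{m-1}})$ and $(v_\Lambda,v)$.

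\textbf{Evaluating $h$.} At a point $(0,Q)$ with $Q\in\calP^1_{w'}$ we get $h(Q)$. Thus the last block of $\ev_{\calP_w}(h)$ is $v:=\ev_{\calP^1_{w'}}(h)$, which ranges over all of $\WPRM_d(w')$. At a point $(1,y)$ with $y=\lambda_Q^{i-1}\cdot Q\in\calP^i_{w'}$, homogeneity of $h$ gives $h(y)=\lambda_Q^{(i-1)d}h(Q)$; this is the same computation as in the proof of Lemma~\ref{l:change_of_representatives}. So the $i$-th block equals $\Lambda(i)\star v$. At the point $(1,0,\dots,0)$ we get $h(0)=0$, because every monomial of $h$ has positive degree when $d\geq 1$. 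This is exactly $v_\Lambda$. The case $d=0$ has to be set aside or treated separately.

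\textbf{Evaluating $x_0g$.} This vanishes on the last block $\{0\}\times\calP^1_{w'}$. On the remaining points, which all have first coordinate $1$, it equals $g(1,y)$. By \eqref{eq:AmPw}, the $y$'s run through $\calP^1_{w'},\dots,\calP^{q-1}_{w'}$ followed by $0$, i.e.\ through all of $\A^m(\fq)$. We fix the enumeration of $\A^m(\fq)$ to be exactly this order. Then this part of the vector is $\ev_{\A^m}(g(1,x_1,\dots,x_m))$.

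It remains to show that the dehomogenizations $g(1,x')$ span precisely $\WRM_{d-w_0}(w_0;w')$. A monomial $x_0^{a}x^{\alpha}$ of $w$-degree $d-w_0$ dehomogenizes to $x^\alpha$, which satisfies $\sum_i w_i\alpha_i=d-w_0-aw_0\le d-w_0$ and $\sum_i w_i\alpha_i\equiv d-w_0\bmod w_0$. Conversely, any $\alpha$ with these two properties determines a unique $a\ge0$ restoring degree $d-w_0$. Hence the dehomogenized monomials are exactly those in \eqref{eq:defWRMweird}, so $u=\ev_{\A^m}(g(1,x'))$ ranges over all of $\WRM_{d-w_0}(w_0;w')$. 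Since $g$ and $h$ are independent, summing the two evaluations gives the stated equality of sets.

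The main obstacle is bookkeeping rather than a conceptual step. One must make sure that the order of $\A^m(\fq)$ used to define $\WRM$ matches the concatenation $\calP^1_{w'},\dots,\calP^{q-1}_{w'},\{0\}$ coming from \eqref{eq:nice_rep}. One must also confirm that the scaling factor on block $i$ is $\lambda_Q^{(i-1)d}$, consistent with \eqref{eq:lambdaQ}, which may differ from point to point as Example~\ref{ex:P23} shows. Finally, the degenerate cases $d<w_0$ (where the $\WRM$ part is $\{0\}$) and $d=0$ need a separate remark. The hypothesis $\gcd(w_0,q-1)=1$ enters only through the decomposition \eqref{eq:1xA}, which guarantees that every point with $x_0\neq0$ has a representative with $x_0=1$.
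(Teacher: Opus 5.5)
Your proposal is correct and follows essentially the same route as the paper: write $f=x_0f'+g$, identify $\ev_{\calP_w}(x_0f')$ with $(u,\mathbf{0}_{p_{m-1}})$ by dehomogenizing and homogenizing in $x_0$, and obtain $(v_\Lambda,v)$ from $g$ via the scaling argument of Lemma~\ref{l:change_of_representatives}. Your explicit remarks go slightly beyond the paper, which leaves them implicit: you match the enumeration of $\A^m(\fq)$ to the block order of \eqref{eq:nice_rep}, and you set aside $d=0$, where the constant polynomial does not vanish at $(1,0,\dots,0)$ and the stated formula indeed fails.
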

\begin{proof}
For any $f\in \fq[x_0,\dots,x_m]_d^w$,
\begin{equation}\label{eq:decomposition}
f=x_0f'+g
\end{equation}
with $f'\in \fq[x_0,\dots,x_m]_{d-w_0}^{w}$ and $g\in \fq[x_1,\dots,x_m]_d^{w'}$. Given the choice of the representatives $\calP_w$ (see Equation~\eqref{eq:nice_rep}), we have
$$
\ev_{\calP_w}(x_0f')=(u, \mathbf{0}_{p_{m-1}}),
$$
where $u\in \WRM_{d-w_0}(w_0;w')$. Indeed, $f'$ is homogeneous of degree $d-w_0$, and its evaluation on $\{1\}\times \A^m(\fq)$ is the same as the evaluation of $f'(1,x_1,\dots,x_m)$ in $\A^m(\fq)$. This is a polynomial of weighted degree lower than or equal to $d-w_0$, and all of the monomials in its support have degree equivalent to $d$ modulo $w_0$. 

On the other hand, considering Equations~\eqref{eq:nice_rep} and \eqref{eq:lambdaQ} together with Lemma \ref{l:change_of_representatives}, we obtain
$$
\ev_{\calP_w}(g)=v_{\Lambda}\times v=(v_\Lambda,v),
$$
where $v\in \WPRM_d(w')$. This concludes the proof of the fact that, given $f\in \fq[x_0,\dots,x_m]_d^w$, its evaluation in $\calP_w$ is of the form $(u+v_\Lambda,v)$, with $u\in \WRM_{d-w_0}(w_0;w'),v\in \WPRM_d(w')$. 

Reciprocally, let us prove now that any vector of that form is in $\WPRM_d(w)$. Set $u\in \WRM_{d-w_0}(w_0;w')$, and let $f'$ be the polynomial with weighted degree lower than or equal to $d-w_0$ whose evaluation in $\A^m(\fq)$ is $u$. Since all the monomials in the support of $f'$ have degree equivalent to $d$ modulo $w_0$, this polynomial can be homogenized with the variable $x_0$ to a homogeneous polynomial of degree $d$, and its evaluation in $\calP_w$ is the vector $(u, \mathbf{0}_{p_{m-1}})$, which is thus in $\WPRM_d(w)$. Given $v\in \WPRM_d(w')$, there is a homogeneous polynomial $g$ in the variables $x_1,\dots,x_m$ whose evaluation in $\calP_{w'}$ is $v$. Arguing as above, the evaluation of $g$ in $\calP_w$ is $(v_{\Lambda},v)$, and this vector is in $\WPRM_d(w)$. 
\end{proof}

\begin{cor}\label{cor:delorme_not_divisible}
Consider the setting from Corollary \ref{c:codes_delorme} and assume also that $\gcd(w_0,q-1)=1$. If $\gamma\nmid d$, then 
$$
\WPRM_d(w)=\WRM_{d-w_0}(w_0;w')\times \{ {\bf 0 }_{p_{m-1
}}\}=\WRM_{d_0}(w_0;w'/\gamma)\times \{ {\bf 0 }_{p_{m-1
}}\}.
$$  
\end{cor}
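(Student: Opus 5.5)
The plan is to combine Theorem \ref{t:rec1weakcond} with Lemma \ref{l:delorme_weird_WRM}. The main work is showing that the component $v$ coming from $\WPRM_d(w')$ is always zero when $\gamma\nmid d$.

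\textbf{Step 1: the $\WPRM_d(w')$ component vanishes.} The vector $w'=(w_1,\dots,w_m)$ has $\gcd(w')=\gamma$. By Lemma \ref{l:weightswithgcd}, or directly: every monomial in $x_1,\dots,x_m$ has weighted degree divisible by $\gamma$. Since $\gamma\nmid d$, we get $\fq[x_1,\dots,x_m]^{w'}_d=\{0\}$, and hence $\WPRM_d(w')=\{\mathbf{0}_{p_{m-1}}\}$ (Corollary \ref{c:codes_weightswithgcd}). Theorem \ref{t:rec1weakcond} applies because $\gcd(w_0,q-1)=1$. With $v=0$ we have $v_\Lambda=0$, so the theorem gives
\[
\WPRM_d(w)=\{(u,\mathbf{0}_{p_{m-1}}):u\in\WRM_{d-w_0}(w_0;w')\}.
\]
This is the first equality.

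\textbf{Step 2: rewrite $d-w_0$ in Delorme form.} Write $d=\alpha_0w_0+d_0\gamma$ with $0\le\alpha_0<\gamma$. Because $\gamma\nmid d$, we have $\alpha_0\ge1$, so $d-w_0=(\alpha_0-1)w_0+d_0\gamma$ with $0\le\alpha_0-1<\gamma$. This is exactly the unique decomposition required in Lemma \ref{l:delorme_weird_WRM}, applied to the degree $d-w_0$. That lemma then gives
\[
\WRM_{d-w_0}(w_0;w')=\WRM_{d_0}(w_0;w'/\gamma),
\]
which is the second equality.

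\textbf{Checks.} Lemma \ref{l:delorme_weird_WRM} needs $\gcd(w_0,\gamma)=1$, which is part of the hypotheses of Corollary \ref{c:codes_delorme}. I would also check the edge case $d-w_0<0$, i.e.\ $d<w_0$. Then $d=\alpha_0w_0+d_0\gamma$ with $\alpha_0\ge1$ forces $d_0<0$. In that case both WRM codes are empty (zero), because no $\alpha$ satisfies $\sum_i w_i\alpha_i\le d-w_0<0$. Also $\WPRM_d(w)=0$, since every monomial of degree $d$ would have to involve $x_0$, which is impossible when $d<w_0$ and $\gamma\nmid d$. So the statement holds trivially there.

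The only subtle point is noticing that $\alpha_0\ge1$, so that the shift by $w_0$ keeps the remainder in the range $[0,\gamma)$. That is what makes the same $d_0$ appear in the final expression.
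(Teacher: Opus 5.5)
Your proof is correct, but it takes a somewhat different route from the paper's.

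The paper works at the level of polynomials. By Lemma \ref{l:delorme}, every $f\in\fq[x_0,\dots,x_m]^w_d$ factors as $f=x_0^{\alpha_0}h$ with $\alpha_0>0$. This single factorization does two things. It forces $g=0$ in the decomposition $f=x_0f'+g$, which gives the first equality. It also shows that the affine evaluation of $f$ equals that of $h$, the pullback of a degree-$d_0$ polynomial for the weights $(w_0,w'/\gamma)$. That gives the second equality, with the identification of dehomogenized degree-$d_0$ polynomials with $\WRM_{d_0}(w_0;w'/\gamma)$ left implicit.

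You instead stay at the level of codes. You get $\WPRM_d(w')=\{\mathbf{0}_{p_{m-1}}\}$ from the elementary gcd reduction (Corollary \ref{c:codes_weightswithgcd}), since $\gcd(w')=\gamma\nmid d$. You then read the first equality directly off the statement of Theorem \ref{t:rec1weakcond}. For the second equality you apply the already-proved Lemma \ref{l:delorme_weird_WRM} to the degree $d-w_0$. The key observation is that $\gamma\nmid d$ forces $\alpha_0\geq 1$, so $d-w_0=(\alpha_0-1)w_0+d_0\gamma$ is again the normalized decomposition, with the same $d_0$.

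What each route buys: yours is more modular and makes explicit why the same $d_0$ as in the decomposition of $d$ appears. You also correctly treat the edge case $d<w_0$, where Lemma \ref{l:delorme_weird_WRM}, stated only for nonnegative degrees, cannot be invoked directly. The paper's argument is shorter because one factorization yields both equalities at once.
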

\begin{proof}
By Lemma~\ref{l:delorme}, any $f \in \fq[x_0,\dots,x_m]^w_d$ can be written as $f=x_0^{\alpha_0}h$ with $h \in \fq[x_0,\dots,x_m]^{(w_0,w_1/\gamma,\dots,w_m/\gamma)}_{d_0}$. Note that $\alpha_0>0$. In Equation~\eqref{eq:decomposition}, we obtain $g=0$, which gives the first equality. Since $\gcd(w_0,q-1)=1$, we also have $\ev_{\A^m}(f)=\ev_{\A^m}(f')=\ev_{\A^m}(h)$, which gives the second equality. 
\end{proof}

If there are only two weights $w_i,w_j$ such that $\gcd(w_i,q-1)\neq 1$, $\gcd(w_j,q-1)\neq 1$, without loss of generality we may assume that $i=m-1$ and $j=m$. Then we may apply Theorem~\ref{t:rec1weakcond} to get a complete recursive construction, which will eventually involve $\WPRM_d(w_{m-1},w_m)$, whose structure and parameters are fully known (see Subsection \ref{ss:wprs}). Otherwise, Theorem~\ref{t:rec1weakcond} cannot be used to its full extent, but we can leverage Corollaries~\ref{c:codes_weightswithgcd} and \ref{c:codes_delorme} to improve the applicability of the recursive construction to the component codes, as we show in the next example. We also give an example in which we cannot say anything with Theorem \ref{t:rec1weakcond}.  

\begin{ex}\label{ex:tree}
Let us apply the recursive construction to $\WPRM_d(1,2,3,6)$ (see Figure~\ref{fig:tree} for a visual summary). By Theorem~\ref{t:rec1weakcond}, we can construct $\WPRM_d(1,2,3,6)$ with $\WRM_{d-1}(1;2,3,6)=\WRM_{d-1}(2,3,6)$ and $\WPRM_d(2,3,6)$. Now $w'=(2,3,6)$ is not well-formed and we can apply Corollary~\ref{c:codes_delorme} twice: writing $d=2 \alpha_1 + 3 \alpha_2 + 6 d_1$ with $0 \leq \alpha_1 <3$ and $0 \leq \alpha_2 <2$, we get 
\[
\begin{aligned}
\WPRM_d(2,3,6)&=\ev_{\calP_{w'}}(x_1^{\alpha_1})\star \WPRM_{\alpha_2+2d_1}(2,1,2)\\
&= \ev_{\calP_{w'}}(x_1^{\alpha_1}) \star \ev_{\varphi(\calP_{w'})}(x_2^{\alpha_2})\star \PRM_{d_1}(2)\\
&= \ev_{\calP_{w'}}(x_1^{\alpha_1} x_2^{\alpha_2})\star \PRM_{d_1}(2),
\end{aligned}
\]
where
\[
\begin{array}{lccc}
\varphi: &\PP(2,3,6) & \to & \PP(2,1,2) \\
& (Q_0:Q_1:Q_2) & \mapsto & (Q_0^3:Q_1:Q_2).
\end{array}
\]
We can then apply Theorem~\ref{t:rec1weakcond} (or equivalently \cite[Theorem~3.1]{sanjoseRecursivePRM}) to $\PRM_{d_1}(2)$ and construct it from $\RM_{d_1-1}(2)$ and $\PRS_{d_1}$.
If $\gcd(2,q-1)=1$ (resp., $\gcd(3,q-1)=1$), then we can also apply Theorem~\ref{t:rec1weakcond} to $\WPRM_d(2,3,6)$ (resp., $\WPRM_d(3,2,6)$, which is the same code), to construct this code from $\WRM_{d-2}(2;3,6)$ and $\WPRS_d(3,6)$ (resp., $\WRM_{d-3}(3;2,6)$ and $\WPRS_d(2,6)$). 
\end{ex}

\begin{ex}
    Let $q=31$ and $w=(2,3,5)$. We cannot apply any weight reduction, and all the weights have nontrivial greatest common divisor with $q-1=30=\lcm(w)$, which forbids any use of Theorem~\ref{t:rec1weakcond}.
\end{ex}

\begin{figure}
    \centering
    \includegraphics[width=\textwidth]{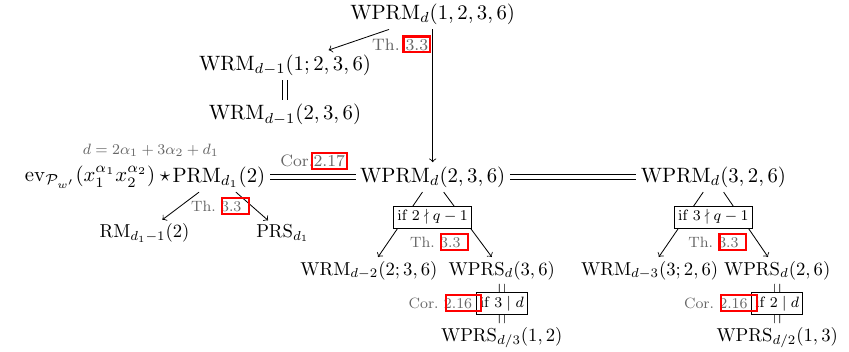}
    
    \caption{Tree of possible weight reductions (via Corollaries \ref{c:codes_weightswithgcd} and \ref{c:codes_delorme}), and decompositions (via Theorem \ref{t:rec1weakcond}) corresponding to Example \ref{ex:tree}.}
    \label{fig:tree}
\end{figure}

In what follows, we will derive properties for $\WPRM_d(w)$ using Theorem \ref{t:rec1weakcond}, for the first step of the recursion. Arguing as in Examples \ref{ex:tree}, in many cases we will be able to keep using the recursive construction until only codes with known parameters remain. 

\begin{cor}\label{c:dimension}
Let $w=(w_0,\dots,w_m) \in \NN^{m+1}$ with $\gcd(w_0,q-1)=1$, and $w'=(w_1,\dots,w_m)$. Then
$$
\dim \WPRM_d(w)=\dim \WRM_{d-w_0}(w_0;w')+\dim \WPRM_d(w').
$$
\end{cor}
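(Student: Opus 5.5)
The plan is to read off the dimension directly from the description of $\WPRM_d(w)$ given in Theorem \ref{t:rec1weakcond}. That theorem states that $\WPRM_d(w)$ equals the image of the map
$$
\Phi: \WRM_{d-w_0}(w_0;w')\times \WPRM_d(w') \to \fq^{p_m},\qquad (u,v)\mapsto (u+v_\Lambda,v),
$$
where $v_\Lambda=(v,\Lambda(2)\star v,\dots,\Lambda(q-1)\star v,\mathbf{0}_1)$. Since $\dim$ of the domain is $\dim \WRM_{d-w_0}(w_0;w')+\dim \WPRM_d(w')$, it suffices to show that $\Phi$ is $\fq$-linear and injective.

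Linearity is immediate: $v\mapsto v_\Lambda$ is linear, because each block $\Lambda(i)\star v$ is a componentwise product with a fixed vector, and concatenation is linear. For injectivity, suppose $\Phi(u,v)=0$. The last $p_{m-1}$ coordinates give $v=0$, hence $v_\Lambda=0$, and the first block then gives $u=0$. So $\Phi$ is a linear isomorphism onto $\WPRM_d(w)$, which yields the claimed equality of dimensions.

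There is no real obstacle here. The only point to check is the bookkeeping of lengths: $u$ has length $q^m$, matching the $\{1\}\times\A^m(\fq)$ block of $\calP_w$ in \eqref{eq:nice_rep}; $v_\Lambda$ also has length $(q-1)p_{m-1}+1=q^m$; and $v$ has length $p_{m-1}$, corresponding to the points at infinity. This ensures the two components occupy disjoint coordinate blocks, as used in the injectivity argument. No assumption on $d$ relative to $q$ is needed, since the statement concerns the codes themselves rather than the polynomial spaces.
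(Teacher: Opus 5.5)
Your proposal is correct and follows essentially the paper's argument. The paper also reads the dimension formula off Theorem \ref{t:rec1weakcond}, noting that the vectors $(v_\Lambda,v)$ and $(u,\mathbf{0}_{p_{m-1}})$ are linearly independent. That is exactly your injectivity check for $\Phi$.
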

\begin{proof}
This is a direct consequence of Theorem \ref{t:rec1weakcond}, taking into account that the vectors of the form $(v_\Lambda,v)$ and the vectors of the form $(u,\mathbf{0}_{p_{m-1}})$ are linearly independent.  
\end{proof}

Given a subfield $\F_{q'}$ of $\fq$, and a code $C\subset \fq^n$, its subfield subcode with respect to this field extension is $C_{q'}:=C\cap \F_{q'}^n$. This is a well-known technique to obtain long codes over smaller finite field sizes, and many families of codes with good parameters can be obtained in this way \cite{bierbrauercyclic,sanjoseSSCPRM,sanjoseSSCPRS,goppaOriginal}, and provide good candidates for the McEliece cryptosystem \cite{nardiSSCAG,couvreurSSCMcEliece,McEliece1978}. We have the following result on the subfield subcodes of weighted projective Reed-Muller codes.

\begin{cor}\label{c:ssc}
Let $w=(w_0,\dots,w_m) \in \NN^{m+1}$, $w'=(w_1,\dots,w_m)$ and $\F_{q'}\subset \fq$. If $d=\ell \frac{(q-1)\lcm(w)}{q'-1}$, for some $\ell \geq 1$, then, with the notation as in Theorem \ref{t:rec1weakcond}, we have
$$
(\WPRM_d(w))_{q'}=\{ (u+v_\Lambda,v):\; u\in (\WRM_{d-w_0}(w_0;w'))_{q'},v\in (\WPRM_d(w'))_{q'} \}. 
$$
As a consequence,
$$
\dim (\WPRM_d(w))_{q'}=\dim (\WRM_{d-w_0}(w_0;w'))_{q'}+\dim (\WPRM_d(w'))_{q'}.
$$
\end{cor}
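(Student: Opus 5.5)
The plan is to check that the twisting vectors $\Lambda(i)$ of Theorem \ref{t:rec1weakcond} have all their entries in the subfield $\F_{q'}$ under the hypothesis on $d$. Once that holds, membership in $\F_{q'}^n$ can be tested block by block. Throughout we work in the setting of Theorem \ref{t:rec1weakcond}, so $\gcd(w_0,q-1)=1$ and the representatives $\calP_w$ are chosen as in Equations~\eqref{eq:nice_rep} and \eqref{eq:lambdaQ}.

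\textbf{Step 1: the entries of $\Lambda(i)$ lie in $\F_{q'}^*$.} Fix $Q\in\calP^1_{w'}$ and let $g_Q=\gcd(w_j : j\in\supp(Q))$, computed with the weights $w_1,\dots,w_m$. By Lemma \ref{l:lambdaQ}, $\lambda_Q^{g_Q}=\xi$, where $\xi$ is a primitive element of $\fq$. Since $g_Q\mid \lcm(w)$, we can write
$$
d=g_Q\,k_Q, \qquad k_Q=\ell\,\frac{\lcm(w)}{g_Q}\cdot\frac{q-1}{q'-1}\in\NN .
$$
Hence $\lambda_Q^{d}=\xi^{k_Q}$, which is a power of $\xi^{(q-1)/(q'-1)}$. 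That element generates $\F_{q'}^*$, so $\lambda_Q^{(i-1)d}\in\F_{q'}^*$ for every $i$. Therefore $\Lambda(i)\in(\F_{q'}^*)^{p_{m-1}}$ for $2\le i\le q-1$. This is the only place where the hypothesis on $d$ is used, and it is the step I expect to need the most care: one has to check that the divisibility $g_Q\mid \lcm(w)$ produces an integer exponent that is a multiple of $(q-1)/(q'-1)$.

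\textbf{Step 2: set equality.} Take $c\in(\WPRM_d(w))_{q'}$. By Theorem \ref{t:rec1weakcond}, $c=(u+v_\Lambda,v)$ with $u\in\WRM_{d-w_0}(w_0;w')$ and $v\in\WPRM_d(w')$. The vector $v$ is exactly the last block of $c$, so $v\in\F_{q'}^{p_{m-1}}$. By Step 1, $v_\Lambda$ is a concatenation of $\F_{q'}$-multiples of $v$ (and a zero), so it also has entries in $\F_{q'}$. Consequently $u=(\text{first block of } c)-v_\Lambda$ has entries in $\F_{q'}$, and $u\in(\WRM_{d-w_0}(w_0;w'))_{q'}$. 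Conversely, if $u$ and $v$ lie in the respective subfield subcodes, then $(u+v_\Lambda,v)$ belongs to $\WPRM_d(w)$ by Theorem \ref{t:rec1weakcond}, and it has entries in $\F_{q'}$ by Step 1.

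\textbf{Step 3: dimension.} Consider the $\F_{q'}$-linear map $(u,v)\mapsto(u+v_\Lambda,v)$. It is injective: $v$ is recovered as the last block, and then $u$ is recovered by subtracting $v_\Lambda$. By Step 2 it is surjective onto $(\WPRM_d(w))_{q'}$. This gives the dimension formula, exactly as in Corollary \ref{c:dimension}.
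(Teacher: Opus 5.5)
Your proposal is correct and follows essentially the same route as the paper. You show that every entry of $\Lambda(i)$ lies in $\F_{q'}$ by writing $\lambda_Q^{d}$ as a power of $\xi^{(q-1)/(q'-1)}$, whereas the paper checks $\lambda_Q^{(q'-1)(i-1)d}=1$, which is the same computation; you then test membership in $\F_{q'}^n$ block by block and get the dimension formula exactly as in Corollary \ref{c:dimension}.
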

\begin{proof}
First, we show that $\Lambda(i)\in \F_{q'}^{p_m}$, for $1\leq i \leq q-1$. Following the setting from Lemma \ref{l:lambdaQ}, for each point $Q\in \PP(w)(\fq)$, we choose $\lambda_Q$ as a root of $x^{\gcd(w_i:i\in \supp(Q))}-\xi$, where $\xi$ is a primitive element of $\fq$. Then 
$$
\lambda_Q^{(q'-1)(i-1)d}=\lambda_Q^{\ell(i-1)(q-1)\lcm(w)}=1,
$$
which implies that $\lambda_Q^{(i-1)d}\in \F_{q'}$, i.e., $\Lambda(i)\in \F_{q'}^{p_m}$. If $u$ and $v$ have their entries in $\F_{q'}$, it is clear that $(u+v_\Lambda,v)\in \F_{q'}^{p_m}$. Reciprocally, if $(u+v_\Lambda,v)\in \F_{q'}^{p_m}$, then $v\in \F_{q'}^{p_{m-1}}$, which implies $v_\Lambda\in \F_{q'}^{q^m}$ and $u\in \F_{q'}^{q^m}$. The statement about the dimension follows as in Corollary \ref{c:dimension}.
\end{proof}

To apply the previous result recursively, we also need to understand the subfield subcodes of weighted Reed-Muller codes. These can be seen as a particular case of J-affine variety codes, for which we have bounds for the minimum distance and formulas for the dimension \cite{galindostabilizer}.

With respect to the minimum distance and the GHWs of WPRM codes, we have an analogous result to \cite[Thm. 7]{sanjoseRecursivePRM} or \cite[Thm. 3.1]{sanjoseGHWMPC}. By convention, for the following result we will define $d_0(C)=0$, $d_r(C)=\infty$ if $r>\dim(C)$, and the $r$-th GHW of the zero code is defined to be $0$, for any $r$. We also consider that $\WPRM_0(w)=\set{ \mathbf{0}_{p_m}
}$. 

\begin{thm}\label{t:boundghw}
Let $d\geq 1$, $1\leq r\leq \dim(\WPRM_d(w))$, $w=(w_0,\dots,w_m) \in \NN^{m+1}$ with $\gcd(w_0,q-1)=1$, and $w'=(w_1,\dots,w_m)$. Set
\[
E=\begin{cases}
\WRM_{d-(q-1)\max\{w_0,\min(w')\}}(w')&\text{if } d>(q-1)\max\{w_0,\min(w')\},\\
\set{ \mathbf{0}_{q^m}}& \text{otherwise.} 
\end{cases}
\]
Let $R:=\{0,\dots,r\}\times \{0,\dots,r\}$, and 
$$
    Y = \left\{(\alpha_1,\alpha_2) \in R: \begin{array}{c}
    r-\dim\WRM_d(w_0;w') \leq  \alpha_1 \leq  \dim E\\
     r-\dim \WPRM_d(w') \leq \alpha_2 \leq  \dim \WRM_{d-w_0}(w_0;w')  \\
        \alpha_1+\alpha_2\leq r
        \end{array}\right\}.
$$
Then
$$
d_r(\WPRM_d(w))\geq \min_{(\alpha_1,\alpha_2)\in Y}B_{\alpha_1,\alpha_2},
$$
where 
$$
\begin{aligned}
B_{\alpha_1,\alpha_2}:=& \max\{d_{r-\alpha_1}(\WRM_d(w_0;w')),d_{\alpha_2}(\WRM_{d-w_0}(w_0;w'))\} \\
&+\max\set{\left\lceil \frac{d_{\alpha_1}\left(E\right)}{q-1}\right\rceil,d_{r-\alpha_2}(\WPRM_d(w'))}.
\end{aligned}
$$
\end{thm}
\begin{proof}
Let $D\subset \WPRM_d(w)$ be a subcode with $\dim D=r$. Following the notation from Theorem \ref{t:rec1weakcond}, we assume that  $u\in \WRM_{d-w_0}(w_0;w')$ and $ v\in \WPRM_d(w')$ in what follows. We define
$$
\begin{aligned}
&D_1:=\{(u+v_\Lambda, v)\in D: u+v_\Lambda=\mathbf{0}_{q^m}\},\\
&D_2:=\{(u+v_\Lambda, v)\in D: v=\mathbf{0}_{p_{m-1}}\}.
\end{aligned}
$$
We also consider $D_3$ such that $D_1\oplus D_2\oplus D_3=D$. 

Let $\alpha_i:=\dim D_i$, $1\leq i \leq 2$. Note that $(u+v_\Lambda, v)\in D_3\setminus\set{\mathbf{0}_{p_m}}$ if and only if $u+v_\Lambda\neq \mathbf{0}_{q^m}$ and $v\neq \mathbf{0}_{p_m-1}$. Using the decomposition from Equation~(\ref{eq:1xA}), we split $\supp(D)$ into 
\begin{equation}\label{eq:split}
    \supp(D)= \supp_{\text{aff}}(D) \sqcup \supp_\infty(D)
\end{equation}
with
\[\begin{aligned}
    \supp_{\text{aff}}(D)&:=\supp(D)\cap\{1,\dots,q^m\},\\
    \supp_\infty(D)&:=\supp(D)\cap \{q^m+1,\dots,p_m\}.
\end{aligned}\]
First, we bound $\abs{\supp_{\text{aff}}(D)}$. On one hand, note that 
\begin{equation}\label{eq:aff1}
\abs{\supp_{\text{aff}}(D)}=\abs{\supp_{\text{aff}}(D_2\oplus D_3)}\geq \abs{\supp(D_2)}\geq d_{\alpha_2}(\WRM_{d-w_0}(w_0;w')), 
\end{equation}
since $D_2\subset \WRM_{d-w_0}(w_0;w')\times \set{\mathbf{0}_{p_{m-1}}}$. On the other hand, any vector $(u+v_\Lambda,v)\in \WPRM_d(w)$ satisfies that $u\in \WRM_{d-w_0}(w_0;w')\subset \WRM_d(w_0;w')$ and that $v_\Lambda$ is the evaluation of a homogeneous polynomial of degree $d$ in the variables $x_1,\dots,x_m$ at $\A^m(\fq)$, i.e., $v_\Lambda\in \WRM_d(w_0;w')$. This implies that 
\begin{equation}\label{eq:aff2}
\abs{\supp_{\text{aff}}(D)}=\abs{\supp_{\text{aff}}(D_2\oplus D_3)}\geq d_{r-\alpha_1}(\WRM_d(w_0;w')), 
\end{equation}
since $\dim(D_2\oplus D_3)= r-\alpha_1$. Gathering \eqref{eq:aff1} and \eqref{eq:aff2}, we conclude that 
$$
\abs{\supp_{\text{aff}}(D)}\geq \max\{d_{r-\alpha_1}(\WRM_d(w_0;w')),d_{\alpha_2}(\WRM_{d-w_0}(w_0;w'))\}.
$$
With respect to $\abs{\supp_{\infty}(D)}$, we have
$$
\abs{\supp_{\infty}(D)}  =\abs{\supp_{\infty}(D_1\oplus D_3)}  \geq d_{r-\alpha_2}(\WPRM_d(w')),
$$
because $\dim(D_1\oplus D_3)=r-\alpha_2$ and the last $p_m-q^m=p_{m-1}$ coordinates of any vector in $\WPRM_d(w)$ belong to $\WPRM_d(w')$. On the other hand, we also have
\begin{equation}\label{eq:lowerbound_D1}
\abs{\supp_{\infty}(D)}=\abs{\supp_{\infty}(D_1\oplus D_3)}\geq \abs{\supp(D_1)}\geq \left\lceil \frac{d_{\alpha_1}\left(E\right)}{q-1}\right\rceil.
\end{equation}
We only need to prove the last inequality.
Let $f \in \fq[x_0,\dots,x_m]^w_d$ such that $\ev_{\calP^m}(f)=(u+v_\Lambda,v) \in D_1$. Set $f'$ and $g$ as in the proof of Theorem \ref{t:rec1weakcond}, and $f''=f'(1,x_1,\dots,x_m)$. 
Then $\ev_{\A^m}(f'')=u$, $\ev_{\A^m}(g)=v_\Lambda$ and the condition $u+ v_\Lambda=0$ implies that
\begin{equation}\label{eq:reductionIA}
f''\equiv -g\bmod \langle x_1^q-x_1,\dots,x_m^q-x_m\rangle.
\end{equation}
Let $\overline{f''}$ and $\overline{g}$ be the polynomials obtained from $f''$ and $g$, respectively, where all the monomials have their exponents reduced modulo $q-1$. Then $\overline{f''}=-\overline{g}$. Since $f''$ is a polynomial of degree at most $d-{w_0}$, we obtain that $\overline{g}$ is of degree at most $d-{w_0}<d$, and since $g$ is homogeneous of degree $d$, then all the monomials in $g$ can be reduced modulo $\langle x_1^q-x_1,\dots,x_m^q-x_m\rangle$, and they have degree $\geq 1$. 
This means that the degree of $\overline{g}$ is, at most, $d-(q-1)\min(w')$. However, we can be more precise. The degree of a monomial from $\overline{f''}$ can be written as $d-\lambda_0w_0-(q-1)\sum_{i\geq 1}\lambda_i w_i$, for some $\lambda_i\geq 0$, $\lambda_0\geq 1$. Analogously, a monomial from $\overline{g}$ has degree $d-(q-1)\sum_{i\geq 1}\mu_iw_i$, for some $\mu_i\geq 0$, where not all $\mu_i$ are zero. Since $\overline{f''}=-\overline{g}$, these degrees agree and we have
$$
\lambda_0w_0+(q-1)\sum_{i\geq 1}\lambda_iw_i=(q-1)\sum_{i\geq 1}\mu_iw_i.
$$
Taking into account that $\gcd(w_0,q-1)=1$, we get $(q-1)\mid \lambda_0$. If $d-(q-1)\max\{w_0,\min(w')\}\geq 1$, as $\lambda_0 \geq 1$, we actually have $\lambda_0 \geq q-1$ and $1\leq \deg(\overline{g})\leq d-(q-1)\max\{w_0,\min(w')\}$. Thus, $v_\Lambda=\ev_{\A^m}(g)=\ev_{\A^m}(\overline{g})\in \WRM_{d-(q-1)\max\{w_0,\min(w')\}}(w')=E$. 

On the other hand, now assume $d-(q-1)\max\{w_0,\min(w')\}\leq 0$. Since  $1\leq \deg(\overline{g})$ if $g\neq 0$, the only possible $\overline{g}$ satisfying the previous conditions is $\overline{g}=0=g$, and therefore $v_\Lambda\in \set{\mathbf{0}_{q^m}} = E$. Finally, since $v_\Lambda=(v,\Lambda(2) \star  v, \Lambda(3) \star  v,\dots,\Lambda(q-1) \star v,\mathbf{0}_1)$, we have $\abs{\supp(v_\Lambda)}=(q-1)\abs{\supp(v)}$, which proves the last inequality in Equation~\eqref{eq:lowerbound_D1}. 
We complete the proof by noticing that $\abs{\supp(D)}=\abs{\supp_{\text{aff}}(D)}+\abs{\supp_{\infty}(D)}$ (see Equation~\eqref{eq:split}), and $(\alpha_1,\alpha_2)\in Y$.  
\end{proof}

Let us write the previous theorem for the minimum distance, i.e., for $r=1$.

\begin{cor}\label{c:boundminimumdistance}
Let $d\geq 1$, $w=(w_0,\dots,w_m) \in \NN^{m+1}$ with $\gcd(w_0,q-1)=1$, and $w'=(w_1,\dots,w_m)$. Then
$$
\begin{aligned}
d_1(\WPRM_d(w))\geq \min \{ &d_1(\WRM_{d-w_0}(w_0;w')),d_1(\WRM_{d-(q-1)\max\{w_0,\min(w')\}}(w')),\\
&d_1(\WRM_d(w_0;w'))+\wt(\WPRM_d(w')) \}.
\end{aligned}
$$
\end{cor}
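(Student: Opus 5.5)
This corollary should follow by specializing Theorem \ref{t:boundghw} to $r=1$ and listing the admissible pairs $(\alpha_1,\alpha_2)\in Y$. For $r=1$ we have $R=\{0,1\}^2$, and the condition $\alpha_1+\alpha_2\leq 1$ leaves only $(0,0)$, $(1,0)$ and $(0,1)$. The lower bounds defining $Y$ are automatically satisfied or only remove pairs, which can only raise the minimum. So the resulting minimum is at least the minimum of $B_{\alpha_1,\alpha_2}$ over these three pairs. Strictly, the theorem gives $d_1\geq\min_Y B$ with $Y$ a subset of these three pairs. A minimum over a subset is $\geq$ the minimum over the full set, so it suffices to bound each of the three $B$'s from below by one of the three terms in the statement.

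We use the conventions $d_0(C)=0$, $d_r(C)=0$ for the zero code, and $d_r(C)=\infty$ when $r>\dim C$.

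\textbf{Pair $(1,0)$.} Here $B_{1,0}=\max\{d_0(\cdot),d_0(\cdot)\}+\max\{\lceil d_1(E)/(q-1)\rceil, d_1(\WPRM_d(w'))\}$. The first maximum is $0$, so we need to compare this with $d_1(\WRM_{d-(q-1)\max\{w_0,\min(w')\}}(w'))$.
\begin{itemize}
\item If $E$ is the zero code, then $\alpha_1=1>\dim E$ violates the constraint $\alpha_1\leq\dim E$, so $(1,0)\notin Y$ and nothing needs to be shown.
\item Otherwise, we must check that $\lceil d_1(E)/(q-1)\rceil$ compared with $d_1(E)$ still suffices. It does not directly, since $\lceil d_1(E)/(q-1)\rceil\leq d_1(E)$. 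We therefore go back to the proof of Theorem \ref{t:boundghw}. For $D_1$ of dimension $1$, the codeword satisfies $u+v_\Lambda=0$, so $u=-v_\Lambda\neq 0$ is supported in the affine part. Then $\supp_{\text{aff}}(D)$ is empty, but $\abs{\supp(v)}=\abs{\supp(v_\Lambda)}/(q-1)$. This only gives $\lceil d_1(E)/(q-1)\rceil$.
\end{itemize}
Hence I expect the intended reading of the corollary's middle term to involve this ceiling. I would first recheck whether $\supp_{\text{aff}}$ really vanishes on $D_1$. I would then state the bound with $\lceil d_1(E)/(q-1)\rceil$, or argue that $(1,0)$ is dominated by the other terms. \textbf{This bookkeeping is the main obstacle.}

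\textbf{Pair $(0,1)$.} Here $B_{0,1}=\max\{d_1(\WRM_d(w_0;w')),d_1(\WRM_{d-w_0}(w_0;w'))\}+\max\{0,d_0\}$. This is $\geq d_1(\WRM_{d-w_0}(w_0;w'))$, which is the first term in the statement.

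\textbf{Pair $(0,0)$.} This case corresponds to $D=D_3$. We get $B_{0,0}=d_1(\WRM_d(w_0;w'))+d_1(\WPRM_d(w'))$, which is exactly the third term, with $\wt(\WPRM_d(w'))$ read as its minimum weight.

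\textbf{Degenerate cases.} Finally, we treat the cases where some component code is zero. There the convention $d_r(C)=0$ would make a term trivially small, but the constraints of $Y$ exclude the corresponding pair, so the stated minimum remains valid.
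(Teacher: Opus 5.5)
Your reduction is the paper's own: its proof just notes that for $r=1$ one has $Y\subseteq\{(0,0),(1,0),(0,1)\}$ and invokes Theorem~\ref{t:boundghw}. Your values $B_{0,0}=d_1(\WRM_d(w_0;w'))+d_1(\WPRM_d(w'))$ and $B_{0,1}\geq d_1(\WRM_{d-w_0}(w_0;w'))$ are right. Your diagnosis of the pair $(1,0)$ is also right. Since $d_0=0$, the theorem only gives $B_{1,0}=\max\{\lceil d_1(E)/(q-1)\rceil,\,d_1(\WPRM_d(w'))\}$. A codeword of $D_1$ has zero affine part, so $\supp_{\text{aff}}$ does vanish there, and its weight is $\abs{\supp(v)}=\abs{\supp(v_\Lambda)}/(q-1)$. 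The paper's one-line proof passes over this point. It also cannot be repaired by showing that $(1,0)$ is dominated by the other two terms, because the corollary as printed is false.

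Take $q=3$, $w=(1,1,2)$, $w'=(1,2)$, $d=6$, so that $\gcd(w_0,q-1)=1$.
\begin{itemize}
\item The polynomial $f=x_2^3-x_1^4x_2-x_0^4x_2+x_0^2x_1^2x_2$ has weighted degree $6$ and vanishes at every $(1,a,b)$, since $b^3=b$ and $a^4=a^2$.
\item At infinity $f$ equals $x_2(x_2^2-x_1^4)$. On $\PP(1,2)(\mathbb{F}_3)=\{[1:0],[1:1],[1:2],[0:1]\}$ this is nonzero only at $[0:1]$. Hence $d_1(\WPRM_6(1,1,2))=1$.
\item The monomials $x_1^ax_2^b$ with $a+2b\leq 5$ reduce to every reduced monomial except $x_1^2x_2^2$. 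The coefficient of $x_1^2x_2^2$ in the reduced form of a function $F$ is $\sum_{P}F(P)$, so $\WRM_5(1;1,2)=\{F:\sum_{P\in\mathbb{F}_3^2}F(P)=0\}$, which has minimum distance $2$.
\item $E=\WRM_4(1,2)$ is a subcode of it and contains $\ev(x_2-x_1^2x_2)$, of weight $2$, so $d_1(E)=2$.
\item $d_1(\WRM_6(1;1,2))=1$ because that code is the full space, and $d_1(\WPRM_6(1,2))=1$ via $x_2^3-x_1^4x_2$.
\end{itemize}
The printed right-hand side is therefore $\min\{2,2,1+1\}=2>1$. Meanwhile $B_{1,0}=\max\{\lceil 2/2\rceil,1\}=1$, consistent with the theorem.

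So your argument establishes the bound with the middle term replaced by $\max\{\lceil d_1(E)/(q-1)\rceil,\,d_1(\WPRM_d(w'))\}$, dropped when $E=\{\mathbf{0}\}$. That corrected bound is the true statement, and stated that way your proof is complete.
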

\begin{proof}
In this case, we have $Y=\{(0,0),(1,0),(0,1)\}$ and we apply Theorem \ref{t:boundghw} for $r=1$, unless $\dim \WRM_{d-w_0}(w_0;w')=0$ or $\dim \WRM_d(w_0;w')=0$. In those cases, $Y$ is a subset of the aforementioned one, and, due to our conventions, the formula still holds. 
\end{proof}

As we will see in Example~\ref{ex:boundGHWsdifferentW}, the bound of Theorem~\ref{t:boundghw} may depend on the ordering of the weights (if there are several weights $w_j$ with $\gcd(w_j,q-1)=1$). Thus, for each possible $r$, we may take the maximum over the values of the bound for all the possible orderings of the weights. However, one can easily check that if the vector of weight is not well-formed, i.e., there exists $i_0 \in \set{0,\dots,m}$ such that $\gamma=\gcd(w_i, i\neq i_0) >1$ with $\gcd(\gamma,w_{i_0})=1$, and if $\gamma$ divides $d$, then the bound of Theorem~\ref{t:boundghw} is either equal or sharper when applied to the reduced code (see Corollary~\ref{c:codes_delorme}) compared to the bounds for the original code.

Note that the codes appearing in Theorem \ref{t:boundghw} and Corollary \ref{c:boundminimumdistance} are either WPRM codes, for which we may be able to apply the results again recursively, or WRM codes (with the standard definition, or that from (\ref{eq:defWRMweird})). WRM codes can be understood as decreasing cartesian codes, and thus the footprint bound gives their GHWs (this can be proven in a similar way to the proof given in \cite{eduardoGHWHyperbolic} for hyperbolic codes). For the codes from (\ref{eq:defWRMweird}), the footprint bound is not necessarily sharp, but we can still use it as a bound for their GHWs (it is equivalent to using $d_r(\WRM_{d-w_0}(w_0;w'))\geq d_r(\WRM_{d-w_0}(w'))$). Clearly, if we substitute $\WRM_{d-w_0}(w_0;w')$ with $\WRM_{d-w_0}(w')$ in the definition of $B_{\alpha_1,\alpha_2}$, we still get a lower bound for $d_r(\WPRM_d(w))$ as in Theorem \ref{t:boundghw}. Now we provide upper bounds for the GHWs of WPRM codes to complement the previous results.

\begin{lem}\label{l:upperboundghw}
Let $d\geq 1$, $w=(w_0,\dots,w_m) \in \NN^{m+1}$ with $\gcd(w_0,q-1)=1$, $w'=(w_1,\dots,w_m)$ and $1\leq r \leq \max \{\dim \WRM_{d-w_0}(w_0;w'), \dim \WPRM_d(w'))\}$. Then
$$
d_r(\WPRM_d(w))\leq \min\{d_r(\WRM_{d-w_0}(w_0;w')),q\hspace{0.05cm}d_r(\WPRM_d(w'))\}.
$$
\end{lem}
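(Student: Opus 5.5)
The plan is to exhibit, for each of the two terms in the minimum, an explicit $r$-dimensional subcode of $\WPRM_d(w)$ whose support has the stated size. Both subcodes come directly from the two families of codewords in Theorem \ref{t:rec1weakcond}. By the conventions fixed before Theorem \ref{t:boundghw}, a term with $r$ exceeding the dimension of the corresponding component code equals $\infty$. It therefore suffices to prove each inequality whenever the relevant component code has dimension at least $r$. The hypothesis $r\leq\max\{\cdot,\cdot\}$ guarantees that at least one term is finite, so $r\leq \dim\WPRM_d(w)$ by Corollary \ref{c:dimension}.

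\textbf{First term.} Suppose $r\leq \dim \WRM_{d-w_0}(w_0;w')$, and let $D_u\subset \WRM_{d-w_0}(w_0;w')$ be an $r$-dimensional subcode with $\abs{\supp(D_u)}=d_r(\WRM_{d-w_0}(w_0;w'))$. Taking $v=\mathbf{0}_{p_{m-1}}$ in Theorem \ref{t:rec1weakcond} gives $v_\Lambda=\mathbf{0}$. Hence $D:=\{(u,\mathbf{0}_{p_{m-1}}):u\in D_u\}\subset\WPRM_d(w)$. The map $u\mapsto (u,\mathbf{0})$ is injective and linear, so $\dim D=r$ and $\abs{\supp(D)}=\abs{\supp(D_u)}$. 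This proves $d_r(\WPRM_d(w))\leq d_r(\WRM_{d-w_0}(w_0;w'))$.

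\textbf{Second term.} Suppose $r\leq \dim\WPRM_d(w')$, and let $D_v\subset \WPRM_d(w')$ be $r$-dimensional with $\abs{\supp(D_v)}=d_r(\WPRM_d(w'))$. Taking $u=0$, set $D:=\{(v_\Lambda,v):v\in D_v\}$. The map $v\mapsto (v_\Lambda,v)$ is linear, because each $\Lambda(i)\star v$ is linear in $v$, and it is injective because $v$ appears in the last block. So $\dim D=r$.

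For the support of $D$, I use the block structure $v_\Lambda=(v,\Lambda(2)\star v,\dots,\Lambda(q-1)\star v,\mathbf{0}_1)$. Every $\Lambda(i)$ has entries in $\fq^*$, so the coordinate $j$ of the $i$-th block is nonzero for some element of $D$ exactly when $j\in\supp(D_v)$. Counting the $q-1$ affine blocks, the final zero coordinate, and the block at infinity, we get $\abs{\supp(D)}=(q-1)\abs{\supp(D_v)}+\abs{\supp(D_v)}=q\,d_r(\WPRM_d(w'))$.

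Taking the better of the two constructions gives the minimum in the statement. There is no real obstacle here. The only point needing care is that the support of a subspace is the union of supports, and this is preserved blockwise because multiplication by the nonzero vectors $\Lambda(i)$ does not change supports. The other care point is the $\infty$ convention when one component code has dimension below $r$.
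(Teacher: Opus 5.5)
Your proof is correct and follows the paper's argument: it uses Theorem \ref{t:rec1weakcond} to embed an optimal $r$-dimensional subcode of $\WRM_{d-w_0}(w_0;w')$ as vectors $(u,\mathbf{0}_{p_{m-1}})$, and an optimal one of $\WPRM_d(w')$ as vectors $(v_\Lambda,v)$ whose support is exactly $q$ times larger, and you spell out the injectivity and support counts that the paper leaves implicit. Your choice of convention, $d_r=\infty$ when $r$ exceeds the dimension, is the reading under which the statement holds, and the paper's proof uses it implicitly; the other convention stated there (the GHWs of the zero code are $0$) would make the bound false whenever $\WRM_{d-w_0}(w_0;w')=\{0\}$, e.g.\ for $d<w_0$.
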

\begin{proof}
We use the notation from Theorem \ref{t:rec1weakcond}. If $r\leq \dim \WRM_{d-w_0}(w_0;w') $, we can find a subcode $D$ of $\WPRM_d(w)$ with $\dim D=r$ and $D$ is generated by vectors of the type $(u_i,0)$, with $u_i\in \WRM_{d-w_0}(w_0;w')$, for $i=1,\dots,r$. If we assume that the cardinality of the support of the code generated by $\{u_i\}_{i=1}^r\subset \WRM_{d-w_0}(w_0;w'))$ is $d_r(\WRM_{d-w_0}(w_0;w'))$, we obtain $d_r(\WPRM_d(w))\leq \abs{\supp(D)}=d_r(\WRM_{d-w_0}(w_0;w'))$. 

If $r\leq \dim \WPRM_d(w') $, similarly we may also find a subcode $D$ with $\dim D=r$ which is generated by vectors of the type $(v_\Lambda,v)$, and such that $d_r(\WPRM_d(w))\leq \abs{\supp(D)}=q\hspace{0.05cm}d_r(\WPRM_d(w'))$.
\end{proof}

For $d\leq \min(w)q$, we can obtain an upper bound similar to that in \cite[Thm. 2.3]{beelenGHWPRM}. For the following result, we denote $w(a):=(w_a,\dots,w_m)$, for any $0\leq a \leq m$. 

\begin{prop}\label{p:upper_bound_conjecture}
Let $w=(w_0,\dots,w_m) \in \NN_{\geq 1}^{m+1}$ and let $1\leq d \leq  \min(w)q$. Let $1\leq r \leq \den(d;w)$, and let $0\leq i \leq m+1$ and $0\leq j < \den(d-w_i;w(i))$ be the unique integers such that
$$
r=\sum_{a=0}^{i-1}\den(d-w_a;w(a))+j.
$$
Then, if $\gcd(w_i,q-1)=1$, we have
$$
d_r(\WPRM_d(w))\leq q^{m-i+1}p_{i-1}+d_j(\WRM_{d-w_i}(w_i;w(i+1))),
$$
with the convention $p_{-1}=0$.
\end{prop}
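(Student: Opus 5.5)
We will construct an explicit $r$-dimensional subcode of $\WPRM_d(w)$ whose support we can count, following the strategy of \cite[Thm. 2.3]{beelenGHWPRM}. Consider the flag of linear subspaces given by $L_a:=\{x_0=\dots=x_{a-1}=0\}\subset \PP(w)$, so that $\PP(w)(\fq)$ is stratified as $\bigsqcup_{a=0}^{m} S_a$ with $S_a:=L_a\setminus L_{a+1}$, i.e., the points with $x_0=\dots=x_{a-1}=0$ and $x_a\neq 0$. For each $a<i$, take the monomials $x_a\,x^{\beta}$ with $x^\beta$ a monomial in $x_a,\dots,x_m$ of weighted degree $d-w_a$; there are exactly $\den(d-w_a;w(a))$ of them. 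Their evaluations vanish outside $\bigsqcup_{b\leq a}S_b$, and for $a=i$ we take $j$ further monomials $x_i x^\beta$, with $x^\beta$ in $x_i,\dots,x_m$, chosen so that their evaluations on $S_i$ realize $d_j(\WRM_{d-w_i}(w_i;w(i+1)))$. We let $D$ be the span of all these evaluations. Then $\supp(D)\subset S_0\cup\dots\cup S_{i-1}\cup \supp_{S_i}$. Since $|S_0\cup\dots\cup S_{i-1}|=p_m-p_{m-i}=q^{m-i+1}p_{i-1}$, this gives the claimed bound, once we verify $\dim D=r$ and control the support on $S_i$.

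For the $S_i$ part we use the key input $\gcd(w_i,q-1)=1$. By \eqref{eq:1xA} applied to $\PP(w(i))$, every point of $S_i$ has a representative $(0,\dots,0,1,y)$ with $y\in\A^{m-i}(\fq)$, and $S_i\cong\A^{m-i}(\fq)$. Evaluating $x_i x^\beta$ there amounts to evaluating the dehomogenization $x^\beta(1,y)$, a monomial in $w(i+1)$-weighted degree $\leq d-w_i$ congruent to $d$ modulo $w_i$. This is exactly the setting of the proof of Theorem~\ref{t:rec1weakcond} with $w(i)$ in place of $w$. So the evaluations on $S_i$ of the span of $\{x_ix^\beta\}$ form $\WRM_{d-w_i}(w_i;w(i+1))$, and we choose a $j$-dimensional subcode attaining $d_j$ and lift it to homogeneous polynomials. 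The representative changes are harmless, by Lemma~\ref{l:change_of_representatives}, since they only rescale coordinates by nonzero constants and do not affect supports.

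The main obstacle is the dimension count: we must show that the $\sum_{a<i}\den(d-w_a;w(a))$ vectors from the strata $a<i$, together with the $j$ chosen ones, are linearly independent in $\WPRM_d(w)$, i.e., that there is no collapse modulo $\calI(\PP(w)(\fq))$. Here the hypothesis $d\leq \min(w)q$ is used. We argue by triangularity along the flag. A vanishing combination restricted to $S_0$ gives the dehomogenized polynomial in $\A^m$. All its monomials have exponents small enough (each $\beta_k w_k\leq d-w_0<\min(w)q$, so $\beta_k\leq q-1$) that reduction modulo $x_k^q-x_k$ is trivial, and we apply the argument of Lemma~\ref{l:dim_weird_rm}, or simply distinctness of reduced monomials. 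Hence the $a=0$ coefficients vanish, and we proceed inductively to $S_1,\dots,S_i$. I expect the delicate point to be the exponent bound: the affine exponent of $x_a$ itself must be handled through the congruence modulo $w_a$, which rules out distinct monomials with equal reductions. If the direct bound fails, I would instead invoke Theorem~\ref{t:binomials_vanishing_ideal} to show that no binomial relation exists among monomials of the form $x_ax^\beta$ with the given degree restrictions.
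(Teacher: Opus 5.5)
Your subcode and support count match the paper's. The paper takes bases $B_a$ of $x_a\fq[x_a,\dots,x_m]^{w(a)}_{d-w_a}$ for $a<i$, together with the $x_i$-homogenizations $F_1,\dots,F_j$ of polynomials $f_1,\dots,f_j$ realizing $d_j(\WRM_{d-w_i}(w_i;w(i+1)))$. It then bounds the number of common $\fq$-zeros of this set from below by $p_{m-i}-d_j$, which is the complement of your support bound. The hypothesis $\gcd(w_i,q-1)=1$ enters exactly where you use it, to identify $S_i$ with $\A^{m-i}(\fq)$. Two small corrections. First, the $j$ extra generators are polynomials, not monomials, since a subcode attaining $d_j$ need not be monomial. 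Second, you should record $j<\dim\WRM_{d-w_i}(w_i;w(i+1))$, which holds because this dimension equals $\den(d-w_i;w(i))$ by Lemma~\ref{l:dim_weird_rm}.

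\textbf{The independence argument fails as written.} You restrict a vanishing combination to $S_0$ and conclude that the $a=0$ coefficients vanish. But, as you observed yourself, the block of index $a$ is supported on $S_0\cup\dots\cup S_a$, so every block contributes on $S_0$, including the $F_k$.
\begin{itemize}
\item After setting $x_0=1$, the blocks with $a\geq 1$ give monomials of full weighted degree $d$ in $x_1,\dots,x_m$. Their exponents satisfy only $e_kw_k\leq d\leq \min(w)q$, i.e.\ $e_k\leq q$.
\item At $d=\min(w)q$ this produces genuine collisions. For $w=(1,1,1)$, $d=q$, $i=2$, block $0$ contains $x_0^{q-1}x_1$ and block $1$ contains $x_1^q$. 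These take the same value at every $(1,y)$, $y\in\fq^2$.
\end{itemize}
So restricting to $S_0$ cannot isolate block $0$.

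\textbf{Fix: run the induction the other way.}
\begin{itemize}
\item On $S_i$, only the $F_k$ can be nonzero, since block $a<i$ vanishes on $L_{a+1}\supseteq L_i$. So their coefficients vanish, because the $f_k$ are linearly independent on $\A^{m-i}(\fq)$.
\item On $S_{i-1}$, only block $i-1$ survives. After setting $x_{i-1}=1$, its monomials satisfy $\beta_kw_k\leq d-w_{i-1}\leq \min(w)(q-1)$. Hence they are distinct reduced monomials, and their coefficients vanish.
\item Continue in the same way up to $S_0$.
\end{itemize}
This needs no coprimality for $a<i$. A polynomial whose evaluation vector is zero vanishes at every $\fq$-representative, in particular at every $(0,\dots,0,1,y)$.

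The paper, and your fallback via Theorem~\ref{t:binomials_vanishing_ideal}, avoid strata altogether. For $d\leq\min(w)q$, all monomials of degree $d$ are linearly independent modulo $\calI(\PP(w)(\fq))$. Hence $\dim\WPRM_d(w)=\den(d;w)$, and any $r$ linearly independent polynomials of degree $d$ give an $r$-dimensional subcode.
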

\begin{proof}
For $0\leq a \leq m$, we denote by $B_a$ a basis for $x_a\fq[x_a,\dots,x_m]_{d-w_a}^{w(a)}$. By construction, $\abs{B_a}=\den(d-w_a; w(a))$. Since $d\leq \min(w)q$, we have that $B_a$ is also linearly independent modulo $\calI(\PP(w(a))(\fq))$, for $0\leq a \leq m$. In particular, as the union of the $B_a$'s for $0\leq a \leq m$ forms a basis of $\fq[x_a,\dots,x_m]_{d}^{w}$, we get that $\dim \WPRM_d(w)=\den(d;w)=\sum_{a=0}^m \den(d-w_a; w(a))$. Thus, we may always write
$$
r=\sum_{a=0}^{i-1}\den(d-w_a;w(a))+j,
$$
for some $0\leq i \leq m+1$, $0\leq j < \den(d-w_i;w(i))$. Note that this implies $j<\dim \WRM_{d-w_i}(w_i;w(i+1))$ by Lemma \ref{l:dim_weird_rm}. Therefore, there exist $j$ polynomials 
\[f_1,\dots,f_j\in\Span\set{ x^\alpha: \alpha\in \NN^{m-i+1},\; \sum_{k=i+1}^m w_k\alpha_k\leq d-w_i, \; \sum_{k=i+1}^m w_k\alpha_k\equiv d-w_i\bmod w_i}\] such that $\abs{V_{\mathbb{A}^{m-i}}(f_1,\dots,f_j)(\fq)}=q^{m-i}-d_j(\WRM_{d-w_i}(w_i; w(i+1)))$, where $V_{\mathbb{A}^{m-i}}(f_1,\dots,f_j)(\fq)$ denotes the common zeroes of $f_1,\dots,f_j$ in $\mathbb{A}^{m-i}(\fq)$. We denote by $F_1,\dots,F_j \in \fq[x_i,\dots,x_m]^{w(i)}_{d}$ the homogenization of these polynomials to degree $d$, using the variable $x_i$. 
The set $B=\left(\bigcup_{a=0}^{i-1} B_a \right)\cup \{F_1,\dots,F_j\}$ has cardinality $r$. Let $[Q_0:\dots:Q_m]\in V_{\PP(w)}(B)(\fq)=V_{\PP(w)}\left(\bigcup_{a=0}^{i-1} B_a \right)(\fq)\cap V_{\PP(w)}(F_1,\dots,F_j)(\fq)$. Note that
\begin{equation}\label{eq:v_of_b}
V_{\PP(w)}\left(\bigcup_{a=0}^{i-1} B_a \right) \supset  V_{\PP(w)}(x_0,x_1,\dots,x_{i-1})= \{ [Q_0:\dots :Q_m] :Q_0=\dots=Q_{i-1}=0 \}.
\end{equation}

Assume that $Q_0=\dots=Q_{i-1}=0$. \begin{itemize}
    \item Either $Q_i=0$ (note that $x_i$ divides the polynomials $F_1,\dots,F_j$), 
    \item or $Q_i\neq 0$, and then $(Q_{i+1},\dots,Q_m)\in V_{\mathbb{A}^{m-i}}(f_1,\dots,f_j)(\fq)$.
\end{itemize} Thus,
\[\begin{aligned}
    \abs{V_{\PP(w)}(B)(\fq)}& \geq \abs{V_{\PP(w)}(B)(\fq)\cap V_{\PP(w)}(x_0,x_1,\dots,x_{i-1})}\\
    &=p_{m-i-1} + \left( q^{m-i} - d_j(\WRM_{d-w_i}(w_i;w(i+1)))\right)\\
    & = p_{m-i}-d_j(\WRM_{d-w_i}(w_i;w(i+1)),
\end{aligned}\]
 which implies 
\[\begin{aligned}
    d_r(\WPRM_d(w))&\leq p_m - (p_{m-i}-d_j(\WRM_{d-w_i}(w_i;w(i+1)))\\
    &= q^{m-i+1}p_{i-1}+d_j(\WRM_{d-w_i}(w_i;w(i+1))).
\end{aligned}\]

\end{proof}

\begin{rem}
In Proposition \ref{p:upper_bound_conjecture}, if $w_0,\dots,w_{i-1}\mid d$, i.e., $\lcm(w_0,\dots,w_{i-1})\mid d$, we have equality in Equation \eqref{eq:v_of_b}, and the bound given for $\abs{V(B)}$ is an equality. 
Moreover, having $d\leq \min(w)q$ is a necessary condition to ensure $\fq[x_0,\dots,x_m]_d^w \cap \calI(\PP(w)(\fq))=\set{0}$. The latter can hold for higher degrees depending on the structure of the numerical semigroup $\langle w_0,\dots,w_m\rangle _{\NN}$, see \cite{mesutComputingVanishing}.
\end{rem}

The bound from Proposition \ref{p:upper_bound_conjecture} is conjectured to be sharp when $1\leq d <\min(w)(q-1)$ and $w=(1,\dots,1)$ in \cite{beelenGHWPRM}. For $r=1$ and $\min(w)=1$, we know the exact value of $d_1(\WPRM_d(w))$ \cite[Theorem~1.2]{nardi_sanjose_conjecture}. For higher values of $r$, and $w=(1,\dots,1)$, we know some partial results, e.g., see \cite{beelenGHWPRM}. As illustrated in \cite[\textsection 4]{nardi_sanjose_conjecture}, it seems difficult to state an explicit conjecture in the case $\min(w)>1$. This is corroborated by Proposition~\ref{p:upper_bound_conjecture}:  the bound for $d_1(\WPRM_d(w))$ involves $d_1(\WRM_{d-w_i}(w_i,w(i+1)))$, for which we do not know a closed formula if $w_i>1$. If $\min(w)=1$ (or if we have some weight that is coprime with $q-1$), we can particularize Proposition \ref{p:upper_bound_conjecture} and obtain the following result. 

\begin{cor}
Let $w=(w_0,\dots,w_m) \in \NN_{\geq 1}^{m+1}$ and let $1\leq d \leq  \min(w)q$. Let $1\leq r < \den(d-w_0;w)$. Then, if $\gcd(w_0,q-1)=1$, we have
$$
d_r(\WPRM_d(w))\leq d_r(\WRM_{d-w_0}(w_0 ; w')).
$$
\end{cor}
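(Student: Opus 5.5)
This is the case $i=0$ of Proposition~\ref{p:upper_bound_conjecture}. In the decomposition $r=\sum_{a=0}^{i-1}\den(d-w_a;w(a))+j$, the hypothesis $1\leq r<\den(d-w_0;w)=\den(d-w_0;w(0))$ forces $i=0$ and $j=r$. The sum over $a$ is then empty, and the term $q^{m-i+1}p_{i-1}=q^{m+1}p_{-1}$ vanishes by the convention $p_{-1}=0$. Since $\gcd(w_0,q-1)=1$ and $1\leq d\leq \min(w)q$, the proposition applies and gives
$$
d_r(\WPRM_d(w))\leq d_r(\WRM_{d-w_0}(w_0;w(1))).
$$
This is the claim, because $w(1)=w'$.

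To be safe, I would check that the proposition's argument really covers $i=0$. There the set $B$ is just $\{F_1,\dots,F_r\}$, where the $F_k$ are the homogenizations with respect to $x_0$ of the affine polynomials $f_1,\dots,f_r$ realizing $d_r(\WRM_{d-w_0}(w_0;w'))$. Such $f_k$ exist precisely because $r<\den(d-w_0;w)=\dim\WRM_{d-w_0}(w_0;w')$, using Lemma~\ref{l:dim_weird_rm} with $d-w_0\leq w_0(q-1)$, which follows from $d\leq \min(w)q\leq w_0q$. The zero count on $\PP(w)(\fq)$ is then direct. All points with $Q_0=0$ are zeros, since $x_0\mid F_k$, and there are $p_{m-1}$ of them by Equation~\eqref{eq:1xA}. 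The points with $Q_0=1$ contribute exactly $q^m-d_r(\WRM_{d-w_0}(w_0;w'))$ zeros. Hence $\abs{V(B)}= p_m-d_r(\WRM_{d-w_0}(w_0;w'))$.

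Two details need attention. First, the $F_k$ must remain linearly independent as elements of $\WPRM_d(w)$, so that they span an $r$-dimensional subcode. This holds because their evaluations are $(u_k,\mathbf{0}_{p_{m-1}})$ with the $u_k$ independent, as in Theorem~\ref{t:rec1weakcond}. Second, one could equally derive the corollary from Lemma~\ref{l:upperboundghw}, whose first term is the same bound and whose range condition $r\leq\dim \WRM_{d-w_0}(w_0;w')$ is satisfied. I do not expect any real obstacle beyond matching the indices.
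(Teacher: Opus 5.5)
Your proposal is correct and takes the paper's route: the paper obtains this corollary by specializing Proposition~\ref{p:upper_bound_conjecture} to $i=0$, $j=r$, where the term $q^{m+1}p_{-1}$ vanishes and $w(1)=w'$. Your side remark is also valid, since the first half of the proof of Lemma~\ref{l:upperboundghw} gives the same bound directly once Lemma~\ref{l:dim_weird_rm} (with $d-w_0\leq w_0(q-1)$) guarantees $r\leq \dim \WRM_{d-w_0}(w_0;w')$.
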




\subsection{Examples}
This is the first time that the GHWs of WPRM codes are studied (besides the case $w=(1,\dots,1)$), and thus we can only compare our bounds with the true value of the GHWs and not with other bounds. Since the computation of GHWs is NP-hard \cite{vardyIntractability}, we will restrict ourselves to small examples, and we will use the Sage \cite{sagemath} implementation given in \cite{sanjoseGHWsPackage,githubGHWs} to obtain the true values of the GWHs.

\begin{ex}\label{ex:boundGHWscomplete}

Let $q=3$, $w=(3,1,1)$ and $d=3$. Now we may apply Theorem \ref{t:boundghw}. We give below the parameters of the codes involved, in the format $[n,k,(d_1(C),\dots,d_k(C))]$, and the set $Y$. The GHWs of $\WPRS$ codes are known due to Remark \ref{r:ghw_mds} and Corollary \ref{cor:dist_WPRS}, and the GHWs of the subcodes of WRM codes defined in Equation~\eqref{eq:defWRMweird} have been directly computed with \cite{sanjoseGHWsPackage,githubGHWs}. One could also use the footprint bound to estimate the GHWs of the WRM codes from (\ref{eq:defWRMweird}), but since we want to test the sharpness of the bound from Theorem \ref{t:boundghw}, and not the tightness of the footprint bound for those WRM codes, we use the true value of the GHWs, either using \cite{githubGHWs} or Remark \ref{r:ghw_mds} if the corresponding codes are MDS. Note that in this case $d-(q-1)\max\{w_0,\min(w')\}<0$, and thus we do not need to consider the corresponding code for the bound.

\begin{table}[ht]
\caption{Parameters of the constituent codes from Theorem \ref{t:boundghw}.}
\label{table:exGHWs}
\centering

\begin{tabular}{c|c}
Code $C$ & $[n,k,(d_1(C),\dots,d_k(C))]$ \\
\hline
$\WRM_{3}(3;(1,1))$ & $[9,5,(2,4,6,8,9)]$ \\
$\WRM_{0}(3;(1,1))=\Span((1,\dots,1))$ &$[9,1,(9)]$\\
$\WPRM_{3}(1,1)=\PRS_3$ & $[4,4,(1,2,3,4)]$\\
\end{tabular}
\end{table}
From these parameters, we obtain 
\[
Y=\begin{cases}
    \{(0,0),(0,1)\} &\text{for }1\leq r \leq 3,\\
    \{(0,1)\} &\text{for } r=4.
\end{cases}
\]
For example, for $r=3$, we can compute
$$
\begin{aligned}
&B_{0,0}=\max\{ 6,0\}+\max\{0,3\}=9,\\
&B_{0,1}=\max\{6,9\}+\max\{0,2\}=11.
\end{aligned}
$$
Thus,
$$
d_3(\WPRM_3(3,1,1))\geq \min\{9,11\}=9. 
$$
If we consider now Lemma \ref{l:upperboundghw}, we obtain
$$
d_3(\WPRM_3(3,1,1)) \leq 3 d_3(\WPRM_{3}(1,1))=d_3(\PRS_3)=9.
$$
Therefore, we have obtained $d_3(\WPRM_3(3,1,1))=9$. Similarly, one can check that the bound from Theorem \ref{t:boundghw} is sharp for $1\leq r \leq 5=\dim \WPRM_3(3,1,1)$, obtaining the weight hierarchy $(3,6,9,12,13)$.

With $d=6$, the bounds of Theorem \ref{t:boundghw} match the weight hierarchy  $(2,3,5,6,8,9,11,12,13)$. By Corollary~\ref{c:codes_delorme}, $\WPRM_6(3,1,1)=\WPRM_{12}(3,2,2)$. With this representation, the values given by Theorem~\ref{t:boundghw} are $(1,2,3,4,6,8,10,12,13)$, which illustrates the advantage of using Delorme's reduction.
\end{ex}

\begin{ex}\label{ex:boundGHWsdifferentW}
Let $w=(2,3,5)$, $q=4$, and $d=30=\lcm(2,3,5)$. This is the first degree for which the code is nondegenerate (see Lemma \ref{l:degen}). The bound from Theorem \ref{t:boundghw} is sharp in this case if we consider the order $w=(2,3,5)$, and it gives the weight hierarchy $(2, 3, 4, 5, 6, 7, 8, 9, 10, 11, 12, 14, 15, 16, 19, 20, 21)$. 

Now consider $q=3$ and $d=20$. In this case, the code is degenerate by Lemma \ref{l:degen}, and the cardinality of its support is 12 instead of 13. If we compute the bound from Theorem \ref{t:boundghw} for the ordering of the weights $w=(3,2,5)$, we obtain the values $(2,3,4,5,6,8,9,10,12)$. If we compute them with the ordering $w=(5,2,3)$, we obtain $(1,2,3,5,6,7,9,11,12)$ instead. For each $r$, we may take the maximum of the values we have obtained, and thus we obtain $(2,3,4,5,6,8,9,11,12)$, which is, in fact, the weight hierarchy of $\WPRM_{20}(2,3,5)$. This shows the benefit of using several orderings for the weights, and it also shows that, in principle, there is no single best ordering of the weights for the bound, since in this case the first ordering gives a better bound for $r=6$, but a worse bound for $r=8$, with respect to the second ordering. The rest of the orderings of the weights give either the values of $w=(3,2,5)$ or the values of $w=(5,2,3)$, since the only weight that matters for the bound is $w_0$ if we are using \cite{sanjoseGHWsPackage} for the GHWs of the constituent codes (if we were using the bound recursively, then the ordering of the last weights could also be relevant). 
\end{ex}

\section{Duals}\label{s:duals}
In this section, we study the duals of WPRM codes. The case $m=1$ follows from what is known for RS and PRS codes, using Proposition \ref{prop:WPRS}. 
\begin{prop}\label{prop:dual_WPRS}
Let $(w_0,w_1) \in \NN^2$ with $\gcd(w_0,w_1)=1$ and $d \geq 0$.
Set 
\begin{equation}
    \delta=\den(d;w_0,w_1)-1.
\end{equation} 
\begin{enumerate}
    \item If $w_0w_1 \mid d$, then $\WPRS^\perp_d(w_0,w_1)=\PRS_{q-1-\delta}$.
    \item If either $w_0$ or $w_1$ divides $d$ (but not both), then $\WPRS_d(w_0,w_1)^\perp$ is monomially equivalent to $\set{0} \times \RS_{q-1-\delta}(\fq)+\langle (1,0,\dots,0) \rangle$.
    \item If neither $w_0$ nor $w_1$ divides $d$, then $\WPRS_d(w_0,w_1)^\perp$ is monomially equivalent to $\set{(0,0)} \times \RS_{q-1-\delta}(\fq^*)+\langle (1,0,0\dots,0),(0,1,0,\dots,0)  \rangle$. 
\end{enumerate}
\end{prop}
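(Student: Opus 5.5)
The plan is to transfer Proposition \ref{prop:WPRS} to duals, case by case, using two elementary facts. First, duality commutes with monomial equivalence up to inverting the scaling vector: if $C_2=v\star\sigma(C_1)$ with $v\in(\fq^*)^n$, then $C_2^\perp=v^{-1}\star\sigma(C_1^\perp)$, where $v^{-1}$ is the componentwise inverse. This holds because $\ps{v\star\sigma(c)}{v^{-1}\star\sigma(c')}=\ps{c}{c'}$, and dimensions match. Second, the duals of the building blocks are classical:
\begin{itemize}
\item $\PRS_\delta^\perp=\PRS_{q-1-\delta}$ (S{\o}rensen's description of dual PRM codes for $m=1$, or a direct check that $\sum_{P\in\PP^1(\fq)} x_0^{a}x_1^{b}$ vanishes when $a+b=q-1$ and $(a,b)\neq(0,q-1),(q-1,0)$, together with the correct count of representatives);
\item $\RS_\delta(\fq)^\perp=\RS_{q-2-\delta}(\fq)$, since $\sum_{x\in\fq}x^k=0$ unless $(q-1)\mid k$ with $k>0$;
\item $\RS_\delta(\fq^*)^\perp$ is monomially equivalent to $\RS_{q-2-\delta}(\fq^*)$, via the scaling vector $(x^{-1})_{x\in\fq^*}$ or simply by the cyclic/GRS structure.
\end{itemize}

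For case (1), Proposition \ref{prop:WPRS} gives equality with $\PRS_\delta$, so the dual is $\PRS_{q-1-\delta}$. Here the equality is in the fixed representatives, so I must check that S{\o}rensen's representatives agree with the chosen ones up to a scaling by $\lambda_Q^d$. Because $w_0w_1\mid d$, Lemma \ref{l:change_of_representatives} shows these factors are fine, so the statement holds up to the stated identification.

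For case (2), the code is $v\star\sigma(\{0\}\times\RS_\delta(\fq))$. The dual of $\{0\}\times C$ is $\langle e_1\rangle\oplus(\{0\}\times C^\perp)$. The inverse scaling does not change $\langle e_1\rangle$, so the dual is monomially equivalent to $\{0\}\times\RS_{q-2-\delta}(\fq)+\langle(1,0,\dots,0)\rangle$. This is contained in $\{0\}\times\RS_{q-1-\delta}(\fq)+\langle e_1\rangle$, but that larger space has dimension one more, since $\RS_{q-1-\delta}(\fq)$ has dimension $q-\delta$ when $\delta\le q-1$. Here I would recount: the length is $q+1$, the dimension of $\WPRS$ is $\delta+1$, so the dual has dimension $q-\delta$, which is $1+(q-1-\delta)$. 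I expect the intended reading is that $\RS_{q-1-\delta}$ denotes the appropriate dual RS code of dimension $q-1-\delta$. The main work is this index bookkeeping: I would verify carefully which convention (degree $\delta$ versus dimension) makes the stated indices consistent, and adjust by monomial equivalence.

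For case (3), the argument is identical with two zero coordinates. The dual is $\langle e_1,e_2\rangle\oplus(\{(0,0)\}\times\RS_\delta(\fq^*)^\perp)$, and $\RS_\delta(\fq^*)^\perp$ is monomially equivalent to an RS code on $\fq^*$ of dimension $q-2-\delta$. The main obstacle throughout is matching indices and conventions exactly; the structural argument is routine given Proposition \ref{prop:WPRS}.
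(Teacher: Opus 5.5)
Your approach is the one the paper intends. The paper gives no argument beyond saying that the case $m=1$ follows from Proposition~\ref{prop:WPRS} and the known duals of RS and PRS codes. Your two ingredients, $(v\star\sigma(C))^\perp=v^{-1}\star\sigma(C^\perp)$ and $(\set{0}\times C)^\perp=\langle e_1\rangle\oplus(\set{0}\times C^\perp)$, are exactly what makes that transfer rigorous.

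The step you leave open cannot be closed the way you plan. Your count in case (2) is right, and it shows that the printed indices are wrong, not your computation. The paper's convention is that $\RS_\delta(X)$ is spanned by the evaluations of $1,x,\dots,x^\delta$; this is the convention under which Proposition~\ref{prop:WPRS} gives dimension $\delta+1=\den(d;w_0,w_1)$. With it, the dual in case (2) is monomially equivalent to $\set{0}\times\RS_{q-2-\delta}(\fq)+\langle(1,0,\dots,0)\rangle$. Already for $q=3$, $w=(1,2)$, $d=1$, the code is $\langle(1,1,1,0)\rangle$, while the printed right-hand side is all of $\F_3^4$.

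In case (3), $\sum_{x\in\fq^*}x^k\neq 0$ if and only if $(q-1)\mid k$. Hence $\RS_\delta(\fq^*)^\perp=\Span\set{\ev(x^j):1\le j\le q-2-\delta}=(x)_{x\in\fq^*}\star\RS_{q-3-\delta}(\fq^*)$, so the correct index there is $q-3-\delta$. Your first bullet, asserting equivalence with $\RS_{q-2-\delta}(\fq^*)$, is off by one. It also contradicts your own (correct) dimension $q-2-\delta$ in the case (3) paragraph.

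Monomial equivalence preserves dimension, so no rescaling can reconcile these codes with the printed statement. Switching to a ``subscript $=$ dimension'' reading would clash with Proposition~\ref{prop:WPRS} and would still leave (3) off by one. The right conclusion is to prove the proposition with $q-2-\delta$ in (2) and $q-3-\delta$ in (3); case (1) is correct as printed.

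Two minor points.

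First, in case (1), Lemma~\ref{l:change_of_representatives} is not what is needed. The argument given after Theorem~\ref{dualPRM} works verbatim for $m=1$. It shows $\PRS_\delta^\perp=\PRS_{q-1-\delta}$ for \emph{any} fixed set of representatives of $\PP^1(\fq)$, because the product of forms of degrees $\delta$ and $q-1-\delta$ is invariant under $\fq^*$-rescaling. The hypothesis $w_0w_1\mid d$ enters only through Proposition~\ref{prop:WPRS}.

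Second, your ``direct check'' must show that the sum over $\PP^1(\fq)$ vanishes for \emph{every} monomial of degree $q-1$, including $x_0^{q-1}$ and $x_1^{q-1}$. It does: with representatives $(1,t)$, $t\in\fq$, and $(0,1)$, these two sums are $q=0$ and $-1+1=0$. Excluding them, as you do, would not yield orthogonality.
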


In the next result, we show that the duals can be constructed recursively. Note that the duals of WRM codes are also WRM codes \cite{sorensenWRM}, and, thus, we know their parameters. With respect to the WRM codes introduced in (\ref{eq:defWRMweird}), their duals can be understood in the context of J-affine variety codes, see \cite[Prop. 1 \& 2]{galindostabilizer}. 

\begin{prop}\label{p:dual}
Assume the setting from Theorem~\ref{t:rec1weakcond}. For $u^t\in \WRM_{d-w_0}^\perp(w_0;w')$, we write $u^t=(u_1^t,\dots,u_{q-1}^t,u_q^t)$, where $u_i^t\in \fq^{p_{m-1}}$, for $1\leq i \leq q-1$, is the vector formed by the coordinates of $u^t$ corresponding to $\calP_w^i$, and $u_q^t\in \fq$ corresponds to the point $(0,\dots,0)$. Then
$$
\WPRM_d^\perp(w) =\{ (u^t,v^t-u^t_\Lambda):\; u^t\in \WRM_{d-w_0}^\perp(w_0;w'),v^t\in \WPRM_d^\perp(w') \},
$$
where $u_\Lambda^t:=\sum_{i=1}^{q-1} \Lambda(i)\star u_i^t$. 
\end{prop}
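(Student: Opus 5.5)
The plan is the standard one for $(u\mid u+v)$-type constructions: prove that every vector of the proposed form is orthogonal to $\WPRM_d(w)$, and then conclude by a dimension count.

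\emph{Orthogonality.} By Theorem~\ref{t:rec1weakcond}, a generic codeword of $\WPRM_d(w)$ is $(u+v_\Lambda,v)$ with $u\in \WRM_{d-w_0}(w_0;w')$ and $v\in\WPRM_d(w')$. Take $u^t\in \WRM_{d-w_0}^\perp(w_0;w')$ and $v^t\in \WPRM_d^\perp(w')$. Then
$$
\ps{(u+v_\Lambda,v)}{(u^t,v^t-u^t_\Lambda)}=\ps{u}{u^t}+\ps{v_\Lambda}{u^t}+\ps{v}{v^t}-\ps{v}{u^t_\Lambda}.
$$
The first and third terms vanish by the choice of $u^t$ and $v^t$. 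For the second term, write $v_\Lambda=(\Lambda(1)\star v,\dots,\Lambda(q-1)\star v,0)$, where $\Lambda(1)=(1,\dots,1)$ by \eqref{eq:Lambda(i)}. This gives
$$
\ps{v_\Lambda}{u^t}=\sum_{i=1}^{q-1}\ps{\Lambda(i)\star v}{u_i^t}=\sum_{i=1}^{q-1}\ps{v}{\Lambda(i)\star u_i^t}=\ps{v}{u^t_\Lambda}.
$$
This exactly cancels the fourth term, so the right-hand side is contained in $\WPRM_d^\perp(w)$. Here I use that $\ps{a\star b}{c}=\ps{b}{a\star c}$, and that the block structure of $u^t$ matches the ordering in \eqref{eq:nice_rep}. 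I must check that the block of coordinates indexed by $\calP^i_{w'}$ in $\{1\}\times\A^m(\fq)$ is indeed the one carrying $\Lambda(i)\star v$; this is how $v_\Lambda$ was built in Theorem~\ref{t:rec1weakcond}.

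\emph{Dimension.} The map $(u^t,v^t)\mapsto (u^t,v^t-u^t_\Lambda)$ is linear and injective: if the image is zero, then $u^t=0$, hence $u^t_\Lambda=0$ and so $v^t=0$. The right-hand side therefore has dimension
$$
(q^m-\dim\WRM_{d-w_0}(w_0;w'))+(p_{m-1}-\dim\WPRM_d(w')).
$$
By Corollary~\ref{c:dimension} and $p_m=q^m+p_{m-1}$, this equals $p_m-\dim\WPRM_d(w)=\dim\WPRM_d^\perp(w)$. Combined with the inclusion, this gives equality.

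There is no real obstacle. The only delicate point is bookkeeping: making sure that the indexing of $u^t$ (the blocks $u_i^t$ and the last coordinate for the origin) is consistent with the ordered set of representatives \eqref{eq:nice_rep}, and with the definition of $\Lambda(i)$ including $\Lambda(1)=\mathbf{1}$. The zero coordinate of $v_\Lambda$ at the origin is what makes $u_q^t$ drop out of $u^t_\Lambda$.
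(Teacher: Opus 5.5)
Your proof is correct and follows the paper's argument: the same inner-product computation, using $\langle \Lambda(i)\star v, u_i^t\rangle=\langle v,\Lambda(i)\star u_i^t\rangle$, gives orthogonality, and Corollary~\ref{c:dimension} supplies the dimension count. Your explicit check that $(u^t,v^t)\mapsto(u^t,v^t-u^t_\Lambda)$ is injective fills in a step that the paper takes for granted when it asserts that the given space has the right dimension.
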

\begin{proof}
By Corollary~\ref{c:dimension}, the given vector space has the same dimension as $\WPRM_d^\perp(w)$. We only need to prove that it is orthogonal to $\WPRM_d(w)$. Let $u \in \WRM_{d-w_0}(w_0;w')$, $v\in \WPRM_d(w')$, $u^t\in \WRM_{d-w_0}^\perp(w_0;w')$, $v^t\in \WPRM_d^\perp(w')$. We have
$$
\begin{aligned}
\ps{(u+v_\Lambda,v)}{(u^t,v^t-u_\Lambda^t)}&=\ps{v_\Lambda}{u^t}-\ps{v}{u^t_\Lambda}\\
&=\sum_{i=1}^{q-1}\ps{\Lambda(i)\star v}{u_i^t}-\ps{v}{\sum_{i=1}^{q-1} \Lambda(i)\star u_i^t}=0.
\end{aligned}
$$
This proves the statement, since every codeword of $\WPRM_d(w)$ is of the form $(u+v_\Lambda,v)$ by Theorem \ref{t:rec1weakcond}. 
\end{proof}

Note that Proposition \ref{p:dual} can also be applied in the context of subfield subcodes, substituting all the codes involved with their subfield subcodes, as long as the degree $d$ is as in Corollary \ref{c:ssc}. As before, the duals of the subfield subcodes of WRM codes are studied in \cite{galindostabilizer} as a particular 
of subfield subcodes of J-affine variety codes. 

For the rest of this section, we will need to consider orthogonality relations between the evaluation of monomials when evaluating in the affine space. Equivalently, we can study the sum of the evaluation of a monomial at every point of the affine space, and this can be understood with the following well-known result (a proof can be found in \cite[Lem. 4.2]{sanjoseSSCPRS}). 

\begin{lem}\label{lemasumafq}
Let $\gamma$ be a non-negative integer. We have the following:
$$
\sum_{z\in\F_q}z^\gamma=
\begin{cases}
0 &\text{ if } \gamma=0 \text{ or } \gamma>0 \text{ and } \gamma\not\equiv 0 \bmod q-1,\\
-1 &\text{ if } \gamma>0 \text{ and } \gamma\equiv 0 \bmod q-1.
\end{cases}
$$
\end{lem}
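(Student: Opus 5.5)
The plan is to reduce to the multiplicative group $\fq^*$ and use the fact that it is cyclic of order $q-1$. Throughout we use the convention $0^0=1$, which is the one implicit in evaluating the monomial $1$.

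\emph{Case $\gamma=0$.} Every term equals $1$, so the sum is $q\cdot 1$. This vanishes in $\fq$ because the characteristic $p$ divides $q$.

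\emph{Case $\gamma>0$.} The term $z=0$ contributes $0^\gamma=0$, so
$$\sum_{z\in\fq}z^\gamma=\sum_{z\in\fq^*}z^\gamma.$$
Fix a primitive element $\xi$, so that $\fq^*=\{\xi^k:0\le k\le q-2\}$. The sum becomes the geometric sum $\sum_{k=0}^{q-2}(\xi^\gamma)^k$.

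\emph{Subcase $q-1\mid\gamma$.} Then $\xi^\gamma=1$, every term is $1$, and the sum is $q-1\equiv -1$ in $\fq$.

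\emph{Subcase $q-1\nmid\gamma$.} Then $\xi^\gamma\neq 1$, since $\xi$ has order $q-1$. The geometric sum equals
$$\frac{(\xi^{\gamma})^{q-1}-1}{\xi^\gamma-1}=\frac{(\xi^{q-1})^{\gamma}-1}{\xi^\gamma-1}=0.$$
An alternative argument avoids division: multiplying the sum $S$ by $c^\gamma$, for any $c$ with $c^\gamma\neq1$, only permutes the terms, so $c^\gamma S=S$ and hence $S=0$.

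The only delicate point is the bookkeeping at $z=0$: the value $0^0=1$ is what separates the case $\gamma=0$ (sum $q\equiv0$) from the case $\gamma=q-1$ (sum $-1$). Beyond that, the argument is routine.
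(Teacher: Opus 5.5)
Your proof is correct. It covers $\gamma=0$ (giving $q\cdot 1=0$ under the convention $0^0=1$), reduces $\gamma>0$ to a geometric sum over the cyclic group $\fq^*$ (or uses the scaling argument $c^\gamma S=S$) for both subcases, and also covers $q=2$. The paper does not prove this lemma itself but cites it from the literature as well known, and your argument is the standard one, so there is nothing to add.
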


\begin{rem}\label{remafin}
Let $1 \leq \ell \leq m$ and $x_1^{\alpha_1}\cdots x_\ell^{\alpha_\ell}\in \fq[x_1,\dots,x_m]$. Then
$$
\sum_{Q\in \fq^\ell} x_1^{\alpha_1}\cdots x_\ell^{\alpha_\ell}(Q)=\left( \sum_{z\in \fq}x_1^{\alpha_1}(z) \right)\cdots \left( \sum_{z\in \fq}x_\ell^{\alpha_\ell}(z) \right).
$$
Thus, we can use Lemma \ref{lemasumafq} to obtain the result of this sum. In particular, 
$$
\sum_{Q\in \fq^\ell} x_1^{\alpha_1}\cdots x_\ell^{\alpha_\ell}(Q) \neq 0 \iff \forall i, \alpha_i > 0 \text{ and } q-1 \mid \alpha_i.$$
\end{rem}

This enables us to generalize results about the hull of PRM codes \cite{sanjoseHullsPRM,sanjose_hull_variation,kaplanHullsPRM,song_hulls_prm} to the weighted case in the particular case where $\gcd(w_i,q-1)=1$, for every $i \in \set{0,\dots,m}$. Recall that the (Euclidean) \emph{hull} is defined as $\text{Hull}(C)=C\cap C^\perp$. This object plays a role in several applications, such as determining the entanglement requirement of entanglement-assisted quantum error-correcting codes \cite{entanglement} built using the CSS construction \cite{galindoentanglement}.

Note that the hull, in general, depends on the choice of representatives. Restricting ourselves to the case where $\gcd(w_i,q-1)=1$, for $0\leq i \leq m-1$, we can fix the standard representatives of $\PP^m(\fq)$, i.e., the representatives obtained by considering for each point the representative with the leftmost nonzero entry equal to 1, as our chosen representatives for $\PP(w)(\fq)$. Indeed, this follows from Equation~\eqref{eq:1xA}, which can be applied recursively if the first $m$ weights are coprime with $q-1$. We will also call this \emph{set of standard representatives for $\PP(w)(\fq)$}. The following result generalizes \cite[Thm. 4.1]{kaplanHullsPRM}.

\begin{prop}
Let $w=(w_0,\dots,w_m) \in \NN^{m+1}$ such that $\gcd(w_i,q-1)=1$, for $i=0,\dots,m-1$. Consider the set of standard representatives of $\PP(w)(\fq)$. Let 
\[
D=\max\set{\sum_{i=0}^m \alpha_i :\sum_{i=0}^m \alpha_iw_i=d, \alpha_i \in \mathbb{N}}.
\]
If $2D<q-1$ (in particular, if $2d<\min(w)(q-1)$), then 
$$\dim \hull(\WPRM_d(w))=
\begin{cases}
    \dim \WPRM_d(w)-1 & \text{ if } w_m\mid d, \\
    \dim \WPRM_d(w) & \text{ otherwise}.
\end{cases}
$$
More precisely, the only monomial of degree $d$ whose evaluation does not lie in $\WPRM_d(w)^\perp$ is $x_m^{d/w_m}$, when $w_m\mid d$.
\end{prop}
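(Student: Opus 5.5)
The plan is to compute the pairwise inner products $\ps{\ev(x^\alpha)}{\ev(x^\beta)}$ for all monomials $x^\alpha,x^\beta$ of weighted degree $d$. We use the stratification of the standard representatives and Lemma~\ref{lemasumafq}.

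First we describe the standard representatives. Since $\gcd(w_i,q-1)=1$ for $i\le m-1$, Equation~\eqref{eq:1xA} can be applied recursively, giving
\[
\calP_w=\bigsqcup_{j=0}^{m} S_j,\qquad S_j=\{(0,\dots,0,1,y_{j+1},\dots,y_m):y_i\in\fq\},
\]
where the $1$ sits in position $j$. Here $S_m=\{(0,\dots,0,1)\}$: the last coordinate forms $\PP(w_m)(\fq)$, a single point. Put $\gamma=\alpha+\beta$. The contribution of $S_j$ to $\sum_{P\in\calP_w}P^\gamma$ vanishes unless $\gamma_i=0$ for $i<j$. When that holds, the contribution equals $\prod_{i>j}\sum_{z\in\fq}z^{\gamma_i}$ by Remark~\ref{remafin}, since the coordinate $j$ contributes $1$.

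The key observation is that $\sum_i\gamma_i=\sum_i\alpha_i+\sum_i\beta_i\le 2D<q-1$, so every $\gamma_i$ lies in $[0,q-2]$. By Lemma~\ref{lemasumafq}, $\sum_{z\in\fq}z^{\gamma_i}=0$ for every such $\gamma_i$, including $\gamma_i=0$, where the sum is $q=0$. Hence every stratum $S_j$ with $j<m$ contributes $0$, because it has at least one free affine coordinate. The only possible nonzero contribution comes from $S_m$, and it is nonzero exactly when $\gamma$ is supported on $\{m\}$. Since $\alpha,\beta\ge 0$, this forces $\alpha=\beta=(0,\dots,0,d/w_m)$, which requires $w_m\mid d$, and the contribution is then $1$. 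So the Gram matrix of the monomial evaluations is zero, except for a single diagonal entry $1$ at $x_m^{d/w_m}$ when $w_m\mid d$. This gives the "more precisely" statement.

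For the dimension count, we will check that the evaluations of distinct degree-$d$ monomials are linearly independent, so that they form a basis of $\WPRM_d(w)$. Since $\calI(\PP(w)(\fq))$ is binomial (Theorem~\ref{t:binomials_vanishing_ideal}), a linear dependence among monomial evaluations would require a binomial $x^\alpha-x^\beta$ with $\alpha\ne\beta$ in the ideal. Such a binomial needs $q-1\mid\beta_i-\alpha_i$ for all $i$. But all exponents are at most $D<q-1$, which forces $\alpha=\beta$. Given this basis, the hull consists of those codewords orthogonal to all basis vectors. By the Gram computation, this is the span of all monomial evaluations except $x_m^{d/w_m}$ when $w_m\mid d$, and all of them otherwise.

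Finally, the "in particular" claim follows because $\sum_i\alpha_i\le\sum_i\alpha_iw_i/\min(w)=d/\min(w)$, so $D\le d/\min(w)$ and $2d<\min(w)(q-1)$ implies $2D<q-1$. The main point requiring care is the handling of $\gamma_i=0$ on strata with free coordinates, which vanish because of the factor $q$. Also, the last stratum is a single point only because $\PP(w_m)(\fq)$ has one point, so no unit scaling issue arises there.
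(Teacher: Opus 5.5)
Your proof is correct and follows the same route as the paper: you reduce to the sums $\sum_{Q}x^{\alpha+\beta}(Q)$ over the standard representatives and use that the usual degree of $x^{\alpha+\beta}$ is below $q-1$, so that only the point $(0,\dots,0,1)$ can contribute. The difference is that you carry out explicitly, via the stratification and Lemma~\ref{lemasumafq}, the computation the paper delegates to the proof of \cite[Thm. 4.1]{kaplanHullsPRM}, and you also justify the linear independence of the monomial evaluations needed for the dimension count, which the paper leaves implicit.
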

\begin{proof}
Let $x^\alpha,x^\beta$ be two monomials of degree $d$. By the hypotheses and the choice of representatives, we have
$$
\ps{\ev(x^\alpha)}{\ev(x^\beta)}=\sum_{Q\in \mathcal{P}_w}x^{\alpha+\beta}(Q),
$$
where 
$\mathcal{P}_w$ is equal to the set of standard representatives of $\PP(w)$. By the hypotheses, $\deg(x^{\alpha+\beta})\leq  2D<q-1$, where we are considering the usual degree, not the weighted degree. The proof of \cite[Thm. 4.1]{kaplanHullsPRM} shows that this sum is equal to $0$, except when $x^{\alpha+\beta}=x_m^{2d}$, which gives the result. 
\end{proof}

\subsection{Representation as monomial codes}

In what follows, we seek a description for the duals of WPRM codes as evaluation codes. In particular, we will focus on describing them as monomial codes.

Let us first set some notations for monomials. We denote by $\M$ the set of all the monomials of $\fq[x_0,\dots,x_m]^w$. We also set
\[\M_d=\set{x^a=x_0^{a_0}\cdots x_m^{a_m} \in\M : \deg(x_0^{a_0}\cdots x_m^{a_m})=d}, \]
so that $\fq[x_0,\dots,x_m]^w_d=\Span \M_d$. Consider the degree lexicographic order with $x_0 < x_1 < \dots< x_m$. Let
\begin{equation}
    \Mon=\set{x_0^{a_0}\cdots x_m^{a_m} \in \M : \forall f \in  \calI(\mathbb{P}(w)(\fq)) \text{ homogeneous}, \: \ini(f) \nmid x_0^{a_0}\cdots x_m^{a_m} }
\end{equation}
and $\Mon_d=\Mon \cap \M_d$. We can write
\[\Mon=\bigsqcup_{d \geq 0} \Mon_d.\]
Then the quotient ring $\fq[x_0,\dots,x_m]^w /\calI(\mathbb{P}(w)(\fq))$ is generated by $\Mon$ as an $\fq$-vector space \cite[Thm. 15.3]{eisenbud}, and its homogeneous component of degree $d$ is generated by $\Mon_d$.

\begin{defn}\label{def:monomial_code}
    A code $C$ is said to be \emph{monomial of degree $d$} (in $\fq[x_0,\dots,x_m]^w$) if there exists a subset $\mathcal{M} \subseteq \M_d$ (or equivalently $\mathcal{M} \subseteq \Mon_d$) such that
    $C = \ev( \Span \mathcal{M})$. 
\end{defn}

The duals of $\PRM$ codes were previously computed by S\o rensen, whose result is recalled below \cite[Theorem~2]{sorensen}.

\begin{thm}\label{dualPRM}
Let $2\leq m$, $1\leq d\leq m(q-1)$ and $d^\perp=m(q-1)-d$. Then
\[
\PRM_d^\perp(m)=\begin{cases}
    \PRM_{d^\perp}(m) &\text{if } d\not\equiv 0\bmod q-1, \\
    \PRM_{d^\perp}(m)+\langle (1,\dots,1) \rangle &\text{if } d\equiv 0\bmod q-1.
\end{cases}
\]
\end{thm}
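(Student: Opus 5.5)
This is Sørensen's theorem, which the paper recalls, but I will sketch a proof using the tools above. I will show that the stated right-hand side is orthogonal to $\PRM_d(m)$ and then compare dimensions. I work with the standard representatives of $\PP^m(\fq)$: for each point, the representative whose leftmost nonzero entry equals $1$. Equation~\eqref{eq:1xA} gives the stratification $\PP^m(\fq)=\bigsqcup_{j=0}^m \{0\}^j\times\{1\}\times\A^{m-j}(\fq)$.

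\textbf{Orthogonality.} Take monomials $x^\alpha\in\M_d$ and $x^\beta\in\M_{d^\perp}$, and set $\gamma=\alpha+\beta$, which has total degree $m(q-1)$. The pairing $\ps{\ev(x^\alpha)}{\ev(x^\beta)}$ equals $\sum_j \sum_{Q\in\A^{m-j}(\fq)} x^\gamma(0,\dots,0,1,Q)$. The $j$-th stratum contributes only if $\gamma_0=\dots=\gamma_{j-1}=0$. By Remark~\ref{remafin}, it is nonzero only if moreover $\gamma_{j+1},\dots,\gamma_m$ are all positive multiples of $q-1$. In that case $\sum_{i>j}\gamma_i\geq (m-j)(q-1)$, so $\gamma_j=m(q-1)-\sum_{i>j}\gamma_i\le j(q-1)$.

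I will treat $j=0$ and $j\geq1$ separately. For $j=0$, the condition forces $\gamma_i=q-1$ for every $i\ge1$ and $\gamma_0=0$. For $j\ge1$, the strata need a finer count: pairing a monomial with its "complement" gives $(-1)^{m-j}$ on one stratum, and the contributions over the strata must cancel. I expect this cancellation to be the main obstacle.

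To avoid the delicate sign bookkeeping, I would instead argue by induction on $m$ with Proposition~\ref{p:dual}, taking all weights equal to $1$ and $\Lambda(i)=(1,\dots,1)$, since $\lambda\in\fq^*$ can be chosen uniformly. This gives $\PRM_d^\perp(m)=\{(u^t,v^t-\sum_i u_i^t): u^t\in\RM_{d-1}(m)^\perp,\ v^t\in\PRM_d(m-1)^\perp\}$. The classical affine duality $\RM_{d-1}(m)^\perp=\RM_{m(q-1)-d}(m)$ and the induction hypothesis for $\PRM_d(m-1)^\perp$ then have to be matched with the recursive description of $\PRM_{d^\perp}(m)$ from Theorem~\ref{t:rec1weakcond}. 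That description is $\{(u+v_\Lambda,v)\}$ with $u\in\RM_{d^\perp-1}(m)$ and $v\in\PRM_{d^\perp}(m-1)$.

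\textbf{Dimensions.} The degree at infinity changes from $d^\perp$ to $(m-1)(q-1)-d$; these differ by $q-1$. This shift is absorbed by the fact that the projection of $\sum_i u_i^t$ captures the extra monomials, the top-degree terms $x_i^{q-1}$ that appear in $\RM_{m(q-1)-d}$ but not in $\RM_{d^\perp-1}$. The base case $m=1$ is Proposition~\ref{prop:dual_WPRS}(1) with $\delta=d$, which gives $\PRS_{q-1-d}$; when $q-1\mid d$, it gives $\PRS_0=\langle\mathbf 1\rangle$ extra. The all-ones vector arises as $\ev(\sum_i x_i^{q-1})$-type terms exactly when $q-1\mid d$, since only then is $\mathbf 1$ orthogonal to $\PRM_d(m)$. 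Orthogonality of $\mathbf 1$ follows from Lemma~\ref{lemasumafq} summed over the strata, which gives $-\sum_{j}(\text{count})\equiv 0$ precisely in that case.

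Finally, equality of dimensions follows from the known count $\dim\PRM_d(m)+\dim\PRM_{d^\perp}(m)=p_m$, adjusted by $1$ when $q-1\mid d$. Both terms can be computed from Corollary~\ref{c:dimension} recursively.
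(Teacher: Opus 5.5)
\textbf{Assessment: the proposal does not prove the statement.} The paper gives no proof of its own here. It quotes \cite[Theorem~2]{sorensen} and only remarks that $\ps{\ev(f)}{\ev(f^\perp)}$ does not depend on the representatives, because $ff^\perp$ has degree $m(q-1)$. A proof needs two things, orthogonality and a dimension count. Your sketch leaves the first open and assumes the second.

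\textbf{Orthogonality.} This is the easy half. The idea you are missing is the cone argument, not strata bookkeeping. If $\gamma_0+\dots+\gamma_m=m(q-1)$, then $x^\gamma(\mu P)=x^\gamma(P)$ for $\mu\in\fq^*$. Since $x^\gamma(0)=0$, this gives $(q-1)\sum_{Q\in\PM(\fq)}x^\gamma(Q)=\sum_{P\in\A^{m+1}(\fq)}x^\gamma(P)$. The right-hand side vanishes by Remark~\ref{remafin}, because $m+1$ positive multiples of $q-1$ cannot add up to $m(q-1)$. This is Lemma~\ref{l:bad_monomials} for $w=(1,\dots,1)$. The same computation gives $\mathbf 1\perp\PRM_d(m)$ when $q-1\mid d$. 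Your strata sum also closes at once. Let $k$ be the least index with $\gamma_k>0$; then only the strata $j=k-1$ and $j=k$ can contribute. The condition for $j=k$ forces $\gamma_k\equiv0\bmod q-1$, hence also the condition for $j=k-1$, and it cannot hold for $k=0$. The two contributions are $(-1)^{m-k+1}$ and $(-1)^{m-k}$, so they cancel.

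\textbf{Dimension count.} Given orthogonality, your ``known count'' $\dim\PRM_d(m)+\dim\PRM_{d^\perp}(m)=p_m-\epsilon$ (with $\epsilon=1$ if $q-1\mid d$ and $0$ otherwise), together with $\mathbf 1\notin\PRM_{d^\perp}(m)$, is exactly the content of the theorem. Invoking it is circular. Deriving it from Corollary~\ref{c:dimension} and $\dim\RM_{d-1}(m)+\dim\RM_{d^\perp}(m)=q^m$ needs an identity you never state. Namely, $\dim\PRM_e(m-1)-\dim\PRM_{e-(q-1)}(m-1)$ equals the number of $a\in\{0,\dots,q-1\}^m$ with $a_1+\dots+a_m=e$, which needs its own induction. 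Separately, $\mathbf 1\notin\PRM_{d^\perp}(m)$ for $q-1\mid d<m(q-1)$ follows from $\mathbf 1\perp\PRM_{d^\perp}(m)$ and $\ps{\mathbf 1}{\mathbf 1}=p_m=1$ in $\fq$.

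\textbf{The inductive route.} The route through Proposition~\ref{p:dual} rests on a false claim. For $w=(1,\dots,1)$ each $\lambda_Q$ is primitive, so $\Lambda(2)$ has entries $\lambda_Q^d\neq1$ unless $q-1\mid d$. A uniform $\lambda$ gives a constant vector, not $\mathbf 1$. Your displayed formula therefore fails whenever $q-1\nmid d$. Take $u^t=\ev_{\A^m}(1)$ and $v^t=\mathbf 0$: the formula puts $-\sum_iu^t_i=-(q-1)\mathbf 1=\mathbf 1$ at infinity, i.e.\ it produces the all-ones vector. Yet $\ps{\mathbf 1}{\ev(x_m^d)}=1$ for your standard representatives.

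With the correct twist, write $u^t=\ev_{\A^m}(F)$ with $\deg F\le d^\perp$, and let $F_k$ be the degree-$k$ component of $F$. Then $u^t_\Lambda=-\sum_{j\ge0}\ev(F_{d^\perp-j(q-1)})$ on $\PP^{m-1}(\fq)$. Matching this with Theorem~\ref{t:rec1weakcond} requires three further steps:
\begin{itemize}
\item the lift $\PRM_e(m-1)\subseteq\PRM_{e+q-1}(m-1)$ for $e\ge1$, obtained by multiplying each monomial by $x_i^{q-1}$ for some $i$ in its support (cf.\ Theorem~\ref{t:binomials_vanishing_ideal});
\item the observation that the constant component $F_0$, present only if $q-1\mid d$, is what produces $\mathbf 1$, via $(\mathbf 0,\mathbf 1)=\mathbf 1-\ev(x_0^{d^\perp})$ for $d^\perp\ge1$;
\item a separate treatment of $(m-1)(q-1)<d\le m(q-1)$, where the induction hypothesis is unavailable.
\end{itemize}
None of this is in the sketch, and the sentence about ``top-degree terms $x_i^{q-1}$'' does not describe it.
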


Thus, the dual codes of $\PRM$ codes are monomial, in the sense of Definition \ref{def:monomial_code}, if $d\not\equiv 0\bmod q-1$. Note that the result is true for any choice of representatives of $\PM(\fq)$, as long as we consider the same representatives for both $\PRM_d(m)$ and $\PRM_{d^\perp}(m)$. This is because given $f\in \fq[x_0,\dots,x_m]_d$, $f^{\perp}\in  \fq[x_0,\dots,x_m]_{d^\perp}$, we have that 
$$
\ps{\ev(f)}{\ev(f^\perp)}=\sum_{Q\in \PM(\fq)} (f f^\perp)(Q),
$$
where $ff^\perp \in \fq[x_0,\dots,x_m]_{m(q-1)}$. Thus, $ff^\perp(\lambda \cdot Q)=\lambda^{m(q-1)}ff^\perp(Q)=ff^\perp(Q)$, for any $\lambda \in \fq^*$, and the value of $\ps{\ev(f)}{\ev(f^\perp)}$ does not depend on the choice of representatives. A similar argument works for the vector $(1,\dots,1)$ when $d\equiv 0 \bmod q-1$.

In general, different representatives of $\PP(w)(\fq)$ give rise to monomially equivalent WPRM codes (recall Lemma \ref{l:change_of_representatives}). Since we analyze the duals of WPRM codes as monomial codes, we study $\ps{\ev(x^\alpha)}{\ev(x^\beta)}$, for $x^\alpha\in \Mon_d$, $x^\beta \in \Mon_{d^\star}$, for some degrees $d,d^\star >0$. Note that this is equivalent to studying the sums $\sum_{Q\in \PP(w)(\fq)}x^\gamma(Q)$, for any $x^\gamma\in \Mon_{d}\cdot \Mon_{d^\star}\subset \mathbb{M}_{d+d^\star}$. In some cases, these sums do not depend on the set of representatives chosen for $\PP(w)(\fq)$, as we show next.

\begin{lem}
The evaluation of a polynomial of degree $d$ in $\fq[x_0,\dots,x_m]^x_d$ at an $\fq$-point $Q\in \PP(w)(\fq)$ does not depend on the choice of representative if $\gcd(w_i, i \in \supp(Q))(q-1)$ divides $d$. In particular, if $\lcm(w)(q-1)$ divides $d$, the evaluation of a polynomial at any $\fq$-point does not depend on the choice of representative.
\end{lem}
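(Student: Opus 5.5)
The plan is to compare the values of a weighted homogeneous polynomial $f$ of degree $d$ at two representatives of the same $\fq$-point, using the explicit description of representatives from Lemma \ref{l:lambdaQ}. Fix an $\fq$-point with a representative $Q=Q^{(1)}$ and set $g_Q:=\gcd(w_i: i\in \supp(Q))$. By Lemma \ref{l:lambdaQ}, every other $\fq$-representative has the form $\lambda^i\cdot Q$ with $0\le i\le q-2$, where $\lambda\in\overline{\fq}$ satisfies $\lambda^{g_Q}=\xi$. As in the proof of Lemma \ref{l:change_of_representatives}, weighted homogeneity gives $f(\lambda^i\cdot Q)=\lambda^{id}f(Q)$. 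It therefore suffices to show that $\lambda^{d}=1$.

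One subtlety needs care: $\lambda$ may lie outside $\fq$, so $\lambda^{id}$ need not lie in $\fq$ a priori. However, only the monomials $x^\alpha$ with $\supp(\alpha)\subseteq\supp(Q)$ contribute to $f(Q)$. I would therefore argue monomial by monomial, writing $x^\alpha(\lambda^i\cdot Q)=\lambda^{i\sum_j \alpha_j w_j}x^\alpha(Q)=\lambda^{id}x^\alpha(Q)$. Since $g_Q(q-1)\mid d$, write $d=k\,g_Q(q-1)$. Then
\[
\lambda^{d}=\bigl(\lambda^{g_Q}\bigr)^{k(q-1)}=\xi^{k(q-1)}=1,
\]
so $f(\lambda^i\cdot Q)=f(Q)$ for every $i$. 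Hence the value is independent of the chosen representative.

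For the ``in particular'' statement, note that $g_Q$ divides every $w_i$ with $i\in\supp(Q)$, hence divides $\lcm(w)$. So $\lcm(w)(q-1)\mid d$ implies $g_Q(q-1)\mid d$ for every point $Q$, and the first part applies to all $\fq$-points at once.

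I do not expect any real obstacle here. The only point needing care is the bookkeeping justifying that $\lambda^{id}$ is the right scalar even when $\lambda\notin\fq$. This follows directly from the definition of the action and is the same identity already used in Lemma \ref{l:change_of_representatives}. One could also observe that $\lambda^{id}\in\fq^*$ automatically whenever $f(Q)\neq0$, but the computation $\lambda^d=1$ makes this observation unnecessary.
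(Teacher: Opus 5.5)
Your proof is correct and follows essentially the same route as the paper. Both use Lemma~\ref{l:lambdaQ} to write any other representative as a power of $\lambda$ applied to $Q$, where $\lambda^{g_Q}=\xi$. Weighted homogeneity then gives the factor $\lambda^{d}$, and $\lambda^d=(\lambda^{g_Q})^{d/g_Q}=1$ because $g_Q(q-1)\mid d$. The ``in particular'' part follows from $g_Q\mid\lcm(w)$ in both versions, and your monomial-by-monomial remark is harmless but not needed.
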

\begin{proof}
By Lemma~\ref{l:lambdaQ}, $P$ and $Q$ are the representatives of the same $\fq$-point if and only if $Q = \lambda \cdot P$ for some $\lambda \in \Bar{\fq}$ such that $\lambda^{\gcd(w_i, i \in \supp(Q))} \in \fq$, i.e., $\lambda^{\gcd(w_i, i \in \supp(Q))(q-1)}=1$.  Then for any $f \in S_d^w$, $f(Q)=\lambda^d f(Q)=f(P)$, since $\gcd(w_i, i \in \supp(Q))(q-1) \mid d$.

The last assertion follows from the fact that $\gcd(w_i, i \in I)$ divides $\lcm(w)$ for any possible support $I \subseteq \{1,\dots,n\}$.
\end{proof}

\begin{lem}\label{l:bad_monomials}
Let $d, d^\star > 0$ such that $d+d^\star\equiv 0\bmod \gcd(d,\lcm(w))(q-1)$. Let $x^\alpha \in \mathbb{M}_d\cdot \mathbb{M}_{d^\star}$. Then
\begin{enumerate}[label=(\roman*)]
    \item $x^\alpha(Q)$ does not depend on the choice of representatives, and
    \item $\sum_{Q\in \PP(w)(\fq)}x^\alpha(Q)\neq 0$ if and only if $x^\alpha\equiv x_0^{q-1}\cdots x_m^{q-1}\bmod \calI(\mathbb{A}^{m+1}(\fq))$ (i.e., $\alpha_i>0$ and $\alpha_i\equiv 0\bmod q-1$, for $0\leq i \leq m$).
\end{enumerate}
\end{lem}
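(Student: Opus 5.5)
The plan is to prove (i) by showing that, at any point where $x^\alpha$ does not vanish, the total weighted degree $D:=d+d^\star$ is a multiple of $\gcd(w_i: i\in \supp(Q))(q-1)$. Then (ii) follows by converting the projective sum into an affine sum and applying Remark~\ref{remafin}.

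For (i), write $x^\alpha=x^\beta x^\gamma$ with $x^\beta\in\mathbb{M}_d$ and $x^\gamma\in\mathbb{M}_{d^\star}$. Fix a representative $Q$ of an $\fq$-point and set $g_Q:=\gcd(w_i: i\in \supp(Q))$.
\begin{itemize}
\item If $x^\alpha(Q)=0$, then some variable in the support of $\alpha$ vanishes at $Q$. Since $\supp(Q)$ is the same for every representative (proof of Lemma~\ref{l:lambdaQ}), $x^\alpha$ vanishes at every representative, and there is nothing to prove.
\item Otherwise $\supp(\beta)\subset\supp(Q)$, so $d=\sum_{i\in\supp(Q)}\beta_iw_i$ is divisible by $g_Q$. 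Also $g_Q\mid w_i\mid \lcm(w)$ for any $i\in\supp(Q)$. Hence $g_Q\mid\gcd(d,\lcm(w))$.
\end{itemize}
In the second case the hypothesis gives $g_Q(q-1)\mid D$. By Lemma~\ref{l:lambdaQ}, every other representative has the form $\lambda^i\cdot Q$ with $\lambda^{g_Q}=\xi$, so $\lambda^{g_Q(q-1)}=1$. Therefore
\[
x^\alpha(\lambda^i\cdot Q)=\lambda^{iD}x^\alpha(Q)=x^\alpha(Q).
\]
The key point is to use the factorization through $\mathbb{M}_d$ to extract $g_Q\mid d$; the condition $g_Q\mid d^\star$ is not needed.

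For (ii), combine Lemma~\ref{l:lambdaQ} with \cite[Prop.~2.1]{aubryperretWPRM}: the fibres $\pi_w^{-1}([Q])$ have exactly $q-1$ elements with coordinates in $\fq$, and they partition $\fq^{m+1}\setminus\{0\}$. This is the counting argument already given after Remark~\ref{r:lambda_in_fq}. By (i), $x^\alpha$ is constant on each fibre. Hence
\[
(q-1)\sum_{Q\in\PP(w)(\fq)}x^\alpha(Q)=\sum_{P\in\fq^{m+1}\setminus\{0\}}x^\alpha(P)=\sum_{P\in\fq^{m+1}}x^\alpha(P),
\]
where the last equality uses $x^\alpha(0)=0$, valid because $D>0$. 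Since $q-1=-1$ in $\fq$, the projective sum equals minus the affine sum. By Remark~\ref{remafin} with $\ell=m+1$ (index shift $x_0,\dots,x_m$), the affine sum is nonzero exactly when every $\alpha_i>0$ and $q-1\mid\alpha_i$. Using \eqref{eq:vanishing_ideal_affine}, this is equivalent to $x^\alpha\equiv x_0^{q-1}\cdots x_m^{q-1}\bmod\calI(\mathbb{A}^{m+1}(\fq))$.

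The main obstacle is (i), specifically getting the divisibility by $g_Q(q-1)$ out of a hypothesis stated only in terms of $\gcd(d,\lcm(w))$. The factorization argument above handles it. Once (i) holds, (ii) is routine.
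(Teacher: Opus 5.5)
Your proof is correct and follows essentially the same route as the paper. You factor $x^\alpha$ through $\mathbb{M}_d$ to get $\gcd(w_i : i\in\supp(Q))\mid\gcd(d,\lcm(w))$, hence $\lambda^{d+d^\star}=1$, and then compare the projective sum with the affine sum over $\fq^{m+1}\setminus\{0\}$ via the $(q-1)$-element fibres and Remark~\ref{remafin}; your explicit handling of the origin and of $q-1=-1\neq 0$ in $\fq$ is slightly more careful than the paper's version, which also takes an unnecessary detour through the reduced monomial $\overline{x^\alpha}$.
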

\begin{proof}

Let $Q$ be an $\fq$-point, and let $\lambda \cdot Q$, for some $\lambda \in \overline{\fq}$, be another representative of the same $\fq$-point.  From Lemma \ref{l:lambdaQ}, we have $\lambda^{\gcd(w_i,i\in \supp(Q))(q-1)}=1$.

Take $x^\alpha=x^\gamma x^\beta$ with $x^\gamma\in \mathbb{M}_d$, and $x^\beta\in \mathbb{M}_{d^\star}$. Then $x^\alpha(\lambda\cdot Q)=\lambda^{d+d^\star}x^\alpha(Q)$, and we have $x^\alpha(Q)=0$ if and only if $x^{\alpha}(\lambda \cdot Q)=0$.  If $x^\alpha(Q)\neq 0$, then $x^\gamma(Q)\neq0$ and  $\Gamma\subset \supp(Q)$. 

Write $x^\gamma= \prod_{i\in \Gamma}x^{\gamma_i}$ for some $\Gamma \subseteq \set{1,\dots,n}$ such that $\gamma_i>0$ for all $i\in \Gamma$. This implies that $d=\sum_{i\in \Gamma}\gamma_i w_i$, hence $\gcd(w_i,i\in \supp(Q))\mid  \gcd(w_i,i\in \Gamma) \mid d$. As $\gcd(w_i,i\in \supp(Q)) \mid \lcm(w)$, we deduce that $\gcd(w_i,i\in \supp(Q))\mid \gcd(d,\lcm(w))$ and then $\gcd(w_i,i\in \supp(Q))(q-1) \mid d+d^\star$, which implies that $\lambda^{d+d^\star}=1$. Thus, the value $x^\alpha(Q)$  does not depend on the choice of representatives.

Now, to prove $(ii)$, let $\overline{x^\alpha}$ be the reduced monomial (modulo $\calI(\mathbb{A}^{m+1}(\fq))$) such that $x^\alpha\equiv \overline{x^\alpha} \bmod \calI(\mathbb{A}^{m+1}(\fq)) $. Then 
$$
\sum_{Q\in \mathbb{A}^{m+1}\setminus \{0\}}x^\alpha(Q)=\sum_{Q\in \mathbb{A}^{m+1}\setminus \{0\}}\overline{x^\alpha}(Q)=(q-1)\sum_{Q\in \PP(w)(\fq)}\overline{x^\alpha}(Q)=(q-1)\sum_{Q\in \PP(w)(\fq)}x^\alpha(Q),
$$
where we have used that $\lambda^{d+d^\star}=1$. We finish the proof by considering Remark \ref{remafin}.
\end{proof}
\begin{rem}\label{r:bad_monomials}
If $\gcd(w_i,q-1)=1$, for $0\leq i \leq m$, and $d+d^\star\equiv 0 \bmod q-1$, the conclusion of the previous result also holds by Remarks \ref{r:lambda_in_fq} and \ref{remafin}.
\end{rem}

\begin{defn}
In what follows, if it exists, take $d^\star$ the smallest integer such that 
\begin{enumerate}
    \item $\WPRM_{d^\star}(w)=\fq^{p_m}$ (this means in particular that $\lcm(w) \mid d^\star$ by Lemma~\ref{l:degen}), \label{cond:1}
    \item $d+d^\star\equiv 0\bmod \gcd(d,\lcm(w))(q-1)$. \label{cond:2}
\end{enumerate}
Condition (\ref{cond:1}) ensures that $\WPRM_d(w)^\perp\subset \ev\left( \Mon_{d^\star} \right)$, and Condition (\ref{cond:2}) guarantees that the results do not depend on the choice of representatives. 
\end{defn}

\begin{rem}
If $d^\star $ exists, then $\gcd(q-1,\lcm(w))$ divides both $d+d^\star$ and $d^\star$, which implies that $\gcd(q-1,\lcm(w))\mid d$. Thus, if $\gcd(q-1,\lcm(w))\nmid d$, such a $d^\star$ cannot exist.
\end{rem}

Set $B(d,d^\star):=\set{x_0^{c_0}\cdots x_m^{c_m}\in \Mon_{d}\cdot \Mon_{d^\star} : \forall i \in \set{0,\dots,m},  c_i >0 \text{ and } q-1 \mid c_i }$. 

\begin{rem}\label{r:Bdd_non_empty}
By Lemma \ref{l:bad_monomials}, the set $B(d,d^\star)$ is non-empty, since otherwise all the evaluations of the monomials of degree $d^\star$ would be orthogonal to $\WPRM_d(w)$, and we would have $\fq^{p_m}=\WPRM_{d^\star}(w)\subset \WPRM_d^\perp(w)$, a contradiction. Moreover, given $x^\alpha\in \Mon_{d}$, $x^\beta \in \Mon_{d^\star}$, we have $\ev(x^\alpha)\cdot \ev(x^\beta)\neq 0$ if and only if $x^{\alpha+\beta}\in B(d,d^\star)$.  
\end{rem}

\begin{prop}\label{p:sub_dual}
Let $d>0$ and $d^\star$ as above. Then
  \[\Span \left(\ev \left(\Mon_{d^\star} \setminus \bigcup_{x^c \in B(d,d^\star)} \set{\overline{x^{c-a}}: x^a \in \Mon_{d},\; x^a \text{ divides } x^c}\right)\right) \subseteq \WPRM_d(w)^\perp. \]
\end{prop}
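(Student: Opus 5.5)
Since both sides are linear spans, it is enough to show that $\ev(x^b)$ is orthogonal to $\WPRM_d(w)$ for every monomial $x^b\in\Mon_{d^\star}$ that is not excluded. Moreover $\WPRM_d(w)=\Span\{\ev(x^a):x^a\in\Mon_d\}$, because $\Mon_d$ spans $\fq[x_0,\dots,x_m]^w_d$ modulo $\calI(\PP(w)(\fq))$ by \cite[Thm. 15.3]{eisenbud}. So the proposition reduces to the following claim: if $x^b\in\Mon_{d^\star}$ and $x^a\in\Mon_d$ satisfy $\ps{\ev(x^a)}{\ev(x^b)}\neq 0$, then $x^b=\overline{x^{c-a}}$ for some $x^c\in B(d,d^\star)$ with $x^a\mid x^c$.

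To prove the claim, we first note that $\ps{\ev(x^a)}{\ev(x^b)}=\sum_{Q\in\calP_w}x^{a+b}(Q)$. Condition (\ref{cond:2}) on $d^\star$ lets us apply Lemma~\ref{l:bad_monomials} to $x^{a+b}\in\M_d\cdot\M_{d^\star}$. Part (i) says this sum does not depend on the chosen representatives. Part (ii) says the sum is nonzero exactly when every exponent $a_i+b_i$ is positive and divisible by $q-1$. This is Remark~\ref{r:Bdd_non_empty}: the pairing is nonzero if and only if $x^c:=x^{a+b}$ lies in $B(d,d^\star)$, and this $x^c$ belongs to $\Mon_d\cdot\Mon_{d^\star}$ by construction. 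Clearly $x^a$ divides $x^c=x^ax^b$, and $x^{c-a}=x^b$.

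It remains to check that $x^b=\overline{x^{c-a}}$. Here $x^{c-a}=x^b$ already lies in $\Mon_{d^\star}$, so it is a standard monomial: no initial term of a homogeneous element of $\calI(\PP(w)(\fq))$ divides it. Its normal form modulo the ideal is therefore itself, and $\overline{x^{c-a}}=x^b$. Hence $x^b$ lies in the excluded set $\bigcup_{x^c\in B(d,d^\star)}\set{\overline{x^{c-a}}: x^a\in\Mon_d,\ x^a\mid x^c}$. Taking the contrapositive, every non-excluded $x^b$ is orthogonal to all $\ev(x^a)$ with $x^a\in\Mon_d$, hence to $\WPRM_d(w)$, and the inclusion follows.

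The main obstacle is interpreting the bar notation $\overline{x^{c-a}}$. If it means reduction modulo $\calI(\PP(w)(\fq))$, i.e.\ the normal form with respect to $\Mon$, the step above is immediate, as shown. If it instead means reduction modulo $\calI(\mathbb{A}^{m+1}(\fq))$, as in the proof of Lemma~\ref{l:bad_monomials}, I would first handle the case where $x^b$ itself is reduced. Otherwise I would use Theorem~\ref{t:binomials_vanishing_ideal}: a monomial of $\Mon_{d^\star}$ differs from any other degree-$d^\star$ monomial with the same support and exponents congruent modulo $q-1$ by a binomial in $\calI(\PP(w)(\fq))$. Both monomials then give the same evaluation, so the excluded set still captures $\ev(x^b)$. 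Whichever interpretation holds, what needs care is ensuring that the monomial being excluded has the same evaluation vector as $x^b$.
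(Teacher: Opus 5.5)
Your argument is correct and matches the paper's proof, which simply invokes Lemma~\ref{l:bad_monomials} and Remark~\ref{r:Bdd_non_empty}: a nonzero pairing between $x^a\in\Mon_d$ and $x^b\in\Mon_{d^\star}$ forces $x^{a+b}\in B(d,d^\star)$, so $x^b=\overline{x^{(a+b)-a}}$ lies in the excluded set. Your closing hedge is unnecessary, because the bar here denotes the normal form modulo $\calI(\PP(w)(\fq))$, as in the proof of Theorem~\ref{th:1_bad_monomial=>dual} where $\overline{x^{c-b}}\in\Mon_{d^\star}$; your first reading is therefore the right one.
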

\begin{proof}
The result follows from Lemma \ref{l:bad_monomials} and Remark \ref{r:Bdd_non_empty}. 
\end{proof}

\begin{thm}\label{th:1_bad_monomial=>dual}
Let $d>0$ be such that $B(d,d^\star)=\{x^c\}$ (i.e., $\abs{B(d,d^\star)}=1$), where $d^\star$ is as above. Then
\begin{enumerate}[label=(\roman*)]
    \item every $x^a \in \Mon_{d}$ divides $x^c$ and
    \item $\WPRM_d(w)^\perp = \Span \left(\ev \left(\Mon_{d^\star} \setminus\set{\overline{x^{c-a}}: x^a \in \Mon_{d}}\right)\right)$.
\end{enumerate}
\end{thm}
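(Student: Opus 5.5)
We argue in three stages: (i) from the nondegeneracy pairing, then an inclusion via Proposition \ref{p:sub_dual}, then equality by a dimension count.

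\textbf{Proof of (i).} Fix $x^a\in\Mon_d$. Since $\ev(\Mon_d)$ is a basis of $\WPRM_d(w)$, the vector $\ev(x^a)$ is nonzero. By Condition (\ref{cond:1}) we have $\WPRM_{d^\star}(w)=\fq^{p_m}$, so $\ev(\Mon_{d^\star})$ spans $\fq^{p_m}$. As the Euclidean form is nondegenerate, some $x^b\in\Mon_{d^\star}$ satisfies $\ps{\ev(x^a)}{\ev(x^b)}\neq 0$. By Remark \ref{r:Bdd_non_empty}, this forces $x^{a+b}\in B(d,d^\star)=\{x^c\}$. Hence $x^a$ divides $x^c$, and moreover $x^b=x^{c-a}$ exactly, with no reduction needed. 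In particular $x^{c-a}\in\Mon_{d^\star}$, so $\overline{x^{c-a}}=x^{c-a}$.

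\textbf{Inclusion $\supseteq$ in (ii).} Let $S:=\Mon_{d^\star}\setminus\set{\overline{x^{c-a}}: x^a \in \Mon_{d}}$. Proposition \ref{p:sub_dual} with $B(d,d^\star)=\{x^c\}$, combined with (i) (every $x^a\in\Mon_d$ divides $x^c$), gives $\Span\ev(S)\subseteq\WPRM_d(w)^\perp$.

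\textbf{Equality by dimension.} Since $\ev(\Mon_{d^\star})$ spans $\fq^{p_m}$ and, by the standard monomial property (\cite[Thm. 15.3]{eisenbud} with the vanishing ideal), is a basis of $\fq^{p_m}$, we have $|\Mon_{d^\star}|=p_m$. The subset $\ev(S)$ is therefore linearly independent, so $\dim\Span\ev(S)=p_m-|\{x^{c-a}:x^a\in\Mon_d\}|$. The map $x^a\mapsto x^{c-a}$ is injective, so $|\{x^{c-a}:x^a\in\Mon_d\}|=|\Mon_d|=\dim\WPRM_d(w)$, again because $\Mon_d$ gives a basis of $\WPRM_d(w)$. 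Hence $\dim\Span\ev(S)=p_m-\dim\WPRM_d(w)=\dim\WPRM_d(w)^\perp$, and the inclusion is an equality.

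The step needing the most care is identifying $\overline{x^{c-a}}$ with $x^{c-a}$ and verifying that it lies in $\Mon_{d^\star}$. Both are settled by the nondegeneracy argument in (i), which shows that the pairing partner $x^b\in\Mon_{d^\star}$ equals $x^{c-a}$ on the nose. The remaining ingredients are the standing facts that evaluations of standard monomials of a given degree are linearly independent modulo $\calI(\PP(w)(\fq))$, and that under Condition (\ref{cond:2}) the products with nonzero pairing are insensitive to the choice of representatives (Lemma \ref{l:bad_monomials}).
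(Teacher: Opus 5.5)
Your proof is correct, but your argument for (i) takes a genuinely different route from the paper's. The paper argues by contradiction. If some $x^a\in\Mon_d$ does not divide $x^c$, it raises each deficient exponent $c_i-a_i<0$ by a multiple $\mu_i(q-1)$, with the $\mu_i$ chosen by a numerical-semigroup argument, so that the cofactor $x^b$ has degree $\tilde d=d^\star+\gamma\lcm(w)(q-1)$. It then uses $\WPRM_{\tilde d}(w)=\fq^{p_m}$ to identify $x^b$, modulo $\calI(\PP(w)(\fq))$, with some $x^{\overline{b}}\in\Mon_{d^\star}$, and checks that $x^{a+\overline{b}}$ is a second element of $B(d,d^\star)$.

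You replace all of this with a duality argument. Since $\ev(x^a)\neq 0$ and $\ev(\Mon_{d^\star})$ spans $\fq^{p_m}$, some $x^b\in\Mon_{d^\star}$ pairs nontrivially with $\ev(x^a)$, and Remark \ref{r:Bdd_non_empty} then forces $x^{a+b}=x^c$. This avoids the semigroup construction and the detour through the auxiliary degree $\tilde d$, where monomials of different degrees have to be matched. It also gives you the stronger fact $x^{c-a}\in\Mon_{d^\star}$, i.e. $\overline{x^{c-a}}=x^{c-a}$, which makes the injectivity of $x^a\mapsto\overline{x^{c-a}}$ in (ii) immediate. The paper instead proves that injectivity separately: it multiplies a putative coincidence $\overline{x^{c-a}}\equiv\overline{x^{c-b}}$ by $x^a$ and uses Theorem \ref{t:binomials_vanishing_ideal} to land back in $B(d,d^\star)$. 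The inclusion via Proposition \ref{p:sub_dual} and the final dimension count are the same in both proofs.

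What the paper's constructive route gives in exchange is structural information. Each $x^a\in\Mon_d$ can be completed by a monomial of $\Mon_{d^\star}$ to one congruent to any prescribed element of $B(d,d^\star)$, whereas your argument only shows that each $x^a$ divides some element of $B(d,d^\star)$. This extra information is not needed for the theorem, but it is informative when $B(d,d^\star)$ has more than one element.
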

\begin{proof}
Let us prove $(i)$ by contradiction. We assume that there exists a monomial $x^a \in \Mon_{d}$ such that $x^a$ does not divide $x^c$, and we will exhibit another monomial $x^{c'} \in B(d,d^\star)$.

By our assumption, the set $I:= \set{ i \in \set{0,\dots,m} : c_i < a_i}$ is non-empty. For every $i \in I$, we write $c_i-a_i=-\lambda_i (q-1)+r_i$ with $\lambda_i \geq 1$ and $1 \leq r_i \leq q-1$. Since $\gcd(w)=1$ and there are only finitely many gaps in a numerical semigroup, there exists a positive integer $\gamma$ such that $\gamma \lcm(w) - \sum_{i \in I} \lambda_i w_i$ lies in the semigroup generated by the weights $w_i$ for $i \in I$. In other words, there exists $\mu_i \geq \lambda_i\geq 1$ such that $\gamma \lcm(w)=\sum_{i \in I} \mu_i w_i$. Now, setting $b=(b_0,\dots,b_m)$ with
\begin{equation}\label{eq:def_b}
    b_i=\begin{cases}
    c_i-a_i+\mu_i(q-1) & \text{if } i \in I,\\
    c_i-a_i &\text{otherwise,}
    \end{cases}
\end{equation}
we get $x^{a+b}=x^c \prod_{i \in I} x_i^{(q-1)\mu_i} \equiv x^c \bmod \calI(\PP(w)(\fq))$. Since $\deg(x^a)=d$ and $\deg(x^c)=d+d^\star$, then $\deg(x^{b})=\deg(x^c)-\deg(x^a)+\deg\left( \prod_{i \in I} x_i^{(q-1)\mu_i}\right)=d^\star+\gamma\lcm(w)(q-1)$. Then $\tilde{d}:=\deg(x^b)$ also satisfies $\WPRM_{\tilde{d}}(w)=\fq^{p_m}$ (see, e.g., the proof of \cite[Lem. 2.7]{nardi_sanjose_conjecture}) and there is a bijection between $\Mon_{d^\star}$ and $\Mon_{\tilde{d}}$ modulo $\calI(\PP(w)(\fq))$. Thus, there exists a monomial $x^{\overline{b}} \in \Mon_{d^\star}$ such that $x^{b}\equiv x^{\overline{b}} \bmod \calI(\PP(w)(\fq))$. Then the monomial $x^{a+\overline{b}}$ is different from $x^c$ (otherwise, $x^{c-a}$ would have been a proper monomial) and it lies in $B(d,d^\star)$, which raises a contradiction.

\medskip

Now, let us prove (ii). Let $
\mathcal{T}:=\set{\overline{x^{c-a}}: x^a \in \Mon_{d}}.
$
By Item (ii) and Proposition \ref{p:sub_dual}, we have 
\[\WPRM_d(w)^\perp \supset \Span \left(\ev \left(\Mon_{d^\star} \setminus\mathcal{T}\right)\right). \]
Recall that $\WPRM_{d^\star}(w)=\fq^{p_m}$. Thus, it is enough to show $\abs{\mathcal{T}}=\dim \WPRM_d(w)=\abs{\Mon_d}$, since in that case $\dim\left(\ev\left(\Mon_{d^\star}\setminus \mathcal{T}\right)\right)=\dim \WPRM_d(w)^\perp$. Assume that we had $\overline{x^{c-a}}\equiv \overline{x^{c-b}}\bmod \calI(\PP(w)(\fq))$, with $x^a,x^b\in \Mon_d$. If we multiply by $x^a$, since $\overline{x^{c-a}}x^a\equiv x^c\bmod \calI(\PP(w)(\fq))$, we get $x^c\equiv \overline{x^{c-b}}x^a\bmod \calI(\PP(w)(\fq))$. By Theorem \ref{t:binomials_vanishing_ideal}, and Lemma \ref{l:bad_monomials}, we have $\overline{x^{c-b}}x^a\in B(d,d^\star)$ (note $\overline{x^{c-b}}\in \Mon_{d^\star}, x^a\in \Mon_d$). By assumption, we get $\overline{x^{c-b}}x^a=x^c$, i.e., $\overline{x^{c-b}}=x^{c-a}$. Similarly, $\overline{x^{c-a}}=x^{c-b}$. Then both $x^{c-a},x^{c-b}$ are reduced, and $x^{c-a}=x^{c-b}$, which implies $x^a=x^b$.
\end{proof}

\begin{rem}
If, instead of Equation \eqref{eq:def_b}, we had set $b'=(b_0',\dots,b_m')$ with
\[b'_i=\begin{cases}
    c_i-a_i+\lambda_i(q-1) & \text{if } i \in I,\\
    c_i-a_i &\text{otherwise,}
\end{cases}\]
we would also have obtained $x^{a+b'}=x^c \prod_{i \in I} x_i^{(q-1)\lambda_i} \equiv x^c \bmod \calI(\PP(w)(\fq))$ but the monomial $x^{b'}$ would have degree $d^\star+(q-1) \sum_{i \in I} \lambda_i w_i$. If $\lcm(w) \nmid \sum_{i \in I} \lambda_i w_i$, we cannot ensure the existence of a \emph{monomial} in $\Mon_{d^\star}$ that is equal to $x^{b'}$ modulo $\calI(\PP(w)(\fq))$. 
\end{rem}

Theorem~\ref{th:1_bad_monomial=>dual} states that if $B(d,d^\star)$ consists in only one monomial, then the dual of $\WPRM_d(w)$ is monomial of degree $d$. However, this condition is only necessary, as illustrated by $\PRM$ codes (i.e., $w=(1,\dots,1))$. For more general weights, with the extra condition $d<\min(w)(q-1)$, we can be more precise about $B(d,d^\star)$ and $\WPRM_d^\perp(w)$.

\begin{cor}\label{cor:only_1_bad_monomial}
Let $0<d<\min(w)(q-1)$ and $d^*$ as above. Let $x^c$ be the only monomial in $B(d,d^\star)\cap \overline{\mathbb{M}}_{d+d^\star}$. Then $B(d,d^\star)=\{x^c\}$ and
\[\WPRM_d(w)^\perp = \Span \left(\ev \left(\Mon_{d^\star} \setminus\set{\overline{x^{c-a}}: x^a \in \M_{d}}\right)\right). \]
\end{cor}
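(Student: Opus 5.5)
We will reduce the statement to Theorem~\ref{th:1_bad_monomial=>dual}. Two things must be shown: that $B(d,d^\star)=\{x^c\}$, and that the set $\set{\overline{x^{c-a}}: x^a\in\M_d}$ coincides with $\set{\overline{x^{c-a}}: x^a\in\Mon_d}$. Here $\overline{\mathbb{M}}_{d+d^\star}$ denotes the monomials of degree $d+d^\star$ that are reduced modulo $\calI(\mathbb{A}^{m+1}(\fq))$, i.e.\ with all exponents at most $q-1$.

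\textbf{Step 1: $\Mon_d=\M_d$.} Since $d<\min(w)(q-1)$, every monomial $x^a$ of degree $d$ has $a_i w_i\le d<w_i(q-1)$, so $a_i<q-1$ for all $i$. Suppose a homogeneous binomial $x^a-x^{a'}$ lies in $\calI(\PP(w)(\fq))$. By Theorem~\ref{t:binomials_vanishing_ideal} it has equal support and $a_i\equiv a_i'\bmod q-1$. All nonzero exponents lie in $[1,q-2]$, so $a=a'$. Because the ideal is binomial, its degree-$d$ part is spanned by such binomials, hence it is $\{0\}$. Therefore no monomial of degree $d$ is a leading term, and $\Mon_d=\M_d$. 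This also justifies replacing $\Mon_d$ by $\M_d$ in the final formula.

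\textbf{Step 2: every element of $B(d,d^\star)$ is reduced.} Let $x^{c'}=x^a x^b\in B(d,d^\star)$ with $x^a\in\Mon_d$ and $x^b\in\Mon_{d^\star}$. We must show $c'_i=q-1$ for all $i$, which makes $x^{c'}$ equal to $x_0^{q-1}\cdots x_m^{q-1}$ up to the reduction just described. Suppose instead $c'_i\ge 2(q-1)$ for some $i$. Since $a_i<q-1$, this forces $b_i>q-1$. Then replacing $x_i^{b_i}$ by $x_i^{b_i-(q-1)}$ gives a monomial congruent to $x^b$ modulo $\calI(\mathbb{A}^{m+1})$, but of a different degree. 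So this replacement alone does not witness that $x^b$ is non-standard.

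I expect this to be the main obstacle: proving that an element of $B(d,d^\star)$ with some exponent $\ge 2(q-1)$ cannot arise from a standard $x^b$. My plan is to take such an $x^{c'}$ and build a degree-preserving binomial in $\calI(\PP(w)(\fq))$ whose leading term divides $x^b$. The idea is to trade $x_i^{q-1}$ against a suitable monomial in the other variables, using the numerical-semigroup argument from the proof of part (i) of Theorem~\ref{th:1_bad_monomial=>dual}. For this, the minimality of $d^\star$ should be what rules out the alternative. If that fails, the fallback is to use the hypothesis as stated, taking $x^c$ to be the unique reduced element. Every element of $B(d,d^\star)$ reduces modulo $\calI(\mathbb{A}^{m+1})$ to $x_0^{q-1}\cdots x_m^{q-1}$. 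I would then show that two distinct elements $x^ax^b$ and $x^{a'}x^{b'}$ would give, via the bijection between $\Mon_{d^\star}$ and the quotient used in Theorem~\ref{th:1_bad_monomial=>dual}, a second element of $B(d,d^\star)\cap\overline{\mathbb{M}}_{d+d^\star}$. That contradicts uniqueness.

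\textbf{Step 3: conclusion.} Once $B(d,d^\star)=\{x^c\}$, part (ii) of Theorem~\ref{th:1_bad_monomial=>dual} applies directly. Combined with $\Mon_d=\M_d$ from Step~1, it yields the stated description of $\WPRM_d(w)^\perp$.
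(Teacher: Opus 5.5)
There is a genuine gap, and it starts with a misreading of the hypothesis. In the paper, $\overline{\mathbb{M}}_{d+d^\star}$ is $\Mon_{d+d^\star}$. This is the set of \emph{standard} monomials of degree $d+d^\star$ modulo the projective ideal $\calI(\PP(w)(\fq))$, i.e.\ those not divisible by any leading term for the degree lexicographic order. It is not the set of monomials with all exponents at most $q-1$.

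With your reading, Step~2 sets out to prove something false. Every element of $B(d,d^\star)$ has weighted degree $d+d^\star$, and this generally differs from $(q-1)\sum_i w_i$, so exponents $c'_i\ge 2(q-1)$ genuinely occur. Take $q=4$, $w=(1,2)$, $d=2<\min(w)(q-1)=3$. Then $d^\star=10$, and the only degree-$12$ monomial with full support and exponents divisible by $3$ is $x_0^6x_1^3$. Since $B(2,10)$ is nonempty by Remark~\ref{r:Bdd_non_empty}, $B(2,10)=\{x_0^6x_1^3\}$. Under your reading there are no degree-$12$ monomials with exponents at most $3$, so the hypothesis could never hold, yet the corollary applies here with $x^c=x_0^6x_1^3$. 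The congruence modulo $\calI(\A^{m+1}(\fq))$ to $x_0^{q-1}\cdots x_m^{q-1}$ used in your fallback does not preserve degree, so it cannot detect non-standardness in $\Mon_{d^\star}$.

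Your fallback also lacks the argument that is actually needed. That argument is short and uses the standardness of $x^c$ to orient a binomial. Suppose $x^{c'}\in B(d,d^\star)$ with $c'\neq c$, and let $x^\beta\in\M_d$ be any monomial of degree $d$.
\begin{itemize}
\item Since $\beta_i<q-1$ while $c_i,c'_i$ are positive multiples of $q-1$, the vectors $c-\beta$ and $c'-\beta$ have full support, are congruent modulo $q-1$, and have degree $d^\star$.
\item By Theorem~\ref{t:binomials_vanishing_ideal}, both $x^c-x^{c'}$ and $x^{c-\beta}-x^{c'-\beta}$ lie in $\calI(\PP(w)(\fq))$.
\item As $x^c$ is standard, $x^c<x^{c'}$, hence $x^{c-\beta}<x^{c'-\beta}$. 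So $x^{c'-\beta}$ is a leading term, and $x^{c'-\beta}\notin\Mon_{d^\star}$.
\item This holds for every $x^\beta\in\M_d=\Mon_d$. So no factorization $x^{c'}=x^\beta x^b$ with $x^b\in\Mon_{d^\star}$ exists, contradicting $x^{c'}\in\Mon_d\cdot\Mon_{d^\star}$.
\end{itemize}
Your idea of ``a degree-preserving binomial whose leading term divides $x^b$'' points the right way. But the binomial is simply $x^{c-\beta}-x^{c'-\beta}$. What makes it work is the hypothesis on $x^c$, not the minimality of $d^\star$ or the semigroup trick from Theorem~\ref{th:1_bad_monomial=>dual}. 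Your Steps~1 and~3 are correct and agree with the paper.
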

\begin{proof}
By Remark \ref{r:Bdd_non_empty}, $B(d,d^\star)$ is not empty. Assume that $x^{c'}\in B(d,d^\star)$, with $c\neq c'$. Let $x^\beta \in \overline{\mathbb{M}}_d$. Since $d<\min(w)(q-1)$, we have $\beta_i<q-1$, for $0\leq i \leq m$. Then $x^{c-\beta}-x^{c'-\beta}\in \calI(\PP(w)(\fq))$ by Theorem \ref{t:binomials_vanishing_ideal}. Because $x^c$ is reduced, we have $x^c<x^{c'}$, and $x^{c-\beta}<x^{c'-\beta}$. This is true for any $x^\beta \in \overline{\mathbb{M}}_d$, and we reach a contradiction, since this implies $x^{c'-\beta}$ is not reduced for any $x^\beta \in \overline{\mathbb{M}}_d$, which would entail that $x^{c'}\not \in \Mon_{d}\cdot \Mon_{d^\star}$.
The result about the dual follows from Theorem~\ref{th:1_bad_monomial=>dual}, noticing that when $d<\min(w)(q-1)$, we have no polynomial of degree $d$ in $\calI(\PP(w)(\fq))$, hence $\Mon_d=\M_d$.
\end{proof}

The following example illustrates that the condition $d < \min(w)(q-1)$ of Corollary~\ref{cor:only_1_bad_monomial} to have $\size{B(d,d^\star)}=1$ is only necessary.

\begin{ex}
    Set $q=5$ and $w=(2,5,7)$. Then $\gcd(q-1,\lcm(w))=2$.

For $(d,d^\star)=\set{(2,350),(4,420),(6,210),(10,350),(12,420),(14,210)}$, we checked with $\textsc{Magma}$ \cite{magma} that $\size{B(d,d^\star)}=1$, so the dual of $\WPRM_d(2,5,7)$ is given by Theorem~\ref{th:1_bad_monomial=>dual}. 

For $d=8,\:16$, we take $d^\star=280$ and then $B(d,d^\star)=\set{x_0^{260}x_1^4x_2^4,\:x_0^8 x_1^{84} x_2^{20}}$. With $\textsc{Magma}$, we checked that that the dual codes $\WPRM_d(2,5,7)^\perp$ admits one extra generator outside the monomial part described in Proposition~\ref{p:sub_dual}, that is the evaluation of the binomial $x_0^{116}x_1^4x_2^4-x_1^{28}x_2^{20}$.

\end{ex}

\section{Schur products of WPRM codes}\label{s:schur}

In this section, we investigate the Schur product of two WPRM codes. We connect this question with a known problem about polytopes, and show a particular case in which the Schur product of two WPRM codes is also a WPRM code. 

Set $w=(w_0,\dots,w_m)\in \NN_{\geq 1}^{m+1}$. For any degrees $d_1, \: d_2$, we have
\[\WPRM_{d_1}(w) \star \WPRM_{d_2}(w) \subseteq  \WPRM_{d_1+d_2}(w).\]
In this section, we investigate the necessary and sufficient conditions to get equality. If $d_1+d_2 < \min(w)q$, this is equivalent to finding conditions so that
\begin{equation}\label{eq:pd_Sd}
     \M_{d_1}\cdot \M_{d_2} = \M_{d_1+d_2}
\end{equation} If $d_1+d_2$ is larger, then the previous condition is sufficient to get the equality for the associated codes. This property does not come for free, as illustrated by the next example.

\begin{ex}\label{ex:S1S1neS2}
For $w=(1,1,2)$, the property is not fulfilled for $d_1=d_2=1$, as $\M_1=\set{x_0,x_1}$ but $x_2 \in \M_2$.
\end{ex}

Leveraging the combinatorics underlying toric geometry, we can reformulate this question in terms of polytopes. A degree $d$ defines an $m$-dimensional simplex $P_d$ as follows (see \cite[\textsection 1.7]{rossi2011weighted} for details).
\begin{itemize}
    \item If $w_0=1$, $P_d$ is the rectangular simplex defined as the convex hull of the origin and the points $\frac{d}{w_i}e_i$ in $\R^m$, where $e_i$ denotes the points whose coordinates are all zeros, but the $i^{th}$ being one. In this case, the integral points of $P_d$ are in one-to-one correspondence with the monomials of degree $d$: a point $(a_1,\dots,a_m) \in P_d \cap \Z^m$ corresponds to the monomial $x_0^{a_0}x_1^{a_1}\cdots w_m^{a_m}$ where $a_0=d-\sum_{i=1}^m  w_i a_i$.
    \item If $w_0 \geq 2$, $P_d$ is the \emph{intersection} in $\R^{m+1}$ of the $(m+1)$-simplex whose vertices are the origin and the $\frac{d}{w_i}e_i$, with the hyperplane defined by $\sum_{i=0}^m x_i w_i = d$. In this case, the integral points of $P_d$ are precisely the exponents of monomials of degree $d$. It is also possible to define $P_d$ directly in $\R^m$, by computing its normal fan using the transition matrix to the Hermite Normal Form of the vector $(w_0,\dots,w_m)$. 
\end{itemize}
Then Equation \eqref{eq:pd_Sd} holds if and only if 
\begin{equation}\label{eq:Pd1Pd2}
(P_{d_1} \cap \Z^m) + (P_{d_2}\cap \Z^m) = P_{d_1+d_2} \cap \Z^m,     
\end{equation}
which matches the so-called integer decomposition property of polytopes (see \cite{haase2017_pair_polytopes}).

\begin{defn}
An $m$-dimensional polytope $P$ is said to have the \emph{integer decomposition property} (IDP) (or to be \emph{normal}), if for all $\ell \in \NN$ and all $z \in  (\ell P) \cap  \Z^m$, there exist $x_1,\dots,x_\ell \in P \cap \Z^m$ such that $z = x_1 + \dots + x_\ell$.

A pair of $m$-dimensional polytopes $(P,Q)$ is said to have the \emph{integer decomposition property} if $(P \cap \Z^m) + (Q\cap \Z^m) = (P+Q) \cap \Z^m$.     
\end{defn}

Independently of the value of $w_0$, the simplices $P_d$ are all scalar multiples of a same simplex whose vertices lies in $\frac{1}{\lcm(w)}\Z^m$. In particular, the simplex $P_d$ is integral (i.e., all its vertices are have integer coordinates) if and only if $\lcm(w)$ divides $d$.

Let us now focus on the case where $d_1$ and $d_2$ are divisible by $\lcm(w)$. Otherwise, Example \ref{ex:S1S1neS2} shows that the desired property is likely to fail. Let us set $\delta=\lcm(w)$ and write $d_i=\ell_i \delta$ for $i=1,2$. In this case, it is easy to check that if the simplex $P_\delta$ has the IDP, then $P_{d_i} \cap \Z^m=(\ell_i P_\delta) \cap \Z^m$ for $i=1,2$ and the equality in Equation \eqref{eq:Pd1Pd2} holds.

\begin{prop}\label{prop:schur}
    Set $w=(w_0,\dots,w_m)$ with $m \geq 1$ and $\delta=\lcm(w)$. If the lattice simplex $P_{\delta}$ had the IDP, then for any degrees $d_1, \: d_2$ divisible by $\lcm(w)$, we have
    \[\WPRM_{d_1}(w) \star \WPRM_{d_2}(w) =  \WPRM_{d_1+d_2}(w).\]
\end{prop}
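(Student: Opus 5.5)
The plan is to reduce the statement about codes to the monomial identity \eqref{eq:pd_Sd}, namely $\M_{d_1}\cdot\M_{d_2}=\M_{d_1+d_2}$, and then obtain that identity from the IDP of $P_\delta$. The inclusion $\WPRM_{d_1}(w)\star\WPRM_{d_2}(w)\subseteq\WPRM_{d_1+d_2}(w)$ was already noted above. For the reverse inclusion I will use that $\WPRM_{d_1+d_2}(w)$ is spanned by the evaluations $\ev(x^\gamma)$ with $x^\gamma\in\M_{d_1+d_2}$, and that $\ev(x^\alpha)\star\ev(x^\beta)=\ev(x^{\alpha+\beta})$ when all three evaluations use the same fixed set of representatives $\calP_w$. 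So it suffices to write every $x^\gamma\in\M_{d_1+d_2}$ as a product $x^\alpha x^\beta$ with $x^\alpha\in\M_{d_1}$ and $x^\beta\in\M_{d_2}$. This argument needs no hypothesis on $d_1+d_2$ relative to $q$, since only sufficiency is used.

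Next I translate to lattice points. Write $d_i=\ell_i\delta$. By the correspondence recalled above (exponent vectors of degree-$d$ monomials correspond to points of $P_d\cap\Z^m$), the desired identity becomes \eqref{eq:Pd1Pd2}. The map $d\mapsto P_d$ is linear in $d$: the simplices are all scalar multiples of one fixed simplex, so $P_{\ell\delta}=\ell P_\delta$. Hence I must show
\[
(\ell_1P_\delta\cap\Z^m)+(\ell_2P_\delta\cap\Z^m)=(\ell_1+\ell_2)P_\delta\cap\Z^m .
\]
The inclusion $\subseteq$ is immediate by convexity. For $\supseteq$, take $z\in(\ell_1+\ell_2)P_\delta\cap\Z^m$. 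The IDP of $P_\delta$ with $\ell=\ell_1+\ell_2$ gives $x_1,\dots,x_{\ell_1+\ell_2}\in P_\delta\cap\Z^m$ summing to $z$. I group them as $x=x_1+\dots+x_{\ell_1}$ and $y=x_{\ell_1+1}+\dots+x_{\ell_1+\ell_2}$. Then $x\in\ell_1P_\delta\cap\Z^m$ and $y\in\ell_2P_\delta\cap\Z^m$, again by convexity, which gives the decomposition.

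The step I expect to need the most care is the identification of lattice points when $w_0\geq 2$. There $P_d$ lives in the hyperplane $\sum w_ix_i=d$ of $\R^{m+1}$, and the relevant lattice is $\Z^{m+1}$ intersected with the hyperplanes. These hyperplanes are translates of one another rather than a single linear subspace, so I must check that "$\Z^m$" is used consistently. I plan to handle this by working directly in $\R^{m+1}$ with the cone over the standard simplex. Scaling by $\ell$ sends $\{x\geq0,\ \sum w_ix_i=\delta\}$ onto $\{x\geq 0,\ \sum w_ix_i=\ell\delta\}$, and the integer points of the latter are exactly the exponents of $\M_{\ell\delta}$. The IDP statement is then read with respect to the induced lattices, which is compatible with addition: sums of points in levels $\ell_1\delta$ and $\ell_2\delta$ land in level $(\ell_1+\ell_2)\delta$. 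With this setup, the grouping argument above goes through verbatim and the proposition follows.
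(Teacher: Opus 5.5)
Your proposal is correct and follows the paper's route. You reduce to $\M_{d_1}\cdot\M_{d_2}=\M_{d_1+d_2}$ (using only sufficiency), pass to lattice points of $\ell_i P_\delta$, and apply the IDP of $P_\delta$ with $\ell=\ell_1+\ell_2$, grouping the summands into blocks of $\ell_1$ and $\ell_2$; this spells out the ``easy to check'' step the paper leaves implicit, and working directly in $\R^{m+1}$ when $w_0\geq 2$ is a sound way to keep the lattices at different degrees compatible under addition.
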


Every one or two dimensional lattice polytope has the IDP \cite[Corollary~2.54]{bruns2009polytopes}. In dimension 3, some simplices do not satisfy the IDP (see Example~\ref{eq:Ogata}). Characterizing integral simplices with the IDP is an active research topic (e.g., see \cite{braun2018detecting} for reflexive simplices and \cite{adeyemo2025embeddingsweightedprojectivespaces} for the rectangular ones, i.e., corresponding to $\PP(w)$ with $w_0=1$).

However, for any $m$-dimensional integral polytope $P$, $\ell P$ has the IDP for every $\ell \geq m-1$ \cite[Proposition~1.1]{ogata2002generators}. Moreover, repeating weights does not impact the IDP of $P_\delta$ \cite[Proposition 3.1]{adeyemo2025embeddingsweightedprojectivespaces}. Gathering these two results, we get the following proposition.

\begin{prop}\label{prop:schur2}
    Set $w=(w_0,\dots,w_m)$ with $m \geq 1$. Let $s=\size{\set{w_0,\dots,w_m}}$ be the number of different weights. For any degrees $d_1, \: d_2$ divisible by $\max(1,s-2)\lcm(w)$, we have
    \[\WPRM_{d_1}(w) \star \WPRM_{d_2}(w) =  \WPRM_{d_1+d_2}(w).\]
\end{prop}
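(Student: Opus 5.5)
The plan is to reduce Proposition~\ref{prop:schur2} to Proposition~\ref{prop:schur} applied to a rescaled simplex, using the two cited facts: Ogata's result that $\ell P$ has the IDP for every lattice polytope $P$ of dimension $n$ and every $\ell\geq n-1$, and the fact from \cite{adeyemo2025embeddingsweightedprojectivespaces} that repeated weights do not affect the IDP of $P_\delta$.

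First we remove repeated weights. Let $\tilde w$ be the vector of distinct weights in $w$, of length $s$. It has the same $\lcm$ $\delta$, and $P_\delta(\tilde w)$ is a lattice simplex of dimension $s-1$. By \cite[Proposition 3.1]{adeyemo2025embeddingsweightedprojectivespaces}, to check the IDP of a dilate of $P_\delta(w)$ it suffices to check it for the corresponding dilate of $P_\delta(\tilde w)$. For the dilate I would apply that proposition to the weight vector and the degree $k\delta$, using $P_{k\delta}=kP_\delta$; alternatively, I would check that its proof is compatible with dilation.

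Second, we set $k=\max(1,s-2)$ and apply Ogata. Since $\dim P_\delta(\tilde w)=s-1$ and $k\geq (s-1)-1$, Ogata's result shows that $kP_\delta(\tilde w)=P_{k\delta}(\tilde w)$ has the IDP. When $s\leq 2$ the simplex has dimension at most $1$ and $k=1$; this case is covered directly, since every polytope of dimension one or two has the IDP. Transferring back by the first step, $P_{k\delta}(w)$ has the IDP.

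Third, we conclude by the argument of Proposition~\ref{prop:schur} with $\delta$ replaced by $k\delta$. Write $d_i=\ell_i k\delta$. The IDP of $P_{k\delta}$ gives
\[
(P_{d_1}\cap\Z^m)+(P_{d_2}\cap\Z^m)=P_{d_1+d_2}\cap\Z^m .
\]
Indeed, a lattice point of $(\ell_1+\ell_2)P_{k\delta}$ splits into $\ell_1+\ell_2$ lattice points of $P_{k\delta}$, and regrouping them gives lattice points of $\ell_1P_{k\delta}$ and $\ell_2P_{k\delta}$. Hence $\M_{d_1}\cdot\M_{d_2}=\M_{d_1+d_2}$. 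Evaluating, and using that $\WPRM_{d_1+d_2}(w)$ is spanned by $\ev(\M_{d_1+d_2})$, gives the equality of codes. The reverse inclusion always holds.

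The main obstacle is the first step: making sure that the repeated-weights reduction is stated for the right simplex and survives dilation by $k$. In particular, the dimension that enters Ogata's bound must be $s-1$ and not $m$. If the cited proposition only covers $P_\delta$ itself, I would reprove it directly. A lattice point of $\ell P_{k\delta}(w)$ projects, by summing the coordinates attached to equal weights, to a lattice point of $\ell P_{k\delta}(\tilde w)$. Decomposing that point and then redistributing each merged coordinate greedily among the repeated copies lifts the decomposition back to $w$.
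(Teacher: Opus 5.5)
Your proposal is correct and follows the paper's route. The paper obtains Proposition~\ref{prop:schur2} by combining two facts: the repeated-weights reduction of \cite[Proposition 3.1]{adeyemo2025embeddingsweightedprojectivespaces}, which makes the relevant simplex $(s-1)$-dimensional, and Ogata's result that $\ell P$ has the IDP for $\ell\geq \dim P-1$. It then runs the argument of Proposition~\ref{prop:schur} with $\delta$ replaced by $\max(1,s-2)\delta$. Your greedy redistribution of the merged exponents among the repeated copies is a valid self-contained justification of the reduction step for the dilate $P_{k\delta}$, a step the paper leaves to the citation.
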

With Proposition~\ref{prop:schur2} and $w=(1,\dots,1)$, we recover the well-known fact that 
\[\PRM_{d_1}(m) \star \PRM_{d_2}(m) =  \PRM_{d_1+d_2}(m)\]
for any degrees $d_1,\:d_2 \geq 0$.

\begin{ex}\label{eq:Ogata} The 3-dimensional rectangular simplex associated to $w=(1,6,10,15)$ does not have the IDP. This famous counterexample is due to Ogata \cite[p.522]{ogata2005k}. In the formalism of codes, it means that for $q$ large enough,
\[\WPRM_{30}(w) \star \WPRM_{30}(w) \neq  \WPRM_{60}(w)\]
because $x_0x_1^4x_2^2x_3$ has degree $60$ but it cannot be written as the product of two monomials of degree $30$. One of these monomials would contain $x_3$ and we would have to write $15=i+6j+10\ell$ with $i\in\set{0,1}$, which is impossible. It is the only monomial with such a behavior, as $\dim(\WPRM_{60}(w))=81=1+\dim\left(\WPRM_{30}(1,6,10,15) \star \WPRM_{30}(1,6,10,15) \right)$.
\end{ex}

\section*{Acknowledgments}
The first author is supported  by the French National Research Agency through ANR \textit{Barracuda} (ANR-21-CE39-0009) and the French government \textit{Investissements d’Avenir} program ANR-11-LABX-0020-01. The second author was supported in part by the Grant DMS-2401558 funded by the National Science Foundation, Grant PID2022-137283NB-C22 funded by MICIU/AEI/10.13039/501100011033 and by ERDF/EU, and by the Commonwealth Cyber Initiative.


\end{document}

%% file: macros.tex
\DeclareMathOperator{\supp}{supp} 
\DeclareMathOperator{\ini}{in} 
\DeclareMathOperator{\ev}{ev} 

\DeclareMathOperator{\PRS}{PRS}
\DeclareMathOperator{\Span}{Span}
\DeclareMathOperator{\PRM}{PRM}
\DeclareMathOperator{\RM}{RM}

\DeclareMathOperator{\wt}{wt}

\DeclareMathOperator{\RS}{RS}
\DeclareMathOperator{\WPRM}{WPRM}
\DeclareMathOperator{\WRM}{WRM}

\DeclareMathOperator{\WPRS}{WPRS}
\DeclareMathOperator{\den}{den} 
\DeclareMathOperator{\lcm}{lcm} 
\DeclareMathOperator{\Cone}{Cone}
\DeclareMathOperator{\ord}{ord}

\DeclareMathOperator{\hull}{Hull}

\newcommand{\F}{{\mathbb{F}}}
\newcommand{\fq}{\mathbb{F}_q}

\newcommand{\PP}{{\mathbb{P}}}
\newcommand{\PM}{{\mathbb{P}^{m}}}

\newcommand{\NN}{{\mathbb{N}}}
\newcommand{\Z}{{\mathbb{Z}}}
\newcommand{\A}{{\mathbb{A}}}

\newcommand{\calI}{{\mathcal{I}}}
\newcommand{\calP}{{\mathcal{P}}}

\newcommand{\set}[1]{\left\{#1\right\}}
\newcommand{\size}[1]{\left|#1\right|}
\newcommand{\gen}[1]{\left\langle #1 \right\rangle}
\newcommand{\Floorfrac}[2]{\left\lfloor \frac{#1}{#2} \right\rfloor}
\newcommand{\ps}[2]{\left\langle#1,#2\right\rangle}
\newcommand{\R}{{\mathbb{R}}}

\renewcommand{\bar}{\overline}
\newcommand{\Mon}{\bar{\mathbb{M}}}

\newcommand{\M}{\mathbb{M}}